\documentclass[opre]{informs4}

\IfFileExists{tgtermes.sty}{\RequirePackage{tgtermes}}{}
\IfFileExists{newtxtext.sty}{\RequirePackage{newtxtext}}{}
\IfFileExists{newtxmath.sty}{\RequirePackage{newtxmath}}{}
\RequirePackage{bm}
\RequirePackage{endnotes}

\OneAndAHalfSpacedXI

\newcommand{\CompactDisplaySpacing}{%
  \setlength{\abovedisplayskip}{5pt}%
  \setlength{\belowdisplayskip}{5pt}%
  \setlength{\abovedisplayshortskip}{0pt}%
  \setlength{\belowdisplayshortskip}{4pt}%
}

\usepackage{booktabs}
\usepackage{comment}
\usepackage{setspace}

\providecommand{\E}{\mathbb E}
\providecommand{\Op}{O_p}
\providecommand{\Hcal}{\mathcal H}
\providecommand{\norm}[1]{\left\lVert #1\right\rVert}

\usepackage{natbib}
\bibpunct[, ]{(}{)}{,}{a}{}{,}%

\def\bibfont{\SingleSpacedXI\small}
\usepackage[hidelinks]{hyperref}

\makeatletter
\def\theARTICLEABSTRACT{%
  \HOOKb
  \vspace*{18pt}%
  \noindent\begin{minipage}[t]{\textwidth}%
    \parindent1em\ABSfont
    \noindent\theABSTRACT\endgraf
    \vskip5pt
    \theKEYWORDS
    \noindent\hrulefill
  \end{minipage}%
  \vspace*{0pt}%
}
\makeatother

\EquationsNumberedThrough
\TheoremsNumberedThrough
\ECRepeatTheorems

\MANUSCRIPTNO{}

\def\theARTICLETOP{}
\def\theARTICLETOPLEFT{}
\def\theARTICLETOPRIGHT{}
\JOURNAL{}
\JOURNALshort{}
\RRHFirstLine{}
\LRHFirstLine{}
\RRHSecondLine{\bf\theRUNAUTHOR:\enskip {\it\theRUNTITLE}}
\LRHSecondLine{\bf\theRUNAUTHOR:\enskip {\it\theRUNTITLE}}
\ECRRHFirstLine{}
\ECLRHFirstLine{}
\ECRRHSecondLine{\bf\theRUNAUTHOR:\enskip {\it\theRUNTITLE}}
\ECLRHSecondLine{\bf\theRUNAUTHOR:\enskip {\it\theRUNTITLE}}
\RUNAUTHOR{Lin, Zhang and Hong}

\begin{document}

\RUNTITLE{Learn-Then-Differentiate Gradient Estimation}
\TITLE{Learn-Then-Differentiate Gradient Estimation}

\ARTICLEAUTHORS{%
\AUTHOR{Nifei Lin}
\AFF{Research Institute for Interdisciplinary Sciences, School of Information Management and Engineering, Shanghai University of Finance and Economics, Shanghai 200433, China}
\AUTHOR{Qingkai Zhang}
\AFF{Department of Decision Analytics and Operations, City University of Hong Kong, Hong Kong, China}
\AUTHOR{L. Jeff Hong}
\AFF{Department of Industrial and Systems Engineering, University of Minnesota, Minneapolis, Minnesota 55455}
}%

\ABSTRACT{%
Learn-then-differentiate (LTD) estimates gradients by fitting a model to simulation outputs and differentiating it. We develop a unified framework explaining what LTD differentiates and how accurately it estimates gradients. For models with a weighted representation, LTD differentiates a learned representation of the underlying probability measure. We then show how accuracy guarantees for fitted models translate into guarantees for gradients and higher-order derivatives, with rates approaching the standard Monte Carlo rate under suitable smoothness conditions. The framework recovers established results for kernel regression, local polynomial regression, and kernel ridge regression, and yields further guarantees for multiple kernel learning and smooth neural networks. These results provide a common foundation for understanding and analyzing LTD across learning methods.
}%

\KEYWORDS{sensitivity analysis; simulation; learned measure representation; convergence rate}

\maketitle

\newtoks\MainTextDisplayHook
\MainTextDisplayHook=\expandafter{\the\everydisplay}
\everydisplay=\expandafter{\the\everydisplay
  \CompactDisplaySpacing
  \setlength{\jot}{5pt}%
}

\section{Introduction}
\label{sec:introduction}

Gradient information plays a central role in simulation optimization, sensitivity analysis, uncertainty quantification, and financial risk management. In queueing and service systems, gradients quantify the sensitivity of waiting times, throughput, congestion, and service levels to arrival rates, service rates, staffing levels, and routing parameters. In inventory and supply-chain systems, derivatives with respect to order quantities, reorder levels, lead times, demand parameters, and pricing decisions support the optimization of operational policies. In financial engineering, sensitivities of option prices and risk measures to market parameters, commonly known as the Greeks, are indispensable for hedging and risk management. Similar applications arise in reliability, manufacturing, communication networks, healthcare operations, and other simulation-based decision problems \citep{hocao1991,glasserman2004,fu2006}.

Motivated by these applications, estimating gradients of expected system performance has been a fundamental problem in stochastic simulation for nearly five decades \citep{ho1979,hocao1983,reiman1989,glasserman1991,fuhu1997,liu2011,peng2018}. \citet{glasserman2004} provides a unifying perspective by viewing expected performance as the integral of a performance function with respect to a parameterized probability measure. The parameter may affect the performance function, the probability measure, or both. This representation allows simulation gradient estimators to be classified according to the mathematical object on which differentiation acts.

The first class differentiates the sample performance while keeping the probability measure fixed. Representative examples include finite-difference methods, infinitesimal perturbation analysis, pathwise differentiation, conditional Monte Carlo, and kernel estimators \citep{hocao1991,glasserman1991,fuhu1997,liu2011}. These methods use different analytical or statistical techniques to differentiate, or approximate the differentiation of, the sample performance under a fixed probability measure.

The second class differentiates the probability measure while leaving the performance function unchanged. Likelihood-ratio or score-function methods do so by differentiating the probability density or likelihood ratio \citep{reiman1989,glynn1990,rubinstein1993}. Weak-derivative and measure-valued differentiation methods instead represent the derivative of the probability measure as a difference of two measures \citep{pflug1996,heidergott2006}. Despite their different formulations, these methods share the principle of differentiating the probability measure.

The third class combines differentiation of the sample performance and the probability measure. Examples include generalized likelihood-ratio methods for discontinuous performance measures \citep{peng2018,peng2020} and Malliavin-calculus-based estimators developed for financial engineering \citep{fournie1999,chen2007}. Although these estimators have substantially different derivations, Glasserman's framework identifies their common use of both sources of parameter dependence. Together, the three classes organize a diverse literature around the mathematical object being differentiated.

Advances in machine learning have brought renewed attention to an alternative route to gradient estimation: learning an approximation of the expected-performance function from simulation data and then differentiating the learned surrogate. We refer to this strategy as \emph{learn-then-differentiate} (LTD). The surrogate may be constructed using higher-order kernel regression \citep{hansen2008}, local polynomial regression \citep{neumeyer2010,racine2016}, kernel ridge regression \citep{fischer2020,liu2023derivatives}, and other learning algorithms. Convergence rates for LTD estimators have also been studied, with existing analyses developed separately for specific learning methods or surrogate models \citep{hansen2008,neumeyer2010,fischer2020,liu2023derivatives}.

LTD offers several features that are attractive for simulation applications. First, it requires little structural information about the simulator: it does not require differentiable sample paths, as pathwise methods do, or analytical knowledge of the underlying probability model, as likelihood-ratio methods do. It therefore applies naturally to black-box simulators whose internal mechanisms are inaccessible or difficult to analyze. Second, LTD constructs a differentiable approximation over a region of the parameter space, allowing the same surrogate to provide gradient estimates at multiple parameter values. This feature is useful when evaluation points are not known in advance, as in simulation optimization. Third, LTD separates offline simulation and surrogate construction from online gradient evaluation. Once learned, the surrogate can be differentiated rapidly, supporting real-time decision making.

This separation connects LTD to the growing literature on simulation learning, which integrates statistical learning with stochastic simulation to support offline simulation and online decision making. \citet{hongjiang2019} describe this philosophy as the \emph{offline-simulation-online-application} (OSOA) paradigm. Representative applications include simulation metamodeling \citep{ankenman2010,barton2015}, online risk monitoring \citep{jiang2020}, contextual ranking and selection \citep{shen2021,du2024,keslin2025}, and contextual simulation optimization \citep{jin2026,lin2025}. LTD fits this paradigm by learning the expected-performance function offline and evaluating its derivatives online. Moreover, whereas statistical learning from observed data typically takes the design points as given, simulation learning allows the experimenter to choose design points and allocate simulation effort.

Despite these appealing features and the available results for individual methods, a unified understanding of LTD requires addressing two central questions. The first is \emph{What does LTD differentiate?} At the computational level, the answer is the learned surrogate. Within Glasserman's framework, however, its interpretation is less immediate: LTD directly differentiates neither the sample performance nor an analytically specified probability measure. The second is \emph{How statistically efficient is LTD?} In particular, how does the approximation accuracy of the learned surrogate translate into a convergence rate for derivative estimation? Answering these questions across learning algorithms requires a common mathematical structure connecting surrogate construction, differentiation, and statistical analysis.

This paper develops such a structure through a unified weighted framework for LTD gradient estimation. Our starting point is that a broad class of LTD estimators admits a common weighted representation. Within this class, learning algorithms differ in how they construct the weights while sharing the same mathematical structure for derivative estimation. Rather than developing a separate estimator and analysis for each surrogate model, the framework allows the corresponding LTD estimator and its statistical properties to be obtained systematically from the learning method and its approximation guarantees. Its scope extends beyond the representative methods studied in this paper to other learning algorithms and surrogate models satisfying the framework's conditions.

The main contributions are as follows.
\begin{enumerate}
\item \textit{A unified weighted representation.}
We establish a common weighted representation for a broad class of LTD estimators. This representation provides an algorithm-independent foundation for interpreting and analyzing surrogate-based gradient estimation.

\item \textit{An interpretation through measure differentiation.}
We answer ``{What does LTD differentiate?}" by showing that every weighted LTD estimator induces a learned measure representation constructed from simulation observations. Differentiating the learned surrogate is mathematically equivalent to differentiating this learned measure representation. LTD therefore belongs to the measure-differentiation class in Glasserman's framework, with the measure represented nonparametrically through statistical learning rather than analytically through an explicit probabilistic model.

\item \textit{A distinction from likelihood-ratio estimation.}
We show that LTD constructs the learned measure representation directly on the random objects of interest, avoiding the state-space enlargement that likelihood-ratio formulations may require. Our numerical experiments demonstrate that avoiding this enlargement can yield substantially lower variance than likelihood-ratio estimators.

\item \textit{A general convergence theory.}
We answer ``{How statistically efficient is LTD?}" by developing an interpolation principle that transfers convergence rates for surrogate learning to convergence rates for derivative estimation. The theory applies broadly to differentiable surrogate models satisfying the required conditions and shows how the smoothness of the expected-performance function and the approximation capability of the learning algorithm jointly determine statistical efficiency. In particular, by relating the smoothness of the expected-performance function to that of the underlying probability measure, we show that LTD estimators can achieve convergence rates arbitrarily close to the canonical $n^{-1/2}$ rate under sufficiently smooth probabilistic models.

\item \textit{Applications across learning methods.}
We illustrate the general convergence theory using five representative learning methods. For kernel regression, local polynomial regression, and kernel ridge regression, the framework recovers established derivative-rate results in the corresponding settings through a common argument. For multiple kernel learning and a particular type of smooth neural network method, the framework yields new convergence guarantees for gradients and higher-order derivatives, under the stated regularity conditions. Together, these applications demonstrate that the framework both unifies existing results and provides a systematic route to new derivative-estimation guarantees. 
\end{enumerate}

The remainder of the paper is organized as follows. Section~\ref{sec:foundations} reviews Glasserman's conceptual framework for simulation gradient estimation. Section~\ref{sec:ltd} formulates the general LTD paradigm, establishes the weighted representation for LTD estimators and introduces representative surrogate-learning methods. Section~\ref{sec:ltd-interpretation} proves the representation theorem, interprets LTD as differentiation of an induced measure, and examines its relationship to classical likelihood-ratio methods. Section~\ref{sec:ltd-rate} develops the general convergence theory and applies it to the representative learning methods. Section~\ref{sec:numerical} presents numerical experiments, and Section~\ref{sec:conclusion} concludes.

\section{Foundations}
\label{sec:foundations}

Consider a stochastic simulation model with parameter $\theta\in\mathbb{R}^d$. Let $\xi$ denote the random vector representing all sources of stochasticity in the simulation and let $F(\xi)$ denote the resulting simulation output, where the dependence on $\theta$ is suppressed for notational simplicity. The expected performance is
\[
J(\theta)=\mathbb{E}_P[F(\xi)],
\]
where the parameter $\theta$ may enter the performance function $F$, the probability measure $P$ of $\xi$, or both. Our objective is to estimate the gradient $\nabla J(\theta)$, or more generally the $r$th-order derivative $\nabla^{(r)}J(\theta)$ for a positive integer $r$, using information obtained from stochastic simulation experiments. Although the classical simulation gradient estimation literature typically considers a scalar parameter ($d=1$), since multivariate gradients can be obtained componentwise by holding the remaining coordinates fixed, we explicitly allow $\theta$ to be $d$-dimensional throughout the paper.

\subsection{Glasserman's Framework}
\label{subsec:glasserman}

Viewing the expectation as an integral provides a unified perspective on simulation gradient estimation. Since
\[
J(\theta)=\int F(\theta,\xi)\,dP_\theta(\xi),
\]
the parameter $\theta$ enters the expectation through two mathematical objects: the performance function $F$ and the probability measure $P_\theta$. Consequently, estimating derivatives of $J(\theta)$ amounts to differentiating one or both of these objects. \citet{glasserman2004} organized simulation gradient estimation methods according to this principle, leading to two broad classes of approaches: differentiating the performance function and differentiating the probability measure.

The first class differentiates the performance function while keeping the probability measure fixed. Assuming that the dependence on $\theta$ is entirely through the performance function and that differentiation and integration may be interchanged,
\[
\nabla J(\theta)
=
\nabla\int F(\theta,\xi)\,dP(\xi)
=
\int \nabla_\theta F(\theta,\xi)\,dP(\xi)
=
\mathbb{E}[\nabla_\theta F(\theta,\xi)].
\]
This immediately yields the familiar pathwise (or infinitesimal perturbation analysis) estimator
\[
\widehat{\nabla J}_{\rm PW}
=
\frac{1}{n}\sum_{i=1}^n\nabla_\theta F(\theta,\xi_i).
\]

The second class differentiates the probability measure while keeping the performance function fixed. Assuming that the dependence on $\theta$ is entirely through the probability measure,
\[
\nabla J(\theta)
=
\nabla\int F(\xi)\,dP_\theta(\xi)
=
\int F(\xi)\,d(\nabla P_\theta)(\xi).
\]
When $P_\theta$ admits a density $p_\theta$ with respect to a dominating measure and differentiation may again be interchanged with integration,
\[
\nabla J(\theta)
=
\int F(\xi)\nabla_\theta p_\theta(\xi)\,d\xi  
=
\int F(\xi)\frac{\nabla_\theta p_\theta(\xi)}{p_\theta(\xi)}
\,dP_\theta(\xi) 
=
\mathbb{E}\!\left[
F(\xi)\nabla_\theta\log p_\theta(\xi)
\right],
\]
which gives rise to the likelihood ratio (or score function) estimator
\[
\widehat{\nabla J}_{\rm LR}
=
\frac{1}{n}\sum_{i=1}^n
F(\xi_i)\nabla_\theta\log p_\theta(\xi_i).
\]

Although the preceding discussion focuses on first-order derivatives, the same principles extend naturally to higher-order derivatives through repeated differentiation under appropriate regularity conditions. Over the past five decades, numerous gradient estimation methods have been developed, most of which can be viewed as extensions of, or combinations of, these two fundamental ideas. For example, smoothed perturbation analysis and kernel-based methods extend the applicability of pathwise differentiation, while weak derivative and measure-valued differentiation methods generalize the likelihood ratio approach through alternative representations of probability measure derivatives. The generalized likelihood ratio method combines pathwise and likelihood ratio ideas to exploit the advantages of both. Consequently, differentiating either the performance function or the probability measure remains the prevailing conceptual framework for simulation gradient estimation.

\subsection{Limitations of Existing Approaches}

To illustrate the limitations of the existing approaches, we consider the classical problem of estimating the Delta of an Asian digital option under geometric Brownian motion. Under the risk-neutral measure, the asset price satisfies $dS_t=rS_t\,dt+\sigma S_t\,dW_t$ with initial value $S_0=\theta$, where $r$ is the risk-free rate and $\sigma>0$ is the volatility. Let $0<t_1<\cdots<t_m=T$ denote the monitoring times, and define the arithmetic average by $A_\theta=m^{-1}\sum_{j=1}^m S_{t_j}$. The discounted payoff is $F(A_\theta)=e^{-rT}\mathbf 1\{A_\theta\ge K\}$, and the option price is $J(\theta)=\mathbb E[F(A_\theta)]$. Our objective is to estimate the Delta $J'(\theta)$.

\subsubsection{Pathwise Methods and Nonsmooth Performance Functions.}

The pathwise method is not directly applicable because the discontinuity of the payoff invalidates the interchange of differentiation and expectation. Although the indicator payoff is differentiable with respect to \(\theta\) almost everywhere along each sample path, its derivative is zero almost surely. Consequently, the pathwise estimator is identically zero and would incorrectly imply that \(J'(\theta)=0\), even though the option price clearly depends on the initial asset price. The difficulty therefore lies not in the expected performance \(J(\theta)\), which is smooth, but in the discontinuity of the sample performance, which invalidates the pathwise argument.

Several extensions have been proposed to overcome this difficulty. Smoothed perturbation analysis restores differentiability by conditioning on carefully chosen random variables so that the conditional expectation becomes smooth \citep{fuhu1997,glasserman2004,fu2009conditional}. This approach requires exploiting the specific probabilistic structure of the simulation model and identifying suitable conditioning variables. Kernel-based methods replace the discontinuous payoff by a locally smoothed approximation and estimate the derivative of the smoothed performance \citep{liu2011}. While more generally applicable, they still require pathwise derivatives of the simulation output and typically incur a loss of statistical efficiency due to the additional smoothing.

\subsubsection{Likelihood-Ratio Methods and State-Space Enlargement.}

The likelihood-ratio method avoids differentiating the discontinuous payoff by differentiating the underlying probability measure instead. To explain its limitation, it is useful to first recall the change-of-measure representation. Fix a reference parameter $\theta_0$. If the probability measures $\{P_\theta\}$ are mutually absolutely continuous, then
\begin{equation}\label{eqn:IS}
J(\theta)
=
\mathbb E_{\theta_0}
\left[
F(\xi)L_{\theta,\theta_0}(\xi)
\right],
\end{equation}
where $L_{\theta,\theta_0}={dP_\theta}/{dP_{\theta_0}}$ is the likelihood ratio and \(\mathbb E_{\theta_0}\) is the expectation taken
with respect to \(P_{\theta_0}\). Differentiating this representation with respect to $\theta$ leads to the classical likelihood-ratio estimator. Therefore, the key ingredient of the likelihood-ratio method is an explicit expression for the likelihood ratio between the probability measures corresponding to different parameter values.

For the Asian digital option, the payoff depends on the asset-price process only through the arithmetic average $A_\theta$. Consequently, the most natural probability measure to differentiate is the distribution $P_\theta^A$ of $A_\theta$, under which
\[
J(\theta)
=
\int e^{-rT}\mathbf 1\{a\ge K\}\,dP_\theta^A(a).
\]
If the density of $A_\theta$ were available, one could directly construct the likelihood ratio $L_{\theta,\theta_0}^A={dP_\theta^A}/{dP_{\theta_0}^A}$, and obtain a likelihood-ratio estimator without differentiating the discontinuous payoff. Unfortunately, no tractable closed-form expression is known for the distribution of the arithmetic average of lognormal random variables. As a result, the likelihood ratio associated with the performance-relevant random variable cannot be evaluated.

The standard solution is to enlarge the state space from the arithmetic average $A_\theta$ to the discretized asset-price path $X_\theta=(S_{t_1},\ldots,S_{t_m})$, whose joint density is available from the transition densities of geometric Brownian motion \citep{glasserman2004}. The likelihood ratio is then constructed with respect to the enlarged probability measure $P_\theta^X$. By the Markov property,
\[
p_\theta(x_1,\ldots,x_m)
=
p_\theta^{(1)}(x_1)
\prod_{j=2}^m
p(x_j\mid x_{j-1}),
\]
where only the first transition density depends on the initial price $\theta$. Consequently,
\[
\left.
\frac{\partial}{\partial\theta}
\log p_\theta(X_\theta)
\right|_{\theta=\theta_0}
=
\frac{Z_1}
{\sigma\theta_0\sqrt{\Delta t}},
\]
where $\Delta t=t_1$ and $Z_1$ is the standard normal random variable driving the Brownian increment over the first time interval. Therefore, its second moment
\[
\mathbb E_{\theta_0}
\left[
\left(
\left.
\frac{\partial}{\partial\theta}
\log p_\theta(X_\theta)
\right|_{\theta=\theta_0}
\right)^2
\right]
=
\frac{1}{\sigma^2\theta_0^2\Delta t},
\]
which diverges as $\Delta t\rightarrow 0$, indicating an explosion of the variance of the likelihood-ratio gradient estimator as $\Delta t\rightarrow 0$.


The variance explosion admits a natural probabilistic interpretation. The distributions of the arithmetic average, \(P_\theta^A\) and \(P_{\theta_0}^A\), remain mutually absolutely continuous, so a likelihood ratio is well defined for the performance-relevant random variable. In contrast, when the likelihood ratio is constructed from the discretized sample path, the discretized state approaches the continuous sample path as \(\Delta t\to0\). Indeed, two geometric Brownian motions starting from different deterministic initial prices are mutually singular on the continuous path space, since every sample path starts from its initial value. Thus, the exploding likelihood-ratio score can be viewed as a finite-dimensional manifestation of this approaching singularity.

The Asian digital option illustrates a limitation that extends well beyond this example. In many stochastic simulation models, the probability distribution of the performance-relevant random variable is unavailable, making direct construction of the likelihood ratio impossible. Classical likelihood-ratio methods therefore rely on state-space enlargement to obtain an analytically tractable probability measure. 
While this restores applicability, it may substantially reduce statistical efficiency because the enlarged-state likelihood ratio retains variation beyond that summarized by the performance-relevant random variable.

More fundamentally, the likelihood-ratio method requires explicit knowledge of the probability density associated with the simulation model. In the Asian-option example, this density is available after enlarging the state space to the discretized asset-price path. In many modern simulation applications, however, the simulator is treated as a black box, and neither the probability distribution of the performance-relevant random variable nor that of any enlarged state space is available. Consequently, although the likelihood-ratio method is generally applicable to a broader class of performance functions than the pathwise method, it is also not applicable to black-box simulation models.

\section{Learn-Then-Differentiate Gradient Estimation}
\label{sec:ltd}

The methods reviewed in Section~\ref{sec:foundations} estimate derivatives by differentiating the expectation directly. An alternative strategy is to first construct a surrogate model for the expectation function and then obtain derivative estimates by differentiating the surrogate. From an engineering perspective, this is perhaps the most natural approach. Surrogate models are routinely employed to approximate computationally expensive simulation models for prediction, optimization, uncertainty quantification, and real-time decision making. Whenever the surrogate is sufficiently smooth, its derivatives become immediately available without requiring additional simulation experiments. Although this idea has been explored for several individual surrogate models, existing developments remain largely algorithm-specific \citep{hansen2008,neumeyer2010,liu2023derivatives}. In this section, we develop a unified framework for LTD gradient estimation.

The LTD framework is particularly attractive for modern simulation applications, where the simulator is often treated as a black box. In such settings, the analyst has access only to simulation inputs and their corresponding output observations,
\(
\mathcal D_n
=
\{(\theta_1,F_1),(\theta_2,F_2),\ldots,(\theta_n,F_n)\},
\)
where \(F_i\) denotes the simulation output observed at the design point \(\theta_i\). LTD requires neither pathwise derivatives nor analytical knowledge of the underlying probability model. Its objective is to estimate derivatives of the expectation function directly from these input-output observations.

In this paper, we focus on a broad class of surrogate models admitting the weighted representation
\begin{equation}
\widehat J_n(\theta)
=
\sum_{i=1}^n w_i(\theta)F_i,
\label{eq:weighted-surrogate}
\end{equation}
where the weights are determined by the dataset \(\mathcal D_n\) and the learning algorithm. We suppress their dependence on \(\mathcal D_n\) for notational simplicity.

This representation accommodates both response-independent and response-adaptive weights. When the weights depend only on the design points and fixed tuning parameters, the fitted surrogate is linear in the observed responses. Examples include linear regression, ridge regression, smoothing splines, kernel regression, local polynomial regression, kernel ridge regression, and regression using a fixed neural tangent kernel \citep{jacot2018ntk}. When tuning parameters are selected using the observed responses, these otherwise response-independent methods also fall within the response-adaptive case. More generally, response-adaptive weights also accommodate nonlinear learning procedures, including multiple kernel learning and certain neural-network estimators. Variants of this weighted representation have been widely used in the simulation learning literature \citep{ankenman2010,hongjiang2019,shen2021,lin2025}.

However, not every surrogate model of the form \eqref{eq:weighted-surrogate} is suitable for gradient estimation. For example, \(k\)-nearest-neighbor methods and regression trees also admit weighted representations but generally produce nondifferentiable approximations. Whenever the weight functions are \(r\) times differentiable, the corresponding LTD estimator is naturally defined by
\begin{equation}
\nabla^{(r)}\widehat J_n(\theta)
=
\sum_{i=1}^n
\nabla^{(r)}w_i(\theta)F_i.
\label{eq:ltd-general}
\end{equation}

An important motivation for constructing such surrogate models comes from the offline-simulation-online-application (OSOA) paradigm proposed by \citet{hongjiang2019}. In many modern simulation applications, the parameter values at which gradients are required are unavailable during the simulation stage, or the online stage has insufficient time or computational resources to perform new simulation experiments. Instead, computationally intensive simulation is performed offline to construct an accurate surrogate, while the online stage requires only evaluating the surrogate and its derivatives. Representative applications include contextual simulation optimization, where decisions and their sensitivities must be evaluated online for newly observed contextual information \citep{jin2026,lin2025}, and financial engineering, where option Greeks depend on market conditions that are only observed at the time of evaluation \citep{broadie1996,jiang2020}.

\subsection{Representative Weighted Surrogate Models}

The weighted representation \eqref{eq:weighted-surrogate} encompasses a broad class of differentiable nonparametric learning methods. In this subsection, we consider three representative surrogate models corresponding to three distinct learning paradigms: local averaging, local approximation, and global regularization. These examples illustrate the breadth of the unified LTD framework and serve as running examples throughout the paper. Their formulations, implementation details, and convergence properties are summarized in Appendix~\ref{app:surrogate} and will be used in the convergence analysis of Section~\ref{sec:ltd-rate}.

\paragraph{Kernel regression (KR).}

Kernel regression \citep{hansen2008} belongs to the class of \emph{local averaging} methods. The surrogate is constructed as a weighted average of nearby observations,
\[
\widehat J_n(\theta)
=
\frac{\sum_{i=1}^nK_h(\theta-\theta_i)F_i}
{\sum_{j=1}^nK_h(\theta-\theta_j)}
=
\sum_{i=1}^n
w_i^{\rm KR}(\theta)F_i,\qquad\text{where}\ 
\ w_i^{\rm KR}(\theta)
=
\frac{K_h(\theta-\theta_i)}
{\sum_{j=1}^nK_h(\theta-\theta_j)}.\]
The differentiability of the surrogate is inherited directly from the kernel function. The classical Nadaraya--Watson estimator
\citep{nadaraya1964,watson1964} employs a second-order kernel. More generally, higher-order kernels satisfying additional moment conditions can be used to achieve higher-order approximation accuracy. Such kernels generally take both positive and negative values, resulting in negative weights, but provide improved approximation accuracy for sufficiently smooth functions.

\vspace{6pt}

\paragraph{Local polynomial regression (LPR).}

Local polynomial regression \citep{fan1996} represents the philosophy of \emph{local approximation}. Rather than directly averaging nearby observations, it fits a polynomial of degree $p$ to observations in a neighborhood of the query point using kernel-weighted least squares. Although the estimator is defined implicitly through a local optimization problem, it admits the equivalent weighted representation
\[
\widehat J_n(\theta)
=
\sum_{i=1}^n
w_i^{\rm LP}(\theta)F_i,\qquad\text{where}\ \  w^{\rm LP}(\theta)^T
=
e_1^T
\big(R(\theta)^TW(\theta)R(\theta)\big)^{-1}
R(\theta)^TW(\theta).
\]
Here $R(\theta)$ is the local polynomial design matrix, $W(\theta)$ is the diagonal kernel-weight matrix, and $e_1=(1,0,\ldots,0)^T$. The weight functions therefore depend jointly on the localization kernel, the polynomial degree, the bandwidth, and the local design geometry. Since the polynomial basis is infinitely differentiable, the smoothness of the resulting surrogate is determined primarily by the localization kernel. 

\vspace{6pt}

\paragraph{Kernel ridge regression (KRR).}

Kernel ridge regression \citep{scholkopf2002} represents the philosophy of \emph{global regularization}. Rather than constructing a local approximation around each query point, it learns a global surrogate in a reproducing kernel Hilbert space (RKHS) by solving a regularized least-squares problem with regularization parameter $\lambda>0$. By the representer theorem,
\[
\widehat J_n(\theta)
=
k(\theta)^T(K+n\lambda I)^{-1}F
=
\sum_{i=1}^n
w_i^{\rm KRR}(\theta)F_i,\qquad\text{where}\ \ w^{\rm KRR}(\theta)^T
=
k(\theta)^T(K+n\lambda I)^{-1}.
\]
Consequently, the smoothness of the surrogate is inherited directly from the kernel function $k(\theta)$. Common choices include the Gaussian kernel and the Mat\'ern family, whose smoothness parameter explicitly controls the differentiability of the resulting surrogate.

\vspace{6pt}


Thus, the differentiability of all three learned surrogates above is governed primarily by the smoothness of the kernel function used in each estimator.
Whenever the learned surrogate is \(r\) times differentiable, it immediately gives rise to an LTD estimator of the \(r\)th-order derivative through \eqref{eq:ltd-general}. 
In contrast, the statistical performance is method-specific: as shown in Appendix~\ref{app:surrogate}, the approximation scheme determines the convergence rate of the learned surrogate, which is subsequently translated into the convergence rate of the corresponding LTD estimator in Section~\ref{sec:ltd-rate}.

\subsection{Two Questions}

The unified formulation developed in this section naturally raises two fundamental questions.

\vspace{5pt}

\noindent
\textit{Question 1. What mathematical object does an LTD estimator differentiate?}

Section~\ref{subsec:glasserman} reviewed Glasserman's unifying framework, in which simulation gradient estimators are interpreted as differentiating either the performance function or the underlying probability measure. Since LTD estimators are obtained by differentiating a learned surrogate, they do not fit naturally into this classical framework. Answering this question is therefore essential for understanding the fundamental nature of LTD and its relationship to the existing theory of simulation gradient estimation.

\vspace{5pt}

\noindent
\textit{Question 2. How efficient are LTD estimators?}

Ultimately, the practical value of an LTD estimator is determined by its statistical efficiency, which is typically measured by its convergence rate. The canonical Monte Carlo convergence rate of $n^{-1/2}$ has long been regarded as the benchmark for simulation gradient estimation \citep{glasserman2004,fu2006}. Meanwhile, for several weighted surrogate models, it is known that the surrogate itself can estimate the expectation function at a convergence rate close to $n^{-1/2}$ under suitable smoothness assumptions. A natural question is whether the corresponding LTD gradient estimators can inherit this near-optimal statistical efficiency. Answering this question is essential for understanding the theoretical performance of the LTD paradigm.

\vspace{5pt}

Sections~\ref{sec:ltd-interpretation} and~\ref{sec:ltd-rate} address these two questions, respectively.


\section{A Unified Interpretation of Learn-Then-Differentiate}
\label{sec:ltd-interpretation}

Section~\ref{sec:ltd} introduced a unified class of weighted LTD estimators of the form
\[
\nabla^{(r)}\widehat J_n(\theta)
=
\sum_{i=1}^n
\nabla^{(r)}w_i(\theta)F_i.
\]
We now return to the first question raised at the end of that section: what mathematical object does an LTD estimator differentiate?

At first glance, the answer appears immediate. Since \(\widehat J_n(\theta)\) is a learned surrogate for the expectation function, LTD seems to differentiate the surrogate itself. Viewed in this way, however, LTD does not fit naturally into Glasserman's framework introduced in Section~\ref{subsec:glasserman}, which interprets simulation gradient estimators as differentiating either the performance function or the underlying probability measure. The weighted representation above suggests a different interpretation.

To see this, recall that a likelihood-ratio estimator may be expressed as a weighted average of the observed performance values. For a first-order derivative, for example,
\[
\widehat{\nabla J}_{\mathrm{LR}}(\theta)
=
\frac{1}{n}
\sum_{i=1}^n
\nabla_\theta\log p_\theta(\xi_i)F(\xi_i).
\]
The likelihood-ratio weights arise from differentiating the probability measure, while the performance function \(F\) remains unchanged.

The weighted LTD estimator has a strikingly similar form: it combines observed performance values using the differentiated surrogate weights \(\nabla^{(r)}w_i(\theta)\). This structural similarity raises a natural question: can LTD also be interpreted as differentiating a representation of the probability measure while leaving the performance function fixed? Unlike the likelihood-ratio method, LTD does not begin with an analytical probability density. It begins with a surrogate learned from simulation observations. The central task is therefore to identify the measure representation associated with that surrogate and explain how differentiating it produces the LTD estimator.

\subsection{Probability Measures as Differentiable Objects}
\label{subsec:probability-measures}

To establish the proposed interpretation of LTD, we first construct a differentiable representation of the parameterized probability measure \(P_\theta\), through which expected performance and its derivatives may be expressed. Throughout this subsection, we focus on the population measure itself; learning its representation from simulation observations will be addressed in the next subsection. This parallels the population representation underlying the likelihood-ratio method in \eqref{eqn:IS}, where the likelihood ratio \(L_{\theta,\theta_0}\) represents \(P_\theta\) relative to a fixed reference measure \(P_{\theta_0}\), and differentiation acts on this representation.

\subsubsection{Kernel Mean Representation.}
Note that a probability density provides a familiar way to represent a
probability measure and calculate expectations:
\begin{equation}\label{eqn:density_rep}
J(\theta)
=
\int_{\mathcal X}F(\xi)p_\theta(\xi)\,d\xi.
\end{equation}
Here \(F\) describes the performance function being evaluated, while
\(p_\theta\) describes the distribution of the simulation outcome $\xi$, which takes values in \(\mathcal X\).
A kernel mean representation provides an alternative way to
express the same expectation, without requiring a probability
density.

To introduce this representation, choose a positive-definite kernel
\(k:\mathcal X\times\mathcal X\to\mathbb R\). The kernel determines
a reproducing kernel Hilbert space (RKHS), denoted by \(\mathcal H\).
An RKHS is a space of functions equipped with an inner product
\(\langle\cdot,\cdot\rangle_{\mathcal H}\). Its defining property is
that evaluating a function can be written as an inner product:
\[
F(\xi)=\langle F,k(\xi,\cdot)\rangle_{\mathcal H},
\qquad F\in\mathcal H.
\]
The condition \(F\in\mathcal H\) means that the performance function
belongs to the function class associated with the chosen kernel
and has finite RKHS norm. How restrictive this condition is
depends on both the kernel and the state space. For example,
Mat\'ern kernels on bounded domains accommodate broad classes
of sufficiently smooth functions. Continuous
kernels exclude discontinuous performance functions, and
uniformly bounded kernels exclude unbounded ones. We begin
with \(F\in\mathcal H\) to introduce the representation in its simplest form, and extend the discussion to more general piecewise smooth performance functions in Section~\ref{subsubsec:piecewise}.

For each parameter value \(\theta\), define the kernel mean representation of \(P_\theta\) by
\begin{equation}
\mu_\theta
=
\int_{\mathcal X}k(\xi,\cdot)\,dP_\theta(\xi)
=
\mathbb E_\theta[k(\xi,\cdot)].
\label{eq:kme}
\end{equation}
We assume the integrability condition
\begin{equation}
\mathbb E_\theta\left[\sqrt{k(\xi,\xi)}\right]<\infty,
\label{eq:kme-integrability}
\end{equation}
which ensures that this mean is well defined in \(\mathcal H\);
bounded kernels satisfy this condition.
Intuitively, \(\mu_\theta\) averages the kernel functions
\(k(\xi,\cdot)\) according to the probabilities assigned by
\(P_\theta\). It is a representation of the probability measure,
not a probability density, and is defined for discrete as well
as continuous distributions \citep{smola2007,muandet2017}.
The kernel here is defined on the simulation state space
\(\mathcal X\); it need not be the kernel used to fit a surrogate
on the parameter space \(\Theta\).

For \(F\in\mathcal H\), the reproducing property gives
\[
J(\theta)
=
\mathbb E_\theta[
\langle F,k(\xi,\cdot)\rangle_{\mathcal H}]
=
\langle F,\mu_\theta\rangle_{\mathcal H}.
\]
The interchange of expectation and the inner product is justified by linearity and continuity under the integrability condition~\eqref{eq:kme-integrability}.
This identity parallels the
density formula (\ref{eqn:density_rep}): integrating \(F\) against \(p_\theta\) is replaced
by taking the RKHS inner product of \(F\) with \(\mu_\theta\).
In particular, \(\mu_\theta\) determines the expectation of every
performance function in \(\mathcal H\).

This representation also makes the differentiation argument
transparent. For simplicity, consider a scalar parameter
\(\theta\). If \(\mu_\theta\) is \(r\)-times differentiable in
\(\mathcal H\), then, with \(F\) fixed, linearity and continuity of the inner
product give 
\begin{equation}
J^{(r)}(\theta)
=
\left\langle F,\mu_\theta^{(r)}\right\rangle_{\mathcal H},
\qquad
\mu_\theta^{(r)}
=
\frac{d^r\mu_\theta}{d\theta^r}.
\label{eq:kme-derivative}
\end{equation}
The same argument applies to partial derivatives for a vector
parameter.

As in differentiation through a probability density, the
performance function \(F\) remains fixed: differentiation acts on
the representation of the probability measure. The kernel mean representation therefore
provides a way to express measure differentiation without an
analytical density. This identity supplies the starting point
for interpreting the learned weighted surrogate and its
derivatives.

\subsubsection{Piecewise Smooth Performance Functions.}
\label{subsubsec:piecewise}

The preceding development assumes that \(F\in\mathcal H\).
Many simulation performance functions, however, are only
piecewise smooth. Examples include indicator functions,
positive-part functions, and piecewise-polynomial functions.
We now extend the representation to such functions, provided
that their smooth branches admit representations in the
chosen RKHS.

The idea is to divide the state space into regions and represent
the probability measure on each region separately. Suppose
that \(\mathcal X\) admits a finite partition
\(
\mathcal X=A_1\cup\cdots\cup A_M,
\)
independent of \(\theta\), and that
\begin{equation}
F(\xi)
=
\sum_{j=1}^M g_j(\xi)\mathbf 1_{\{\xi\in A_j\}},
\label{eq:piecewise-performance}
\end{equation}
where \(g_j\in\mathcal H\) agrees with the performance function
on \(A_j\). Its values outside \(A_j\) do not affect \(F\).
For each region, define the restricted measure
\[
P_{\theta,j}(B)=P_\theta(B\cap A_j).
\]
Its total mass is \(P_\theta(A_j)\), which may be less than one.
Thus it is a subprobability measure, rather than the conditional
distribution given \(A_j\): it retains both the probability of
entering the region and the distribution within it. Its kernel
mean embedding is
\begin{equation}
\mu_{\theta,j}
=
\int_{\mathcal X}k(\xi,\cdot)\,dP_{\theta,j}(\xi)
=
\mathbb E_\theta\!\left[
\mathbf 1_{\{\xi\in A_j\}}
k(\xi,\cdot)
\right].
\label{eq:restricted-kme}
\end{equation}
The integrability condition~\eqref{eq:kme-integrability}
also ensures that these embeddings are well defined. Then, the following theorem extends the expectation and derivative
representations to this setting.

\begin{theorem}
\label{thm:piece}
Suppose that \(F\) satisfies \eqref{eq:piecewise-performance},
with a parameter-independent partition and branch functions
\(g_j\in\mathcal H\). Under the integrability condition~\eqref{eq:kme-integrability},
\begin{equation}
J(\theta)
=
\sum_{j=1}^M
\langle g_j,\mu_{\theta,j}\rangle_{\mathcal H}.
\label{eq:piecewise-pop}
\end{equation}
Furthermore, if each mapping
\(\theta\mapsto\mu_{\theta,j}\) is \(r\)-times differentiable
in \(\mathcal H\), then, for a scalar parameter,
\begin{equation}
J^{(r)}(\theta)
=
\sum_{j=1}^M
\left\langle g_j,\mu_{\theta,j}^{(r)}
\right\rangle_{\mathcal H},\qquad\text{where}\ \ \mu_{\theta,j}^{(r)}
=
\frac{d^r\mu_{\theta,j}}{d\theta^r}.
\label{eq:piecewise-derivative}
\end{equation}
The same statement holds for partial derivatives of a vector
parameter.
\end{theorem}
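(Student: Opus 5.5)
The plan is to reduce the theorem to the single-measure identity $J(\theta)=\langle F,\mu_\theta\rangle_{\mathcal H}$ and the derivative formula \eqref{eq:kme-derivative}, applied region by region.

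\textbf{Step 1: the expectation identity.} Because the sets $A_1,\dots,A_M$ partition $\mathcal X$, the representation \eqref{eq:piecewise-performance} gives, for every $\xi$,
\[
F(\xi)=\sum_{j=1}^M \mathbf 1_{\{\xi\in A_j\}}\,g_j(\xi)=\sum_{j=1}^M \mathbf 1_{\{\xi\in A_j\}}\,\langle g_j,k(\xi,\cdot)\rangle_{\mathcal H},
\]
where the second equality is the reproducing property for $g_j\in\mathcal H$. Taking $\mathbb E_\theta$ and using linearity over the finite sum, it suffices to show, for each $j$,
\[
\mathbb E_\theta\!\left[\mathbf 1_{\{\xi\in A_j\}}\langle g_j,k(\xi,\cdot)\rangle_{\mathcal H}\right]=\Big\langle g_j,\mathbb E_\theta\!\left[\mathbf 1_{\{\xi\in A_j\}}k(\xi,\cdot)\right]\Big\rangle_{\mathcal H}=\langle g_j,\mu_{\theta,j}\rangle_{\mathcal H}.
\]

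\textbf{Step 2: interchanging expectation and inner product.} This exchange is the only point needing care. The map $\xi\mapsto \mathbf 1_{\{\xi\in A_j\}}k(\xi,\cdot)$ takes values in $\mathcal H$ and has norm $\mathbf 1_{\{\xi\in A_j\}}\sqrt{k(\xi,\xi)}\le \sqrt{k(\xi,\xi)}$. By \eqref{eq:kme-integrability} it is therefore Bochner integrable, which I take to mean that $\mu_{\theta,j}$ in \eqref{eq:restricted-kme} is well defined, as the paper asserts.

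The functional $h\mapsto\langle g_j,h\rangle_{\mathcal H}$ is bounded and linear, so it commutes with the Bochner integral. Integrability of the scalar integrand follows from Cauchy--Schwarz: $|g_j(\xi)|\mathbf 1_{A_j}\le\|g_j\|_{\mathcal H}\sqrt{k(\xi,\xi)}$, which is integrable. This shows $F$ is integrable and that the finite sum may be split. Measurability of $A_j$ is implicit in the setup.

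This step is the main technical obstacle. It is a standard Bochner-integral argument, identical to the one used for $J(\theta)=\langle F,\mu_\theta\rangle_{\mathcal H}$ in the text.

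\textbf{Step 3: the derivative formula.} The branches $g_j$ and the sets $A_j$ do not depend on $\theta$. So \eqref{eq:piecewise-pop} expresses $J$ as a finite sum of the maps $\theta\mapsto\ell_j(\mu_{\theta,j})$, where $\ell_j=\langle g_j,\cdot\rangle_{\mathcal H}$ is a fixed bounded linear functional.

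Assume $\theta\mapsto\mu_{\theta,j}$ is $r$ times differentiable in $\mathcal H$. I then argue by induction on $r$ that $\frac{d^r}{d\theta^r}\ell_j(\mu_{\theta,j})=\ell_j(\mu^{(r)}_{\theta,j})$. At each order the difference quotient satisfies
\[
\Big|\ell_j\!\Big(\tfrac{\mu^{(s)}_{\theta+t,j}-\mu^{(s)}_{\theta,j}}{t}\Big)-\ell_j(\mu^{(s+1)}_{\theta,j})\Big|\le\|g_j\|_{\mathcal H}\,\Big\|\tfrac{\mu^{(s)}_{\theta+t,j}-\mu^{(s)}_{\theta,j}}{t}-\mu^{(s+1)}_{\theta,j}\Big\|_{\mathcal H}\to 0 .
\]
Summing over $j$ gives \eqref{eq:piecewise-derivative}.

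For a vector parameter, the same argument applies to each partial derivative (Fréchet or Gâteaux in each coordinate), with $\partial^\alpha$ in place of $d^r/d\theta^r$.

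Parameter-independence of the partition is what lets the indicator sit inside $\mu_{\theta,j}$ rather than in the functional, so no boundary terms arise.
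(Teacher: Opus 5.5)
Your proposal is correct and follows the same route as the paper's proof. Both arguments apply the reproducing property branch by branch, justify swapping $\mathbb E_\theta$ with $\langle g_j,\cdot\rangle_{\mathcal H}$ via the integrability condition, and differentiate the finite sum with $g_j$ and $A_j$ held fixed. You supply details the paper leaves implicit: Bochner integrability, and the fact that the bounded functional $\langle g_j,\cdot\rangle_{\mathcal H}$ commutes with differentiation in $\mathcal H$.
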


\begin{proof}{Proof}
By \eqref{eq:piecewise-performance} and the reproducing property,
\[
J(\theta)
=
\sum_{j=1}^M
\mathbb E_\theta\!\left[
g_j(\xi)\mathbf 1_{\{\xi\in A_j\}}
\right]=
\sum_{j=1}^M
\left\langle
g_j,
\mathbb E_\theta\!\left[
\mathbf 1_{\{\xi\in A_j\}}k(\xi,\cdot)
\right]
\right\rangle_{\mathcal H}=
\sum_{j=1}^M\langle g_j,\mu_{\theta,j}\rangle_{\mathcal H}.
\]
The integrability condition~\eqref{eq:kme-integrability} justifies interchanging expectation
and the inner product. Since the branch functions and partition
are fixed, differentiating this finite sum gives (\ref{eq:piecewise-derivative}).
\hfill\Halmos
\end{proof}

Theorem~\ref{thm:piece} shows that extending the framework to piecewise smooth performance functions requires no change to the underlying differentiation principle. The only modification is that the original probability measure is replaced by finitely many subprobability measures, each associated with one smooth branch of the performance function. Differentiation continues to act exclusively on the representations of these probability measures while the performance functions remain fixed.

At first sight, introducing a partition of the state space appears to complicate the representation. This complication, however, is purely theoretical. As shown in the next subsection, after replacing the population embeddings by learned representations, the partition disappears completely from the resulting LTD estimator. The final estimator depends only on the observed performance values and the differentiated surrogate weights, regardless of whether the performance function is globally smooth or only piecewise smooth.

\subsection{Learning Probability Measure Representations}

Section~\ref{subsec:probability-measures} established that, at the population level,
expected performance can be expressed through kernel mean
representations of the parameterized probability measure.
Differentiation acts entirely on these representations while
leaving the performance function fixed. In practice, however,
the population representations are unavailable and must be
approximated from simulation observations. We now show that
differentiating their learned counterparts leads naturally to
the weighted LTD estimator introduced in Section~\ref{sec:ltd}.

Suppose simulation is performed at parameter values
\(\theta_1,\ldots,\theta_n\), yielding outcomes
\(\xi_i\sim P_{\theta_i}\) and performance observations
\(F_i=F(\xi_i)\), for \(i=1,\ldots,n\).
 To motivate learning the measure representations, consider the responses associated with each simulation outcome. 
 The performance response \(F_i\) has mean \(J(\theta_i)\), whereas, for each \(j\), the RKHS-valued response \(\mathbf 1_{\{\xi_i\in A_j\}}k(\xi_i,\cdot)\) has mean \(\mu_{\theta_i,j}\).
 Thus, learning \(\mu_{\theta,j}\) can be viewed as regression with RKHS-valued responses, paralleling regression for \(J(\theta)\) with performance observations. 
 This interpretation of conditional kernel mean estimation as regression is developed by \citet{grunewalder2012}.

Let \(w_i(\theta)\), \(i=1,\ldots,n\), denote the weights of the fitted surrogate. 
Motivated by this regression interpretation, we use the same fitted surrogate weights to combine the
RKHS-valued responses and define
\begin{equation}
\widehat\mu_{\theta,j}
=
\sum_{i=1}^n
w_i(\theta)
\mathbf 1_{\{\xi_i\in A_j\}}
k(\xi_i,\cdot),
\qquad j=1,\ldots,M.
\label{eq:restricted-kme-estimator}
\end{equation}
Equation~\eqref{eq:restricted-kme-estimator} has the same
weighted form as the surrogate models introduced in Section~\ref{sec:ltd}.
The difference lies in the object being approximated.

Differentiating the learned representations with respect
to $\theta$ gives
\[
\widehat\mu_{\theta,j}^{(r)}
=
\sum_{i=1}^n
w_i^{(r)}(\theta)
\mathbf 1_{\{\xi_i\in A_j\}}
k(\xi_i,\cdot),
\qquad j=1,\ldots,M.
\]
As before, we use scalar-parameter notation; the same argument
applies to partial derivatives for a vector parameter.
Replacing the population representations in
\eqref{eq:piecewise-derivative} by their learned counterparts
yields
\begin{equation}
\widehat J^{(r)}(\theta)
=
\sum_{j=1}^M
\left\langle
g_j,\widehat\mu_{\theta,j}^{(r)}
\right\rangle_{\mathcal H}.
\label{eq:kme-estimator}
\end{equation}
The following proposition shows that the estimator in
\eqref{eq:kme-estimator} is precisely the weighted LTD estimator
\eqref{eq:ltd-general} introduced in Section~\ref{sec:ltd}.

\begin{proposition}
\label{prop:piece}
Suppose that \(F\) satisfies \eqref{eq:piecewise-performance},
with \(g_j\in\mathcal H\), and that the fitted weights are
\(r\)-times differentiable. Then the estimator in
\eqref{eq:kme-estimator} satisfies
\[
\widehat J^{(r)}(\theta)
=
\sum_{i=1}^n w_i^{(r)}(\theta)F(\xi_i).
\]
\end{proposition}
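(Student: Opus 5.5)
The argument is a direct algebraic computation: substitute the differentiated learned embeddings into \eqref{eq:kme-estimator}, pull the finite sums out of the inner product, apply the reproducing property, and then collapse the partition using \eqref{eq:piecewise-performance}. No limiting argument is needed, because every sum is finite and all objects are explicit finite combinations of kernel sections.

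\emph{Step 1 (the learned derivative is well defined).} Each \(\widehat\mu_{\theta,j}\) in \eqref{eq:restricted-kme-estimator} is a finite linear combination of the fixed RKHS elements \(\mathbf 1_{\{\xi_i\in A_j\}}k(\xi_i,\cdot)\). These elements do not depend on \(\theta\), since the samples \(\xi_i\) are given and the partition is parameter-independent. The scalar coefficients are \(w_i(\theta)\), which are assumed \(r\)-times differentiable. Hence \(\theta\mapsto\widehat\mu_{\theta,j}\) is \(r\)-times differentiable in \(\mathcal H\), and its derivative is the displayed \(\widehat\mu_{\theta,j}^{(r)}=\sum_i w_i^{(r)}(\theta)\mathbf 1_{\{\xi_i\in A_j\}}k(\xi_i,\cdot)\). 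To justify this, I would bound the difference quotient in \(\mathcal H\)-norm by \(\sum_i |\text{scalar difference-quotient error}|\cdot\sqrt{k(\xi_i,\xi_i)}\), which tends to zero. I would then iterate \(r\) times. For a vector parameter the same argument applies to each partial derivative.

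\emph{Step 2 (reproducing property).} Substituting into \eqref{eq:kme-estimator} and using linearity of \(\langle g_j,\cdot\rangle_{\mathcal H}\) gives
\[
\widehat J^{(r)}(\theta)=\sum_{j=1}^M\sum_{i=1}^n w_i^{(r)}(\theta)\mathbf 1_{\{\xi_i\in A_j\}}\langle g_j,k(\xi_i,\cdot)\rangle_{\mathcal H}=\sum_{i=1}^n w_i^{(r)}(\theta)\sum_{j=1}^M g_j(\xi_i)\mathbf 1_{\{\xi_i\in A_j\}}.
\]
The second equality uses \(g_j\in\mathcal H\), so that \(\langle g_j,k(\xi_i,\cdot)\rangle_{\mathcal H}=g_j(\xi_i)\), together with exchanging the two finite sums.

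\emph{Step 3 (collapse the partition).} By \eqref{eq:piecewise-performance}, the inner sum equals \(F(\xi_i)\). Since the \(A_j\) partition \(\mathcal X\), exactly one indicator is nonzero, and the branch \(g_j\) agrees with \(F\) on that cell. This yields the claimed identity \(\widehat J^{(r)}(\theta)=\sum_{i=1}^n w_i^{(r)}(\theta)F(\xi_i)\).

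The only point needing any care is Step 1, namely that differentiating the \(\mathcal H\)-valued map coincides with differentiating the scalar weights termwise. Since the map is a finite combination of fixed vectors with differentiable scalar coefficients, this reduces to the triangle inequality in \(\mathcal H\), so I do not expect a genuine obstacle.
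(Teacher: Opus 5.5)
Your proposal is correct and follows the paper's proof essentially verbatim: expand the learned derivatives inside the inner product, apply the reproducing property $\langle g_j,k(\xi_i,\cdot)\rangle_{\mathcal H}=g_j(\xi_i)$, swap the finite sums, and collapse the partition via \eqref{eq:piecewise-performance}. Your Step 1 shows that $\widehat\mu_{\theta,j}^{(r)}$ is the termwise derivative; the paper does not prove this but takes it as given when it displays $\widehat\mu_{\theta,j}^{(r)}$ just before the proposition, so this step is harmless extra rigor.
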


\begin{proof}{Proof}
Using the definition of the learned kernel mean representations,
\[
\widehat J^{(r)}(\theta)
=
\sum_{j=1}^M\sum_{i=1}^n
w_i^{(r)}(\theta)
\mathbf 1_{\{\xi_i\in A_j\}}
\left\langle g_j,k(\xi_i,\cdot)\right\rangle_{\mathcal H}.
\]
Applying the reproducing property,
\(\langle g_j,k(\xi_i,\cdot)\rangle_{\mathcal H}=g_j(\xi_i)\),
gives
\[
\widehat J^{(r)}(\theta)
=
\sum_{j=1}^M
\sum_{i=1}^n
w_i^{(r)}(\theta)
\mathbf1\{\xi_i\in A_j\}
g_j(\xi_i)=
\sum_{i=1}^n w_i^{(r)}(\theta)
\left[
\sum_{j=1}^M
\mathbf 1_{\{\xi_i\in A_j\}}g_j(\xi_i)
\right]
=
\sum_{i=1}^n w_i^{(r)}(\theta)F(\xi_i),
\]
where the last equality follows from
\eqref{eq:piecewise-performance}.
\hfill\Halmos
\end{proof}

\begin{remark}{\it
Proposition~\ref{prop:piece} holds for both response-independent
and response-adaptive weights. To make this explicit, write
\(
w_i(\theta)=w_i(\theta;\mathcal D_n)\) where 
\(\mathcal D_n=\{(\theta_1,F_1),\ldots,(\theta_n,F_n)\}.
\)
For response-independent weights, the dependence on
\(\mathcal D_n\) is only through the design points; for
response-adaptive weights, it may also involve the observed
responses. In either case, the training data are held fixed
when differentiating:
\[
\frac{d^r}{d\theta^r}
\sum_{i=1}^n w_i(\theta;\mathcal D_n)F_i
=
\sum_{i=1}^n
\frac{\partial^r w_i(\theta;\mathcal D_n)}{\partial\theta^r}F_i.
\]
The proof of Proposition~\ref{prop:piece} therefore applies
unchanged, provided that the same fitted weights are used
across all regions.
 }
\end{remark}

Proposition~\ref{prop:piece} reveals an important feature of
the framework. Although the theoretical development uses a
partition to accommodate piecewise smooth performance
functions, this partition disappears from the final estimator.
The estimator depends only on the observed performance values
\(F(\xi_i)\) and the differentiated weights
\(w_i^{(r)}(\theta)\); neither the partition nor the branch
functions appear explicitly. Consequently, the partition
serves only as a theoretical device for interpreting nonsmooth
performance functions within the measure-differentiation
framework. It plays no role in implementing the resulting
LTD estimator.

\subsection{Relationship with Likelihood-Ratio Methods}

The development in the previous two subsections answers the first question raised at the end of Section~\ref{sec:ltd}. Although weighted LTD is obtained by differentiating a learned surrogate, the resulting estimator may equivalently be interpreted as differentiating a learned representation of the underlying probability measure. Consequently, weighted LTD fits naturally within Glasserman's unified framework for simulation gradient estimation as a probability-measure differentiation method. At the population level, the differentiation is expressed by
\[
J^{(r)}(\theta)
=
\left\langle
F,
\mu_\theta^{(r)}
\right\rangle_{\mathcal H}
\]
for globally smooth performance functions, and by its piecewise extension in Theorem~\ref{thm:piece}. At the empirical level, differentiating the learned kernel mean representations leads to the weighted estimator
\[
\widehat J^{(r)}(\theta)
=
\sum_{i=1}^n
w_i^{(r)}(\theta)F_i,
\]
which has exactly the same computational form as the unified LTD estimator~\eqref{eq:ltd-general} introduced in Section~\ref{sec:ltd}. Consequently, weighted LTD should be interpreted as differentiating a learned measure representation rather than the performance function itself.

The essential difference from the likelihood-ratio method lies in the representation being differentiated. Classical likelihood-ratio methods represent the probability measure through an analytical probability density or, more generally, a likelihood ratio. Differentiation therefore requires explicit knowledge of the underlying probability distribution. In contrast, the present framework represents the probability measure through its kernel mean representation and approximates this representation directly from simulation observations. No analytical probability density is required. Thus, while both approaches differentiate the probability measure, they differ fundamentally in how this probability measure is represented.

This distinction has an important consequence. In the likelihood-ratio method, the choice of the underlying random element is an algorithmic decision because it determines the probability density to be differentiated. If the probability distribution of the performance-relevant random variable is unavailable, one typically enlarges the state space to obtain an analytically tractable probability measure, as illustrated by the Asian digital option example in Section~\ref{sec:foundations}. Such state-space enlargement may substantially reduce statistical efficiency and, in continuous time, may even introduce singularity through the loss of absolute continuity.

The LTD framework is fundamentally different. Throughout Section~\ref{sec:ltd-interpretation}, the random element \(\xi\) serves only as a conceptual vehicle for constructing a representation of the probability measure. It may represent the performance-relevant random variable, the complete simulation trajectory, or any other measurable representation of the underlying randomness. Regardless of this choice, the final estimator depends only on the observed performance values and the differentiated surrogate weights. Consequently, the framework requires neither explicit knowledge of how the performance function depends on the underlying random element nor analytical knowledge of its probability distribution. Since no analytical density or likelihood ratio is differentiated, the singularity issues associated with state-space enlargement do not arise. This makes the proposed interpretation particularly well suited to modern black-box simulation models, where the underlying probability distribution is typically inaccessible.

\section{Statistical Efficiency of Learn-Then-Differentiate Estimators}
\label{sec:ltd-rate}

We now address the second question raised at the end of Section~\ref{sec:ltd}: how statistically efficient are LTD estimators? Statistical efficiency is typically measured by the convergence rate of an estimator as the simulation budget increases. Classical pathwise and likelihood-ratio estimators, when applicable, generally achieve the canonical Monte Carlo rate of \(n^{-1/2}\) \citep{glasserman2004,fu2006}. It is therefore natural to ask whether LTD estimators can attain, or at least approach, this benchmark.

For many learning algorithms, convergence rates of the learned surrogate \(\widehat J_n\) to the expectation function \(J\) are well understood \citep{fan1996,scholkopf2002,steinwart2008}. An LTD estimator, however, is obtained by differentiating the learned surrogate. Consequently, its convergence rate does not follow directly from existing function approximation results, since differentiation may amplify approximation errors. Existing analyses are largely method specific. In this section, we develop a unified convergence theory that directly transfers convergence rates of surrogate models to the corresponding LTD gradient estimators.

\subsection{Smoothness of the Expectation Function}
\label{subsec:expectation-smoothness}

To study higher-order derivatives of the expectation function \(J\),
we first examine its smoothness. Although the sample performance
function \(F\) may be nonsmooth or discontinuous, taking expectation
can average out these irregularities. Examples include simulation
models involving indicator functions, such as the Asian digital
option discussed in Section~\ref{sec:foundations}.

Throughout this section, the parameter space
\(\Theta\subset\mathbb R^d\) is a compact hyper-rectangle with nonempty
interior. For a multi-index \(\alpha=(\alpha_1,\ldots,\alpha_d)\),
where each component is a nonnegative integer, let
\(|\alpha|=\alpha_1+\cdots+\alpha_d\) and write
\[
D^\alpha f
=
\frac{\partial^{|\alpha|}f}
{\partial\theta_1^{\alpha_1}\cdots
 \partial\theta_d^{\alpha_d}}.
\]
We use \(D^{(r)}f=\{D^\alpha f:|\alpha|=r\}\) to denote the
collection of partial derivatives of order \(r\), and
\(C^\beta(\Theta)\) to denote the space of
\(\beta\)-times continuously differentiable functions on \(\Theta\).

The following proposition gives sufficient conditions for \(J\) to inherit smoothness from the density \(p_\theta\), even when \(F\) is nonsmooth.

\begin{proposition}
\label{prop:smooth}
Let \(\beta\ge1\) be an integer, and suppose
\[
J(\theta)
=
\int_{\mathcal X}F(\xi)p_\theta(\xi)\,d\xi,
\qquad \theta\in\Theta,
\]
where \(F\) is measurable and \(p_\theta\) is a probability density
on a fixed set \(\mathcal X\). Assume that \(p_\theta(\xi)\) is
\(\beta\)-times continuously differentiable in \(\theta\) for
almost every \(\xi\). Furthermore, for each \(|\alpha|\le\beta\),
suppose there is an integrable function \(G_\alpha\) such that
\begin{equation}\label{eq:integrable_bound}
|F(\xi)D_\theta^\alpha p_\theta(\xi)|
\le G_\alpha(\xi)
\end{equation} 
for all $\theta\in\Theta$ and almost every $\xi$. Then, \(J\) is \(\beta\)-times continuously differentiable, with
\[
D^\alpha J(\theta)
=
\int_{\mathcal X}F(\xi)D_\theta^\alpha p_\theta(\xi)\,d\xi,
\qquad |\alpha|\le\beta.
\]
\end{proposition}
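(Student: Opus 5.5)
We argue by induction on $|\alpha|$, using the dominated convergence theorem to differentiate under the integral one coordinate at a time. The base case $\alpha=0$ is the definition of $J$: the bound \eqref{eq:integrable_bound} with $\alpha=0$ shows that $F(\xi)p_\theta(\xi)$ is integrable for every $\theta\in\Theta$, so $J$ is well defined.

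For the inductive step, suppose that for some $\alpha$ with $|\alpha|<\beta$ we have established
\[
D^\alpha J(\theta)=\int_{\mathcal X}F(\xi)D_\theta^\alpha p_\theta(\xi)\,d\xi .
\]
Let $e_k$ be the $k$th unit vector and set $\alpha'=\alpha+e_k$.

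Fix $\theta$ and a small $t\neq0$ with $\theta+te_k\in\Theta$. Since $\Theta$ is a convex hyper-rectangle, this increment stays in $\Theta$ for one-sided $t$ at boundary points. For almost every $\xi$, the function $s\mapsto D_\theta^\alpha p_{\theta+se_k}(\xi)$ is $C^1$ on the segment, so the mean value theorem gives
\[
\frac{F(\xi)\bigl(D^\alpha_\theta p_{\theta+te_k}(\xi)-D^\alpha_\theta p_\theta(\xi)\bigr)}{t}
=F(\xi)D^{\alpha'}_\theta p_{\theta+\tau e_k}(\xi)
\]
for some $\tau$ between $0$ and $t$, where $\tau$ may depend on $\xi$.

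By \eqref{eq:integrable_bound} applied to $\alpha'$, the right-hand side is bounded in absolute value by $G_{\alpha'}(\xi)$. This bound is uniform in $t$, because it holds for all $\theta\in\Theta$. As $t\to0$, the difference quotient converges almost everywhere to $F(\xi)D^{\alpha'}_\theta p_\theta(\xi)$. Dominated convergence then yields
\[
D^{\alpha'}J(\theta)=\int_{\mathcal X}F(\xi)D^{\alpha'}_\theta p_\theta(\xi)\,d\xi .
\]

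Continuity of each $D^\alpha J$ follows by the same domination. If $\theta_m\to\theta$, then $F D^\alpha_\theta p_{\theta_m}\to F D^\alpha_\theta p_\theta$ almost everywhere, because $p_\theta(\xi)$ is $C^\beta$ in $\theta$ for a.e. $\xi$, and the integrands are bounded by $G_\alpha$. Dominated convergence gives $D^\alpha J(\theta_m)\to D^\alpha J(\theta)$ for all $|\alpha|\le\beta$. Since all partial derivatives up to order $\beta$ exist and are continuous, $J\in C^\beta(\Theta)$.

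The main obstacle is bookkeeping rather than analysis, and there are two points to handle. First, the exceptional null sets of $\xi$ must not depend on $\theta$. The hypothesis "$C^\beta$ in $\theta$ for a.e. $\xi$" supplies a single null set $N$ outside of which the whole map $\theta\mapsto p_\theta(\xi)$ is smooth. The proof should be phrased with this one set $N$, so that the mean value theorem and the pointwise limits hold simultaneously for all $\theta$. Second, at boundary points of $\Theta$ the derivatives are understood as one-sided. This is harmless, because the mean value argument only uses increments that remain in $\Theta$.
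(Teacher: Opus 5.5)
Your proposal is correct and follows essentially the same argument as the paper. You define the candidate derivatives $\int_{\mathcal X}F(\xi)D_\theta^\alpha p_\theta(\xi)\,d\xi$, pass one coordinate derivative at a time under the integral via the mean value theorem and the dominating function $G_{\alpha+e_k}$, and get continuity by dominated convergence with $G_\alpha$. The only difference is at the boundary: you use one-sided increments there, whereas the paper differentiates at interior points and then relies on continuity of these integrals up to the boundary to conclude $J\in C^\beta(\Theta)$.
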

\begin{proof}{Proof}
For each multi-index \(\alpha\) with \(|\alpha|\le\beta\), define
\[
H_\alpha(\theta)
=
\int_{\mathcal X}F(\xi)D_\theta^\alpha p_\theta(\xi)\,d\xi.
\]
The bound in \eqref{eq:integrable_bound} ensures that this integral
is finite. Continuity of \(D_\theta^\alpha p_\theta(\xi)\) and
dominated convergence imply that \(H_\alpha\) is continuous
on \(\Theta\).

\CompactDisplaySpacing
For \(|\alpha|<\beta\), let \(e_j\) be the \(j\)th coordinate
vector and take \(\theta\) in the interior of \(\Theta\).
By the mean value theorem and \eqref{eq:integrable_bound},
for sufficiently small nonzero \(h\),
\[
\left|
F(\xi)\frac{D_\theta^\alpha p_{\theta+he_j}(\xi)
             -D_\theta^\alpha p_\theta(\xi)}{h}
\right|
\le G_{\alpha+e_j}(\xi).
\]
Since the bound is integrable, dominated convergence gives
\[
\frac{\partial H_\alpha}{\partial\theta_j}(\theta)
=
\int_{\mathcal X}F(\xi)
D_\theta^{\alpha+e_j}p_\theta(\xi)\,d\xi
=
H_{\alpha+e_j}(\theta).
\]
Starting from \(H_0=J\), repeated differentiation establishes
the claimed formula through order \(\beta\). Continuity of
the functions \(H_\alpha\) on \(\Theta\) then yields
\(J\in C^\beta(\Theta)\).
\hfill\Halmos
\end{proof}

Proposition~\ref{prop:smooth} shows that the expectation function can inherit smoothness from the probability model even when the sample performance is discontinuous. If the assumptions of Proposition~\ref{prop:smooth} hold for every integer \(\beta\ge1\), then \(J\in C^\infty(\Theta)\). This observation motivates the use of smooth surrogates for estimating gradients and higher-order derivatives in simulation.

\subsection{A General Convergence Principle}
\label{subsec:general-ltd-rate}

Having examined the smoothness of the expectation function,
we now study how surface-fitting accuracy translates into
derivative-estimation accuracy. Let \(\widehat J_n\) be a
surrogate learned from \(n\) simulation observations. Our
objective is to establish convergence guarantees for the
LTD estimator \(D^{(r)}\widehat J_n\) from the corresponding
guarantees for \(\widehat J_n\).

\begingroup
\CompactDisplaySpacing
We consider two error criteria. The \(L_2\) norm measures
overall error across the parameter domain,
\[
\|f\|_{L_2(\Theta)}
=
\left(\int_\Theta |f(\theta)|^2\,d\theta\right)^{1/2},
\]
while the \(L_\infty\) norm measures the largest absolute error,
\[
\|f\|_{L_\infty(\Theta)}
=
\sup_{\theta\in\Theta}|f(\theta)|.
\]
An \(L_2\) guarantee therefore describes integrated accuracy,
whereas an \(L_\infty\) guarantee controls error uniformly
over the domain and, consequently, at every fixed parameter
value. For the collection \(D^{(r)}f\), define
\[
\|D^{(r)}f\|_{L_p(\Theta)}
=
\max_{|\alpha|=r}\|D^\alpha f\|_{L_p(\Theta)},
\qquad p\in\{2,\infty\}.
\]
Thus, the derivative error is measured by the largest
of the corresponding norms among all partial derivatives
of order \(r\).
\par\endgroup

\subsubsection{Interpolation Inequalities.}

%
%
%

The Gagliardo--Nirenberg interpolation inequalities connect the size of a function's derivatives to the size of the function itself and its higher-order smoothness. We use the following three special forms \citep{gagliardo1959,nirenberg1959}. 
The general inequality formulation are given in Appendix~\ref{app:gn}.

\begin{lemma}[Gagliardo--Nirenberg interpolation inequalities]
\label{lem:GN}
Let \(\Theta\subset\mathbb R^d\) be a compact hyperrectangle
with nonempty interior, and let \(r\) and \(\beta\) be integers
satisfying \(0\le r<\beta\). 
For every \(f\) for which the norms on the right-hand side below
are finite, the following inequalities hold:
\begin{eqnarray}
&&\|D^{(r)}f\|_{L_p(\Theta)}
\le C\left(
\|D^{(\beta)}f\|_{L_p(\Theta)}^{r/\beta}
\|f\|_{L_p(\Theta)}^{(\beta-r)/\beta}
+\|f\|_{L_p(\Theta)}\right),
\qquad p\in\{2,\infty\},
 \label{eq:GN_p}\\[8pt]
&&\|D^{(r)}f\|_{L_\infty(\Theta)}
\le C\left(
\|D^{(\beta)}f\|_{L_\infty(\Theta)}^{(2r+d)/(2\beta+d)}
\|f\|_{L_2(\Theta)}^{2(\beta-r)/(2\beta+d)}
+\|f\|_{L_2(\Theta)}\right),
 \label{eq:GN_2_inf}\\[8pt]
&& \|D^{(r)}f\|_{L_\infty(\Theta)}
\le C\left(
\|D^{(\beta)}f\|_{L_2(\Theta)}^{(r+d/2)/\beta}
\|f\|_{L_2(\Theta)}^{(\beta-r-d/2)/\beta}
+\|f\|_{L_2(\Theta)}\right),
\label{eq:GN_2_inf_L2}
\end{eqnarray}
where the last inequality holds provided that \(\beta>r+d/2\).
\end{lemma}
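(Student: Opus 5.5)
Our plan is to obtain all three inequalities as special cases of the general Gagliardo--Nirenberg inequality on bounded Lipschitz domains (stated in Appendix~\ref{app:gn}), which on a domain reads
\[
\|D^{(r)}f\|_{L_q(\Theta)}\le C\left(\|D^{(\beta)}f\|_{L_p(\Theta)}^{a}\|f\|_{L_s(\Theta)}^{1-a}+\|f\|_{L_s(\Theta)}\right),
\qquad \frac1q=\frac rd+a\Big(\frac1p-\frac\beta d\Big)+\frac{1-a}{s},\quad \frac r\beta\le a\le1 .
\]
The additive lower-order term $\|f\|_{L_s}$ is what accounts for the bounded domain: polynomials of degree below $\beta$ have $D^{(\beta)}f=0$ but nonzero lower derivatives. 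Since a compact hyperrectangle with nonempty interior is a Lipschitz domain (indeed it is affinely equivalent to a cube), the general result applies to it. We will therefore only need to verify the scaling relation and the admissibility of $a$ in each case. For \eqref{eq:GN_p}, take $q=p=s$; the relation becomes $0=r/d-a\beta/d$, so $a=r/\beta$, which is exactly the stated exponent. The only caveat is the exceptional case $p=\infty$, $\beta-r-d/p$ a nonnegative integer, where GN may fail for $a=1$. Here $a=r/\beta<1$, so this case does not arise.

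For \eqref{eq:GN_2_inf}, we take $q=\infty$, $p=\infty$, $s=2$. The relation $0=r/d-a\beta/d+(1-a)/2$ gives $a=(2r+d)/(2\beta+d)$, with $1-a=2(\beta-r)/(2\beta+d)$. Since $r<\beta$ we have $r/\beta\le a<1$, so $a$ is admissible. For \eqref{eq:GN_2_inf_L2}, we take $q=\infty$, $p=2$, $s=2$. Then $0=r/d+a(1/2-\beta/d)+(1-a)/2$, i.e.\ $a\beta/d=r/d+1/2$, so $a=(r+d/2)/\beta$. The condition $\beta>r+d/2$ is exactly $a<1$, and $a\ge r/\beta$ holds trivially.

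The main obstacle is to make sure the cited general inequality is valid on bounded domains with the additive term, and not only on $\mathbb R^d$ or for a range of $a$ excluding endpoints with $q=\infty$. Our first step will be to reduce to $\mathbb R^d$ by a bounded extension operator. Stein's extension for Lipschitz domains, or simply reflection across the faces of the rectangle, gives $\|Ef\|_{W^{\beta,p}(\mathbb R^d)}\le C\|f\|_{W^{\beta,p}(\Theta)}$ and $\|Ef\|_{L_s}\le C\|f\|_{L_s}$ simultaneously. We will then apply the $\mathbb R^d$ inequality to $Ef$. Next we bound the intermediate derivatives in $\|Ef\|_{W^{\beta,p}}$ by $\|D^{(\beta)}f\|+\|f\|$ using \eqref{eq:GN_p} itself, or an Ehrling-type inequality. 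Finally, Young's inequality absorbs the resulting cross terms into the two displayed terms. The $L_\infty$ endpoint cases are covered by Nirenberg's original statement, which allows $q=\infty$ away from the exceptional integer case already excluded above.

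Alternatively, if that bookkeeping becomes delicate, we can use a scaling argument on cubes. The inequality holds on a unit cube with the additive term, by compactness of $W^{\beta,p}$ into lower-order spaces and the fact that polynomials form a finite-dimensional space. An affine map then transfers it to $\Theta$, with constants depending only on $\Theta$, $r$, $\beta$, $d$.
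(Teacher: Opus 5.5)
Your proposal matches the paper's proof. The paper likewise cites the bounded-domain Gagliardo--Nirenberg inequality (its Lemma~\ref{lem:GN-general}, stated for $a\in[r/\beta,1)$) and obtains \eqref{eq:GN_p}, \eqref{eq:GN_2_inf}, and \eqref{eq:GN_2_inf_L2} by solving the balance relation exactly as you do, giving $a=r/\beta$, $(2r+d)/(2\beta+d)$, and $(r+d/2)/\beta$; your extension and scaling discussion goes beyond what the paper supplies, and one harmless correction is that Nirenberg's exceptional case requires the exponent on the top-order derivative norm to lie strictly between $1$ and $\infty$ (not equal $\infty$), which is irrelevant here since $a<1$ in every case.
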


The first inequality controls the derivatives in the same
norm as the function, either \(L_2\) or \(L_\infty\).
The latter two control the derivatives in \(L_\infty\) using
the \(L_2\) norm of the function, with derivatives of order
\(\beta\) measured in \(L_\infty\) and \(L_2\), respectively.
Applied to the fitting error \(\widehat J_n-J\), these inequalities
connect surface-fitting accuracy to LTD estimation accuracy.

\subsubsection{Transferring Surface Rates to Derivative Rates.}

We first consider the case in which surface and derivative
errors are measured using the same criterion.

\begin{theorem}
\label{thm:ltd-general}
Let \(p\in\{2,\infty\}\), and let \(r\) and \(\beta\) be
integers satisfying \(0\le r<\beta\). Suppose that
\(J\in C^\beta(\Theta)\), and assume that
\[
\|D^{(\beta)}\widehat J_n\|_{L_p(\Theta)}=O_p(1),
\qquad
\|\widehat J_n-J\|_{L_p(\Theta)}=O_p(a_n),
\]
where \(a_n\to0\) is a positive deterministic sequence.
Then
\[
\|D^{(r)}\widehat J_n-D^{(r)}J\|_{L_p(\Theta)}
=
O_p\!\left(a_n^{(\beta-r)/\beta}\right).
\]
\end{theorem}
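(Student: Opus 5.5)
Apply the same-norm Gagliardo--Nirenberg inequality \eqref{eq:GN_p} from Lemma~\ref{lem:GN} to the fitting error \(f=\widehat J_n-J\). Derivatives are linear, so \(D^{(r)}f=D^{(r)}\widehat J_n-D^{(r)}J\), and the left-hand side of \eqref{eq:GN_p} is exactly the quantity to bound. The right-hand side involves only \(\|f\|_{L_p(\Theta)}\), which is \(O_p(a_n)\) by assumption, and \(\|D^{(\beta)}f\|_{L_p(\Theta)}\), which I will show is \(O_p(1)\).

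The first step is to bound the top-order derivative of the error. By the triangle inequality, applied to each multi-index and then maximized over \(|\alpha|=\beta\),
\[
\|D^{(\beta)}f\|_{L_p(\Theta)}\le \|D^{(\beta)}\widehat J_n\|_{L_p(\Theta)}+\|D^{(\beta)}J\|_{L_p(\Theta)}.
\]
The first term is \(O_p(1)\) by hypothesis. Since \(J\in C^\beta(\Theta)\) and \(\Theta\) is compact, each \(D^\alpha J\) with \(|\alpha|=\beta\) is continuous, hence bounded. Its \(L_\infty\) norm is therefore finite, and so is its \(L_2\) norm, because \(\Theta\) has finite Lebesgue measure. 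It follows that \(\|D^{(\beta)}f\|_{L_p(\Theta)}=O_p(1)\).

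One technical point needs care: Lemma~\ref{lem:GN} requires the right-hand norms to be finite. The \(O_p(1)\) hypothesis means they are finite with probability tending to one, so I work on that event; its complement has vanishing probability and does not affect an \(O_p\) statement. I also take \(\widehat J_n\) to be \(\beta\)-times differentiable, since this is implicit in the hypothesis on \(D^{(\beta)}\widehat J_n\).

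On this event, \eqref{eq:GN_p} gives
\[
\|D^{(r)}f\|_{L_p}\le C\Bigl(O_p(1)^{r/\beta}\,O_p(a_n)^{(\beta-r)/\beta}+O_p(a_n)\Bigr)=O_p\bigl(a_n^{(\beta-r)/\beta}\bigr)+O_p(a_n).
\]
Because \(a_n\to0\) and \(0<(\beta-r)/\beta\le1\), we have \(a_n\le a_n^{(\beta-r)/\beta}\) for all large \(n\), so the second term is dominated by the first. The products and powers of \(O_p\) terms are handled by the standard calculus: \(O_p(x_n)O_p(y_n)=O_p(x_ny_n)\), and \(O_p(x_n)^s=O_p(x_n^s)\) for fixed \(s>0\), which holds because \(t\mapsto t^s\) is monotone. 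This yields the claimed rate. The case \(r=0\) is trivial and is covered by the same display.

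I do not expect a serious obstacle here; the argument rests almost entirely on Lemma~\ref{lem:GN}. The only delicate step is the bookkeeping that turns the stochastic bounds into a deterministic application of the inequality. This is done cleanly by fixing \(\epsilon>0\) and choosing \(M\) such that, with probability at least \(1-\epsilon\), both \(\|D^{(\beta)}\widehat J_n\|_{L_p}\le M\) and \(\|f\|_{L_p}\le Ma_n\). On that event the inequality gives \(\|D^{(r)}f\|_{L_p}\le C'(M)\,a_n^{(\beta-r)/\beta}\), which is precisely the definition of the \(O_p\) rate.
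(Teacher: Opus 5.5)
Your proposal is correct and follows the paper's proof essentially line for line: set $e_n=\widehat J_n-J$, bound $\|D^{(\beta)}e_n\|_{L_p(\Theta)}$ by the triangle inequality using $J\in C^\beta(\Theta)$, apply \eqref{eq:GN_p}, and absorb the $O_p(a_n)$ term into $O_p\bigl(a_n^{(\beta-r)/\beta}\bigr)$. Your explicit $\epsilon$--$M$ bookkeeping and your remark about working on the high-probability event where the norms are finite spell out details the paper leaves implicit, but the argument is the same.
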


\begin{proof}{Proof}
Let \(e_n=\widehat J_n-J\). Since \(J\in C^\beta(\Theta)\),
\[
\|D^{(\beta)}e_n\|_{L_p(\Theta)}
\le
\|D^{(\beta)}\widehat J_n\|_{L_p(\Theta)}
+\|D^{(\beta)}J\|_{L_p(\Theta)}
=O_p(1).
\]
Also, \(\|e_n\|_{L_p(\Theta)}=O_p(a_n)\) by assumption.
Applying \eqref{eq:GN_p} to \(e_n\) gives
\[
\|D^{(r)}e_n\|_{L_p(\Theta)}
\le C\left(
\|D^{(\beta)}e_n\|_{L_p(\Theta)}^{r/\beta}
\|e_n\|_{L_p(\Theta)}^{(\beta-r)/\beta}
+\|e_n\|_{L_p(\Theta)}\right)=O_p\!\left(a_n^{(\beta-r)/\beta}+a_n\right)
=O_p\!\left(a_n^{(\beta-r)/\beta}\right).
\]
This concludes the proof.\hfill\Halmos
\end{proof}

\begin{remark}{\it
\citet{yatracos1989} established estimator-independent \(L_p\)-to-\(L_p\) derivative-rate transfer under H\"older-type regularity.
However, it is not generally clear how to verify the required regularity condition for estimators produced by modern machine learning methods.
Recent studies typically establish derivative convergence rates through method-specific analyses rather than directly applying Yatracos's transfer result \citep{fischer2020,liu2023derivatives}.

Instead, Theorem~\ref{thm:ltd-general} uses Lemma~\ref{lem:GN} under the condition \(\|D^{(\beta)}\widehat J_n\|_{L_p(\Theta)}=O_p(1)\).
This derivative-norm condition aligns naturally with the RKHS-norm controls available for KRR and multiple kernel learning (MKL); Appendix~\ref{app:derivative-bounds} provides method-specific justifications for the representative surrogate learning methods considered here.
Then, Theorem~\ref{thm:ltd-general} unifies existing derivative rates for KR, LPR, and KRR \citep{hansen2008,neumeyer2010,fischer2020} (Section~\ref{subsubsec:ltd-existing-rates}) and enables new guarantees for MKL and smooth neural networks (Section~\ref{subsubsec:ltd-further-rates}).}
\end{remark}


Theorem~\ref{thm:ltd-general} separates the learning problem
from the effect of differentiation. The learning method
determines the surface rate \(a_n\); once the derivative bound
\(\|D^{(\beta)}\widehat J_n\|_{L_p(\Theta)}=O_p(1)\) is established, the theorem gives the
derivative rate. For example, a surface rate
\(O_p(n^{-\gamma})\) yields a derivative rate
\(
O_p\!\left(n^{-\gamma(\beta-r)/\beta}\right).
\)
The factor \((\beta-r)/\beta\) describes the loss in the
guaranteed exponent associated with differentiation.
For a given surface rate, higher derivative orders incur
a greater loss, while greater smoothness reduces it.

In many applications, the available surface-fitting rate
is stated in the \(L_2\) norm, while the derivative is needed
at a particular parameter value. Inequalities~\eqref{eq:GN_2_inf}
and~\eqref{eq:GN_2_inf_L2} yield two \(L_\infty\) derivative
bounds, which also control the error at that point.

%

\begin{corollary}
\label{cor:l2-linf}
Let \(r\) and \(\beta\) be integers satisfying \(0\le r<\beta\).
Suppose that \(J\in C^\beta(\Theta)\) and
\[
\|\widehat J_n-J\|_{L_2(\Theta)}=O_p(a_n),
\]
where \(a_n\to0\) is a positive deterministic sequence.
Then the following statements hold.
\begin{enumerate}
\item[(i)] If \(\|D^{(\beta)}\widehat J_n\|_{L_\infty(\Theta)}=O_p(1)\), then
\[
\|D^{(r)}\widehat J_n-D^{(r)}J\|_{L_\infty(\Theta)}
=O_p\!\left(a_n^{2(\beta-r)/(2\beta+d)}\right).
\]
\item[(ii)] If \(\|D^{(\beta)}\widehat J_n\|_{L_2(\Theta)}=O_p(1)\) and \(\beta>r+d/2\), then
\[
\|D^{(r)}\widehat J_n-D^{(r)}J\|_{L_\infty(\Theta)}
=O_p\!\left(a_n^{(\beta-r-d/2)/\beta}\right).
\]
\end{enumerate}
In each case, the same rate holds for
\(|D^\alpha\widehat J_n(\theta_0)-D^\alpha J(\theta_0)|\)
at every fixed \(\theta_0\in\Theta\) and for every
multi-index \(\alpha\) with \(|\alpha|=r\).
\end{corollary}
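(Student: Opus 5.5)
The argument mirrors the proof of Theorem~\ref{thm:ltd-general}, with Lemma~\ref{lem:GN} used in its mixed-norm forms \eqref{eq:GN_2_inf} and \eqref{eq:GN_2_inf_L2} in place of \eqref{eq:GN_p}. Set \(e_n=\widehat J_n-J\). Since \(J\in C^\beta(\Theta)\) and \(\Theta\) is compact, every \(D^\alpha J\) with \(|\alpha|=\beta\) is continuous and hence bounded. Therefore \(\|D^{(\beta)}J\|_{L_\infty(\Theta)}<\infty\). Because \(\Theta\) has finite Lebesgue measure, we also have \(\|D^{(\beta)}J\|_{L_2(\Theta)}\le|\Theta|^{1/2}\|D^{(\beta)}J\|_{L_\infty(\Theta)}<\infty\). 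By the triangle inequality applied to each partial derivative, the hypothesis in (i) gives \(\|D^{(\beta)}e_n\|_{L_\infty(\Theta)}=O_p(1)\), and the hypothesis in (ii) gives \(\|D^{(\beta)}e_n\|_{L_2(\Theta)}=O_p(1)\). In both cases \(\|e_n\|_{L_2(\Theta)}=O_p(a_n)\) by assumption.

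For (i), apply \eqref{eq:GN_2_inf} to \(e_n\). The norms on its right-hand side are finite with probability tending to one, which is all an \(O_p\) statement needs. This yields
\[
\|D^{(r)}e_n\|_{L_\infty(\Theta)}\le C\Bigl(O_p(1)\cdot O_p(a_n)^{2(\beta-r)/(2\beta+d)}+O_p(a_n)\Bigr).
\]
Since \(r<\beta\), the exponent \(2(\beta-r)/(2\beta+d)\) lies in \((0,1)\). Because \(a_n\to0\), it follows that \(a_n\le a_n^{2(\beta-r)/(2\beta+d)}\) eventually, so the additive \(a_n\) term is absorbed and the bound is \(O_p(a_n^{2(\beta-r)/(2\beta+d)})\).

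For (ii), apply \eqref{eq:GN_2_inf_L2} in the same way; it is valid because \(\beta>r+d/2\). The resulting exponent \((\beta-r-d/2)/\beta\) lies in \((0,1)\), so again the \(a_n\) term is dominated and the rate is \(O_p(a_n^{(\beta-r-d/2)/\beta})\).

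The pointwise statement needs no further argument. For any fixed \(\theta_0\in\Theta\) and any \(|\alpha|=r\), we have \(|D^\alpha e_n(\theta_0)|\le\|D^\alpha e_n\|_{L_\infty(\Theta)}\le\|D^{(r)}e_n\|_{L_\infty(\Theta)}\) by definition of the norm of the collection \(D^{(r)}\).

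The only step needing care is ensuring that Lemma~\ref{lem:GN} may be applied to \(e_n\) at all. This requires \(\widehat J_n\) to be \(\beta\)-times differentiable with derivatives finite in the relevant norm, which the \(O_p(1)\) hypothesis supplies on an event of probability tending to one. It also requires \(\|D^{(r)}e_n\|_{L_\infty}\) to be understood on that event. I would handle this by working on the event \(\{\|D^{(\beta)}e_n\|\le M,\ \|e_n\|_{L_2}\le Ma_n\}\), whose probability exceeds \(1-\varepsilon\) for suitable \(M\) and large \(n\). On this event the deterministic inequalities give the bound with constant \(C(M^{\cdot}+M)\), and this is exactly the definition of \(O_p\).
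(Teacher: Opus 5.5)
Your proposal is correct and follows essentially the same route as the paper. You use the triangle inequality to get \(\|D^{(\beta)}e_n\|=O_p(1)\) in the appropriate norm, apply \eqref{eq:GN_2_inf} or \eqref{eq:GN_2_inf_L2} to \(e_n\), absorb the lower-order \(a_n\) term, and read off the pointwise claim from the uniform bound; your explicit event-based handling of the \(O_p\) bookkeeping is slightly more careful than the paper's proof, but it is the same argument.
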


\begin{proof}{Proof}
Let \(e_n=\widehat J_n-J\), so that
\(\|e_n\|_{L_2(\Theta)}=O_p(a_n)\).

In case (i), \(\|D^{(\beta)}\widehat J_n\|_{L_\infty(\Theta)}=O_p(1)\) and \(J\in C^\beta(\Theta)\) give \(\|D^{(\beta)}e_n\|_{L_\infty(\Theta)}=O_p(1).\)
Applying \eqref{eq:GN_2_inf} yields
\[
\|D^{(r)}e_n\|_{L_\infty(\Theta)}
=O_p\!\left(a_n^{2(\beta-r)/(2\beta+d)}+a_n\right)
=O_p\!\left(a_n^{2(\beta-r)/(2\beta+d)}\right).
\]

In case (ii), \(\|D^{(\beta)}\widehat J_n\|_{L_2(\Theta)}=O_p(1)\) and \(J\in C^\beta(\Theta)\) give \(\|D^{(\beta)}e_n\|_{L_2(\Theta)}=O_p(1).\)
Applying \eqref{eq:GN_2_inf_L2} yields
\[
\|D^{(r)}e_n\|_{L_\infty(\Theta)}
=O_p\!\left(a_n^{(\beta-r-d/2)/\beta}+a_n\right)
=O_p\!\left(a_n^{(\beta-r-d/2)/\beta}\right).
\]
The pointwise bounds follow from the uniform bounds.\hfill\Halmos
\end{proof}

\begin{remark}{\it
The weighted representation developed earlier explains
what a broad class of LTD estimators differentiates.
The convergence results above do not require that
representation: they depend only on the accuracy and
smoothness of the fitted surrogate. They therefore apply
to both response-independent and response-adaptive methods,
as well as to differentiable surrogates outside the
weighted framework, whenever the stated conditions hold.}
\end{remark}

\subsection{Representative LTD Estimators}
\label{subsec:ltd-examples}

\begingroup
\CompactDisplaySpacing
\setlength{\jot}{2pt}

We now illustrate the general convergence theory using five representative surrogate-learning methods. We first show how the transfer principle (i.e., Theorem~\ref{thm:ltd-general} and Corollary~\ref{cor:l2-linf}) recovers established derivative-rate exponents for KR, LPR, and KRR, whose weighted representations were introduced in Section~\ref{sec:ltd}. Next, we derive derivative convergence guarantees for multiple kernel learning and smooth neural networks fitted by least squares. 
These two examples extend the analysis to learning procedures that adapt the kernel combination or hidden features to the observed responses. 
Together, the five methods illustrate how surface-fitting rates and suitable smoothness control yield convergence guarantees for LTD estimators.

\subsubsection{Recovering Existing Convergence Results.}
\label{subsubsec:ltd-existing-rates}

We first show that our framework recovers derivative convergence rates established in the literature for KR, LPR, and KRR by applying the transfer principle to their surface-fitting rates.
Throughout this and the next subsection, we impose the smoothness and boundedness conditions required by the transfer principle.
We also assume that the design density is bounded above and away from zero on \(\Theta\), so that the distribution-weighted and Lebesgue \(L_2\) norms are equivalent.
Method-specific conditions for the surface-fitting rates are given in Appendix~\ref{app:surrogate}. Here \(d\) is the dimension of \(\theta\), \(n\) is the number of training observations, and \(0\le r<\beta\).

\paragraph{Kernel regression.}
KR with kernel order \(q\ge\beta\) and bandwidths chosen for the respective error criteria achieves the following surface-fitting rates under the conditions in Appendix~\ref{app:kr}:
 \begin{equation}\label{eqn:KR_rate}
 \|\widehat J_n-J\|_{L_2(\Theta)}
 =O_p\!\left(n^{-\beta/(2\beta+d)}\right),
 \qquad
 \|\widehat J_n-J\|_{L_\infty(\Theta)}
 =O_p\!\left[
 \left(\frac{\log n}{n}\right)^{\beta/(2\beta+d)}
 \right].
 \end{equation}
 
 At these respective bandwidths, \(\|D^{(\beta)}\widehat J_n\|_{L_2(\Theta)}=O_p(1)\) and \(\|D^{(\beta)}\widehat J_n\|_{L_\infty(\Theta)}=O_p(1)\) also hold (see Appendix~\ref{app:derivative-kr}).
Applying Theorem~\ref{thm:ltd-general} gives
\begin{align*}
\|D^{(r)}\widehat J_n-D^{(r)}J\|_{L_2(\Theta)}
&=O_p\!\left(n^{-(\beta-r)/(2\beta+d)}\right),\\
\|D^{(r)}\widehat J_n-D^{(r)}J\|_{L_\infty(\Theta)}
&=O_p\!\left[
\left(\frac{\log n}{n}\right)^{(\beta-r)/(2\beta+d)}
\right].
\end{align*}
The uniform derivative rate agrees with the rate established by \citet{hansen2008}.
Moreover, both the \(L_2\) and \(L_\infty\) derivative rates match the minimax rates established by \citet{stone1982optimal}.

\paragraph{Local polynomial regression.}
LPR with polynomial degree \(p\ge\beta-1\) and bandwidths chosen for the respective error criteria attains the same surface-fitting rates as \eqref{eqn:KR_rate} under the conditions in Appendix~\ref{app:lpr}.

At these respective bandwidths, \(\|D^{(\beta)}\widehat J_n\|_{L_2(\Theta)}=O_p(1)\) and \(\|D^{(\beta)}\widehat J_n\|_{L_\infty(\Theta)}=O_p(1)\) also hold (see Appendix~\ref{app:derivative-lpr}).
Applying Theorem~\ref{thm:ltd-general} gives
\begin{align*}
\|D^{(r)}\widehat J_n-D^{(r)}J\|_{L_2(\Theta)}
&=O_p\!\left(n^{-(\beta-r)/(2\beta+d)}\right),\\
\|D^{(r)}\widehat J_n-D^{(r)}J\|_{L_\infty(\Theta)}
&=O_p\!\left[
\left(\frac{\log n}{n}\right)^{(\beta-r)/(2\beta+d)}
\right].
\end{align*}
The uniform rate agrees with the derivative bounds of \citet[Lemma A.1]{neumeyer2010}.
 Again, both the \(L_2\) and \(L_\infty\) rates attain the  minimax rates of \citet{stone1982optimal}.
Note that here LTD differentiates the fitted intercept as a function of the query point, which need not coincide with the conventional derivative estimator obtained from the local polynomial coefficients \citep{fan1996,racine2016}.

\paragraph{Kernel ridge regression.}
For \(\beta>d/2\), consider KRR with a Mat\'ern kernel.
Suitable choices of kernel smoothness and regularization yield the following \(L_2\) surface-fitting rate under the conditions in Appendix~\ref{app:krr}:
\[
\|\widehat J_n-J\|_{L_2(\Theta)}
=O_p\!\left(n^{-\beta/(2\beta+d)}\right).
\]

With the same parameter choices, \(\|D^{(\beta)}\widehat J_n\|_{L_2(\Theta)}=O_p(1)\) also holds (see Appendix~\ref{app:derivative-krr}).
Applying Theorem~\ref{thm:ltd-general} gives the \(L_2\) rate below; when \(\beta>r+d/2\), Corollary~\ref{cor:l2-linf}(ii) also gives the \(L_\infty\) rate:
\begin{align*}
\|D^{(r)}\widehat J_n-D^{(r)}J\|_{L_2(\Theta)}
&=O_p\!\left(n^{-(\beta-r)/(2\beta+d)}\right),\\
\|D^{(r)}\widehat J_n-D^{(r)}J\|_{L_\infty(\Theta)}
&=O_p\!\left(n^{-(\beta-r-d/2)/(2\beta+d)}\right).
\end{align*}
The \(L_2\) derivative rate agrees with the rate established by \citet{fischer2020}.
For \(d=1\), \citet{liu2023derivatives} obtain sharper uniform surface-fitting and derivative rates under their specific assumptions.
 
\vspace{6pt}

These examples demonstrate that, under the stated surface-fitting and fitted-smoothness assumptions, a common surface-to-derivative argument recovers established convergence-rate exponents across distinct learning methods, including higher-order derivatives within the applicable smoothness ranges.

\subsubsection{Deriving Further Convergence Guarantees.}
\label{subsubsec:ltd-further-rates}

The framework also provides a route to derivative-rate guarantees that are not supplied by the available surface-fitting analyses. We illustrate this use with multiple kernel learning and smooth neural networks. Differentiating their fitted surfaces is a natural application of these learning methods; the purpose here is to establish convergence guarantees for the resulting estimators. Both examples can admit weighted representations with response-adaptive weights under the conditions described in Appendix~\ref{app:surrogate}. Their rate analysis follows the same surface-to-derivative argument used above.

\paragraph{Multiple kernel learning (MKL).}
 MKL constructs a surrogate by learning a combination of \(M\) candidate kernels \(k_m\), \(m\in\{1,\ldots,M\}\), together with the regression fit, allowing the kernel choice to adapt to the observed responses. We consider least-squares MKL with a nonnegative kernel combination whose weights sum to one. For the learned combination held fixed, the fitted surface has the KRR form and therefore admits response-adaptive observation weights; see Appendix~\ref{app:mkl}.

The candidate kernels depend on coordinate subsets of different dimensions \(1\le d_m\le d\). Let \(d_*:=\max_{1\le m\le M}d_m\).
For fixed \(M\), applying Theorem~1 of \citet{suzuki2011mkl} with suitably chosen candidate kernels and regularization yields the following \(L_2\) surface-fitting rate under the conditions in Appendix~\ref{app:mkl}:
\[
\|\widehat J_n-J\|_{L_2(\Theta)}
=O_p\!\left(n^{-\beta/(2\beta+d_*)}\right).
\]

With the same parameter choices, 
\(\|D^{(\beta)}\widehat J_n\|_{L_2(\Theta)}=O_p(1)\) also holds
(see Appendix~\ref{app:derivative-mkl}).
Applying Theorem~\ref{thm:ltd-general} gives the \(L_2\) rate below;
when \(\beta>r+d/2\), Corollary~\ref{cor:l2-linf}(ii) also gives the
\(L_\infty\) rate:
\begin{align*}
\|D^{(r)}\widehat J_n-D^{(r)}J\|_{L_2(\Theta)}
&=O_p\!\left(n^{-(\beta-r)/(2\beta+d_*)}\right),\\
\|D^{(r)}\widehat J_n-D^{(r)}J\|_{L_\infty(\Theta)}
&=O_p\!\left(n^{-(\beta-r-d/2)/(2\beta+d_*)}\right).
\end{align*}
 When \(d_*<d\), the component structure mitigates the curse of dimensionality; the uniform derivative bound still involves \(d\) through the interpolation step.

\paragraph{Smooth neural networks (NN).}
Consider a neural-network surrogate with a smooth nonpolynomial
activation, such as sigmoid or hyperbolic tangent (\(\tanh\)),
and a fixed number of hidden layers. Following the approximation
construction of \citet{park2026}, one hidden layer has width
of order \(N^d\), while the remaining hidden layers have
width at most two. The network weights are fitted by least
squares subject to a fixed bound on the fitted function and
its derivatives through order \(\beta\), where the bound is  chosen sufficiently large.
Under the additional output-layer conditions in
Appendix~\ref{app:nn-nonparametric}, the fitted network also
admits a response-adaptive weighted representation; those
conditions are not needed for the rate argument.

Park's theorem establishes an approximation error of order
\(N^{-\beta}\), rather than a statistical rate for a network fitted from observations. In
Proposition~\ref{prop:park-nn-surface} of
Appendix~\ref{app:nn-nonparametric}, we combine this
approximation result with a least-squares oracle inequality
\citep{koltchinskii2011} to establish
\[
\|\widehat J_n-J\|_{L_2(\Theta)}^2
=
O_p\!\left(
N^{-2\beta}+n^{-2\beta/(2\beta+d)}
\right).
\]
Choosing \(N\asymp n^{1/(2\beta+d)}\) gives
\[
\|\widehat J_n-J\|_{L_2(\Theta)}
=
O_p\!\left(n^{-\beta/(2\beta+d)}\right).
\]
The fitting constraint gives
\(\|D^{(\beta)}\widehat J_n\|_{L_2(\Theta)}=O(1)\) and \(\|D^{(\beta)}\widehat J_n\|_{L_\infty(\Theta)}=O(1)\).
Applying
Theorem~\ref{thm:ltd-general} and
Corollary~\ref{cor:l2-linf} (i) yields
\begin{align*}
\|D^{(r)}\widehat J_n-D^{(r)}J\|_{L_2(\Theta)}
&=
O_p\!\left(n^{-(\beta-r)/(2\beta+d)}\right),\\
\|D^{(r)}\widehat J_n-D^{(r)}J\|_{L_\infty(\Theta)}
&=
O_p\!\left(
n^{-2\beta(\beta-r)/(2\beta+d)^2}
\right).
\end{align*}
Thus, Park's approximation result supports a surface-fitting
guarantee for the trained network, which in turn yields
gradient and higher-order derivative guarantees through
the LTD framework.

\vspace{6pt}

Together, these two examples demonstrate the usefulness of the framework beyond recovering existing results. Once a surface-fitting rate and the required smoothness control are available, gradient and higher-order derivative guarantees follow without a separate statistical analysis for each derivative order.

\subsubsection{Smoothness, Dimensionality, and Statistical Efficiency.}

Under their respective assumptions, all five methods considered in Sections~\ref{subsubsec:ltd-existing-rates} and~\ref{subsubsec:ltd-further-rates} admit the common \(L_2\) derivative bound
\(
O_p\!\left(n^{-(\beta-r)/(2\beta+d)}\right).
\)
For MKL with \(d_*<d\), the component structure gives the faster bound displayed above.
The exponent
\(
{(\beta-r)}/{(2\beta+d)}
\)
describes the joint effects of smoothness, dimension, and derivative order: increasing \(d\) slows convergence, increasing \(\beta\) improves it, and increasing \(r\) reduces the rate. 

High smoothness is especially relevant in stochastic simulation. Proposition~\ref{prop:smooth} explains why a discontinuous sample performance function may have a smooth expectation. 
 When $J$ is sufficiently smooth, and the learning procedure supports the corresponding approximation and regularity conditions at arbitrarily large fixed smoothness orders, the displayed derivative-rate exponent approaches \(1/2\) for fixed \(r\) and \(d\), corresponding to the canonical Monte Carlo rate achieved by classical pathwise method and likelihood ratio method \citep{glasserman2004}.

The choices above, such as kernel order and bandwidth, local polynomial degree, kernel smoothness and regularization, and neural-network width and smoothness constraints, describe asymptotic regimes rather than prescriptions for implementation. In practice, target smoothness is unknown, and finite-sample performance depends on variance, numerical stability, and computational cost. Simpler models or response-based tuning may therefore be preferable. The unified framework provides a systematic way to translate surface-fitting guarantees into gradient and higher-order derivative guarantees whenever the required conditions hold, while retaining the practical advantages of LTD: learning from simulation outputs, reusing an offline surrogate, and evaluating derivatives throughout the parameter domain without further simulation.

\par
\endgroup

\section{Numerical Examples}
\label{sec:numerical}

We examine the empirical performance of LTD estimators through
two numerical examples with complementary roles. The first,
a portfolio of geometric Asian options, provides a benchmark
with closed-form prices and first- and second-order derivatives.
It allows direct assessment of estimation accuracy and
comparisons with standard pathwise estimators for first-order derivatives, the pathwise kernel estimators of \citet{liu2011} for second-order derivatives, and likelihood-ratio (LR) estimators for both orders.

The second example is a wireless communication network design
problem that illustrates derivative estimation under black-box access.
The parameters include transmit powers and antenna angles,
and derivative estimates must be constructed using only
simulation outputs. Under this black-box scenario,
sample-path derivatives and likelihood-ratio scores are
unavailable, so finite differences (FD) serve as the
classical comparator.

In both examples, we implement LTD using five surrogate-learning
methods: KR, LPR, KRR, MKL, and smooth
neural networks (NN). Each method estimates derivatives by
differentiating its fitted response surface. For each setting,
we use 50 independent replications and 100 fixed evaluation
points chosen independently of the training samples.
Reference derivatives are used only to assess accuracy.
Detailed model formulas and implementation procedures
are provided in Appendices~\ref{app:asian}
and~\ref{app:wireless}.

\subsection{Geometric Asian Option}
\label{subsec:numerical-asian}

We consider geometric Asian call options under risk-neutral geometric
Brownian motion, $dS_t=rS_t\,dt+\sigma S_t\,dW_t$, with initial price
$S_0=x$, risk-free rate $r$, volatility $\sigma$, and standard Brownian
motion $W_t$. For strike $K$, maturity $T$, and monitoring dates
$t_k=kT/m$, the discounted payoff is
\[
 F_m(x,\xi)=e^{-rT}\bigl(G_m(x)-K\bigr)^+,
 \qquad G_m(x)=\left(\prod_{k=1}^m S_{t_k}\right)^{1/m},
\]
where $(a)^+=\max\{a,0\}$. We set $K=100$, $r=0.04$, $\sigma=0.35$,
and $T=1$, and simulate exact GBM transitions. The price
$V_m(x)=\mathbb E[F_m(x,\xi)]$, Delta $V'_m(x)$, and Gamma $V''_m(x)$
have closed-form expressions (Appendix~\ref{app:asian}).

To examine the effect of input dimension, we form portfolios of \(d\in\{1,2,3,4\}\) otherwise identical options with independent asset paths, unit portfolio weights, and initial prices \(\mathbf{x}=(x_1,\ldots,x_d)^\top\). We estimate the \(d\) Deltas and \(d\) diagonal Gammas; mixed second derivatives are zero. We vary \(m\in\{50,200,1000\}\) because more frequent monitoring can increase the variance of the likelihood-ratio estimators considered here. This allows us to examine whether LTD, viewed through the measure-differentiation framework, exhibits similar sensitivity to monitoring frequency.

Our budgets reflect the distinction between online
estimation and offline learning. Pathwise and likelihood-ratio estimators each use
50,000 independent portfolio scenarios per evaluation point.
In applications such as dynamic risk management, that point
may become known only online, limiting the time available
for simulation. LTD instead fits a reusable surface offline
and evaluates derivatives at changing initial prices with
little online computation and no further simulation.
We therefore consider a range of offline budgets,
\(B\in\{50{,}000,500{,}000,5{,}000{,}000\}\), measured
in total scenarios per surface. The comparison examines
the benefits of offline investment rather than equal
total simulation costs.

At training site \(\mathbf{x}_i\), we average \(N_i\) simulated portfolio responses, with \(\sum_iN_i=B\), and fit KR, LPR, KRR, MKL, and NN surrogates to the resulting sample means. Training designs and site allocations are method-specific. Tuning uses the current replication's simulated responses: generalized cross-validation (GCV) for KR/LPR and NN initialization, robust GCV for KRR/MKL, and response validation for NN stopping (Appendix~\ref{app:asian}). Differentiating each fitted surface yields both Delta and Gamma estimates. Although the theoretical results give common convergence-rate exponents under the respective surface-fitting assumptions, finite-sample accuracy may differ across methods.

We use \(R=50\) independent replications and \(Q=100\)
fixed test points forming a Latin design in \([75,125]^d\),
independent of training samples and shared across methods
and monitoring frequencies.
Let \(g_{qj}\) be the exact derivative with respect to component \(j\)
at test point \(q\), and \(\widehat g_{bqj}\) its estimate
in replication \(b\). Define
\[
\operatorname{RMSE}_j
=\left\{\frac{1}{RQ}\sum_{b,q}(\widehat g_{bqj}-g_{qj})^2\right\}^{1/2},
\quad
\operatorname{rRMSE}_j
=\frac{\operatorname{RMSE}_j}{\{Q^{-1}\sum_q g_{qj}^2\}^{1/2}},
\quad
\operatorname{rRMSE}
=\frac{1}{d}\sum_{j=1}^d\operatorname{rRMSE}_j.
\]
Here $\operatorname{rRMSE}_j$ is the relative RMSE for component $j$,
and $\operatorname{rRMSE}$ is its mean across components.
Tables~\ref{tab:asian-delta-results}
and~\ref{tab:asian-gamma-results} report this mean as a percentage
over the $d$ Delta components and the $d$ diagonal-Gamma components,
respectively. Appendix~\ref{app:asian} provides implementation
details.

\begin{table}[t]
\centering\renewcommand{\baselinestretch}{1}\scriptsize
\renewcommand{\arraystretch}{0.9}\setlength{\tabcolsep}{4.5pt}
\caption{Delta rRMSE (\%) for the geometric Asian option example. 
}
\label{tab:asian-delta-results}
\begin{tabular}{rrrrrrrrrr}
\toprule
$d$ & $m$ & $B$ & PW & LR & KR & LPR & KRR & MKL & NN \\
\midrule
1 & 50 & 50k & 0.366 & 2.565 & 3.764 & 3.818 & 3.461 & 3.623 & \textbf{3.164} \\
1 & 50 & 500k &  &  & 1.660 & 1.232 & 1.434 & \textbf{1.225} & 3.937 \\
1 & 50 & 5m &  &  & 0.974 & 0.614 & 0.412 & \textbf{0.342} & 3.698 \\
\addlinespace[2pt]
1 & 200 & 50k & 0.350 & 6.458 & 3.614 & 3.919 & 3.042 & \textbf{2.909} & 3.291 \\
1 & 200 & 500k &  &  & 1.582 & 1.372 & 1.335 & \textbf{1.195} & 6.142 \\
1 & 200 & 5m &  &  & 0.943 & 0.620 & 0.427 & \textbf{0.355} & 0.584 \\
\addlinespace[2pt]
1 & 1000 & 50k & 0.381 & 12.177 & 3.854 & 3.477 & 2.863 & 2.794 & \textbf{2.756} \\
1 & 1000 & 500k &  &  & 1.717 & 1.356 & 1.317 & \textbf{1.161} & 2.343 \\
1 & 1000 & 5m &  &  & 0.876 & 0.604 & 0.410 & \textbf{0.333} & 0.635 \\
\addlinespace[2pt]
2 & 50 & 50k & 0.405 & 3.252 & 7.276 & 5.299 & 6.713 & 6.665 & \textbf{4.158} \\
2 & 50 & 500k &  &  & 3.981 & 2.468 & 2.581 & 2.249 & \textbf{1.730} \\
2 & 50 & 5m &  &  & 2.287 & 1.068 & 1.051 & 0.788 & \textbf{0.591} \\
\addlinespace[2pt]
2 & 200 & 50k & 0.391 & 5.566 & 7.680 & 5.093 & 6.966 & 7.030 & \textbf{4.635} \\
2 & 200 & 500k &  &  & 4.230 & 2.430 & 2.604 & 2.065 & \textbf{1.748} \\
2 & 200 & 5m &  &  & 2.387 & 1.058 & 1.099 & 0.845 & \textbf{0.638} \\
\addlinespace[2pt]
2 & 1000 & 50k & 0.391 & 11.633 & 7.347 & 5.761 & 6.002 & 6.422 & \textbf{4.038} \\
2 & 1000 & 500k &  &  & 4.014 & 2.423 & 2.586 & 1.953 & \textbf{1.604} \\
2 & 1000 & 5m &  &  & 2.466 & 1.090 & 1.088 & 0.787 & \textbf{0.577} \\
\addlinespace[2pt]
3 & 50 & 50k & 0.400 & 2.964 & 9.299 & 7.941 & 7.859 & 7.469 & \textbf{6.011} \\
3 & 50 & 500k &  &  & 7.047 & 6.723 & 3.074 & 3.083 & \textbf{2.518} \\
3 & 50 & 5m &  &  & 6.027 & 4.186 & 1.698 & 1.418 & \textbf{0.881} \\
\addlinespace[2pt]
3 & 200 & 50k & 0.382 & 5.796 & 9.467 & 8.110 & 8.245 & 7.856 & \textbf{6.290} \\
3 & 200 & 500k &  &  & 6.870 & 6.867 & 3.108 & 3.034 & \textbf{2.527} \\
3 & 200 & 5m &  &  & 6.127 & 4.388 & 1.754 & 1.459 & \textbf{0.864} \\
\addlinespace[2pt]
3 & 1000 & 50k & 0.410 & 11.032 & 9.309 & 8.158 & 8.273 & 7.668 & \textbf{6.141} \\
3 & 1000 & 500k &  &  & 7.240 & 6.996 & 3.216 & 3.083 & \textbf{2.650} \\
3 & 1000 & 5m &  &  & 6.444 & 4.563 & 1.672 & 1.425 & \textbf{0.842} \\
\addlinespace[2pt]
4 & 50 & 50k & 0.393 & 2.939 & 16.372 & 7.958 & 9.263 & 8.723 & \textbf{6.314} \\
4 & 50 & 500k &  &  & 13.774 & 7.967 & 6.638 & 5.764 & \textbf{3.168} \\
4 & 50 & 5m &  &  & 12.053 & 7.811 & 2.405 & 2.198 & \textbf{1.319} \\
\addlinespace[2pt]
4 & 200 & 50k & 0.433 & 5.542 & 16.449 & 8.049 & 9.750 & 9.294 & \textbf{6.714} \\
4 & 200 & 500k &  &  & 13.713 & 8.069 & 7.047 & 6.091 & \textbf{3.156} \\
4 & 200 & 5m &  &  & 12.398 & 7.962 & 2.479 & 2.266 & \textbf{1.315} \\
\addlinespace[2pt]
4 & 1000 & 50k & 0.402 & 11.746 & 17.121 & 8.105 & 9.612 & 9.203 & \textbf{6.765} \\
4 & 1000 & 500k &  &  & 14.044 & 8.062 & 6.762 & 6.147 & \textbf{3.174} \\
4 & 1000 & 5m &  &  & 12.203 & 8.116 & 2.475 & 2.255 & \textbf{1.310} \\
\bottomrule
\end{tabular}
\end{table}

\begin{table}[tbp]
\centering\renewcommand{\baselinestretch}{1}\scriptsize
\renewcommand{\arraystretch}{0.9}\setlength{\tabcolsep}{4.5pt}
\caption{Diagonal Gamma rRMSE (\%) for the geometric Asian option example.
}
\label{tab:asian-gamma-results}
\begin{tabular}{rrrrrrrrrr}
\toprule
$d$ & $m$ & $B$ & kernel PW & LR & KR & LPR & KRR & MKL & NN \\
\midrule
1 & 50 & 50k & 1.332 & 29.139 & 25.943 & \textbf{15.859} & 19.217 & 22.248 & 17.654 \\
1 & 50 & 500k &  &  & 15.499 & 8.882 & 7.347 & \textbf{5.365} & 11.477 \\
1 & 50 & 5m &  &  & 14.173 & 6.751 & 2.126 & \textbf{1.721} & 9.854 \\
\addlinespace[2pt]
1 & 200 & 50k & 1.224 & 100.641 & 22.541 & 14.718 & 12.493 & \textbf{11.806} & 16.239 \\
1 & 200 & 500k &  &  & 14.312 & 7.331 & 6.186 & \textbf{5.174} & 18.777 \\
1 & 200 & 5m &  &  & 12.719 & 5.685 & 2.229 & \textbf{1.752} & 3.529 \\
\addlinespace[2pt]
1 & 1000 & 50k & 1.325 & 437.127 & 24.574 & 12.938 & 11.214 & \textbf{10.235} & 14.263 \\
1 & 1000 & 500k &  &  & 15.662 & 9.158 & 6.104 & \textbf{5.067} & 13.484 \\
1 & 1000 & 5m &  &  & 12.006 & 5.347 & 2.089 & \textbf{1.605} & 4.017 \\
\addlinespace[2pt]
2 & 50 & 50k & 1.331 & 34.054 & 36.128 & 18.851 & 20.536 & 20.254 & \textbf{14.418} \\
2 & 50 & 500k &  &  & 25.453 & 11.274 & 11.296 & 8.594 & \textbf{6.930} \\
2 & 50 & 5m &  &  & 15.556 & 6.343 & 5.073 & 3.419 & \textbf{2.467} \\
\addlinespace[2pt]
2 & 200 & 50k & 1.276 & 101.165 & 36.588 & 18.889 & 21.168 & 21.119 & \textbf{16.589} \\
2 & 200 & 500k &  &  & 24.668 & 9.677 & 11.431 & 7.997 & \textbf{6.885} \\
2 & 200 & 5m &  &  & 15.401 & 6.110 & 5.171 & 3.566 & \textbf{2.603} \\
\addlinespace[2pt]
2 & 1000 & 50k & 1.309 & 473.820 & 35.597 & 20.273 & 18.159 & 19.029 & \textbf{13.757} \\
2 & 1000 & 500k &  &  & 25.336 & 10.559 & 11.370 & 7.547 & \textbf{5.885} \\
2 & 1000 & 5m &  &  & 14.862 & 6.297 & 5.190 & 3.318 & \textbf{2.365} \\
\addlinespace[2pt]
3 & 50 & 50k & 1.270 & 25.320 & 43.345 & 25.050 & 23.048 & 22.174 & \textbf{18.688} \\
3 & 50 & 500k &  &  & 35.322 & 21.381 & 12.114 & 11.319 & \textbf{9.518} \\
3 & 50 & 5m &  &  & 31.376 & 13.286 & 7.858 & 5.963 & \textbf{3.206} \\
\addlinespace[2pt]
3 & 200 & 50k & 1.280 & 100.614 & 43.474 & 25.381 & 24.093 & 23.234 & \textbf{19.985} \\
3 & 200 & 500k &  &  & 34.956 & 21.329 & 12.349 & 11.358 & \textbf{9.395} \\
3 & 200 & 5m &  &  & 32.449 & 14.399 & 8.109 & 6.169 & \textbf{3.371} \\
\addlinespace[2pt]
3 & 1000 & 50k & 1.335 & 452.039 & 43.451 & 25.302 & 24.017 & 22.769 & \textbf{19.434} \\
3 & 1000 & 500k &  &  & 35.705 & 21.626 & 12.559 & 11.294 & \textbf{9.797} \\
3 & 1000 & 5m &  &  & 31.793 & 14.925 & 7.788 & 6.031 & \textbf{3.235} \\
\addlinespace[2pt]
4 & 50 & 50k & 1.302 & 29.293 & 54.306 & 25.561 & 26.338 & 25.052 & \textbf{20.725} \\
4 & 50 & 500k &  &  & 50.200 & 24.971 & 21.513 & 20.544 & \textbf{10.967} \\
4 & 50 & 5m &  &  & 48.512 & 24.410 & 10.522 & 8.919 & \textbf{4.842} \\
\addlinespace[2pt]
4 & 200 & 50k & 1.268 & 103.854 & 53.764 & 25.555 & 27.246 & 26.069 & \textbf{21.363} \\
4 & 200 & 500k &  &  & 49.759 & 25.248 & 22.436 & 21.177 & \textbf{11.295} \\
4 & 200 & 5m &  &  & 48.029 & 24.661 & 10.724 & 9.082 & \textbf{4.869} \\
\addlinespace[2pt]
4 & 1000 & 50k & 1.338 & 534.658 & 53.700 & 25.669 & 26.968 & 25.927 & \textbf{21.782} \\
4 & 1000 & 500k &  &  & 50.083 & 25.128 & 21.663 & 21.239 & \textbf{11.118} \\
4 & 1000 & 5m &  &  & 47.941 & 25.081 & 10.741 & 9.103 & \textbf{4.857} \\
\bottomrule
\end{tabular}
\end{table}

As monitoring becomes more frequent, LR errors increase substantially, particularly for Gamma, whereas LTD errors generally show much less sensitivity to monitoring frequency. Across all methods, including the pathwise and LR estimators, relative errors are higher for Gamma than for Delta, reflecting the greater difficulty of estimating second-order derivatives. LTD errors also generally increase with dimension, with KR appearing particularly sensitive to both dimension and derivative order. Among the LTD methods, MKL generally performs best in one dimension, especially at the larger budgets, while NN achieves the lowest Delta and Gamma errors in every reported setting for $d=2,3,4$. These differences illustrate that common asymptotic rate exponents do not imply equal finite-sample accuracy. The pathwise estimators are more accurate than all five LTD methods at the smallest LTD budget, while larger offline budgets generally improve LTD accuracy. Once fitted, each LTD surface provides both Delta and Gamma estimates at new initial prices without further simulation, a feature useful for repeated sensitivity evaluation in applications such as dynamic risk management.

\subsection{Wireless Communication Network}
\label{subsec:numerical-wireless}

Wireless network performance depends on how transmit powers and antenna
orientations balance coverage and interference. Performance derivatives
can guide simulation optimization and support real-time control as
operating conditions change. A measure of communication quality
is the signal-to-interference-plus-noise ratio (SINR), which compares
received signal power with interference and noise. We use expected
log-SINR as the objective, averaging communication quality on a
logarithmic scale. In practice, network performance is often evaluated
using complex simulators that incorporate network geometry and signal
propagation, with internal calculations inaccessible to the analyst.
For illustration, we construct a semi-analytical model based on
standard propagation mechanisms, as described in
Appendix~\ref{app:wireless}. Its tractability facilitates the computation
of accurate reference gradients, while the estimators use only
black-box simulation outputs.

The model consists of two directional antennas with design vector
$\theta=(p_1,\alpha_1,\beta_1,p_2,\alpha_2,\beta_2)^\top$,
where $p_i$, $\alpha_i$, and $\beta_i$ denote the transmit power, azimuth angle, and downtilt angle of antenna $i$, respectively.
For each design, the simulator returns a log-SINR observation
$F(\theta,\xi)$, where $\xi$ represents random user location and
shadowing. We estimate $\nabla J(\theta)$, where
$J(\theta)=\mathbb{E}[F(\theta,\xi)]$, for three sets of active
variables: the two powers, the four angles, and all six coordinates,
corresponding to $d=2,4,6$, respectively.

Under the assumed black-box access, sample-path derivatives and
likelihood-ratio scores are unavailable, leaving FD as the applicable
benchmark among the classical methods considered here. We use
central FD,
$\widehat g_j^{\mathrm{FD}}(\theta)
=[\overline F(\theta+h_je_j)-\overline F(\theta-h_je_j)]/(2h_j)$,
where $e_j$ is the coordinate unit vector, $h_j$ is the perturbation
half-step, and $\overline F$ is a sample mean. Each gradient evaluation
uses approximately $10^5$ observations allocated equally among the
$2d$ perturbed settings, with independent samples at the positive and
negative perturbations. Following the offline--online comparison in
Section~\ref{subsec:numerical-asian}, LTD uses
$B\in\{10^5,10^6,10^7\}$ observations per fitted surface, allowing
greater offline investment to support repeated online evaluations.
We report results for KR, LPR, KRR, and MKL with GCV-selected
tuning parameters, and for NN with GCV-selected initialization
parameters and a stopping stage chosen by response validation
(Appendix~\ref{app:wireless}).

We conduct $R=50$ independent replications and evaluate each method
at $Q=100$ fixed test points from a centered Latin hypercube design
\citep{mckay1979lhs}, inset
from the design boundary to permit central differences. These points
are independent of the training samples and shared across methods;
each LTD surface serves all test points. Reference gradients are
computed separately by central differences using $10^6$ observations
at each perturbed setting and common random numbers (CRNs) across
paired perturbations. CRNs require control over simulation randomness,
which is unavailable under the black-box access assumed for the
estimators. The reference gradients are used only to assess accuracy.

We report two complementary accuracy measures. To aggregate errors
across power and angle coordinates, we use the range-normalized RMSE,
\[
 \operatorname{nRMSE}
 =
 \left\{
 \frac{R^{-1}\sum_{b=1}^R\sum_{q=1}^Q\sum_{j=1}^d
 s_j^2(\widehat g_{bqj}-g_{qj})^2}
 {\sum_{q=1}^Q\sum_{j=1}^d s_j^2g_{qj}^2}
 \right\}^{1/2},
\]
where $g_{qj}$ is the reference derivative for coordinate $j$ at
test point $q$, $\widehat g_{bqj}$ is its estimate in replication
$b$, and $s_j$ is half the design range of coordinate $j$.
Scaling by $s_j$ makes this aggregate measure invariant to the
units of the design variables. To examine individual derivatives,
we also report the componentwise relative RMSE $\operatorname{rRMSE}_j$
defined in Section~\ref{subsec:numerical-asian}, using the numerical
reference gradients $g_{qj}$.
Tables~\ref{tab:wireless-overall-nrmse}
and~\ref{tab:wireless-component-relrmse}
report these overall and componentwise measures.

\begin{table}[tbp]
\centering\renewcommand{\baselinestretch}{1}\scriptsize
\renewcommand{\arraystretch}{0.9}\setlength{\tabcolsep}{4.5pt}
\caption{Overall nRMSE (\%) for the wireless communication network example.
}
\label{tab:wireless-overall-nrmse}
\begin{tabular}{rrrrrrrr}
\toprule
$d$ & $B$ & FD & KR & LPR & KRR & MKL & NN \\
\midrule
2 & 100k & 11.162 & 24.791 & \textbf{8.203} & 8.605 & 9.416 & 13.775 \\
2 & 1m &  & 18.682 & \textbf{2.973} & 3.138 & 3.118 & 4.737 \\
2 & 10m &  & 17.827 & 1.309 & \textbf{1.262} & 1.369 & 1.938 \\
\addlinespace[2pt]
4 & 100k & 10.747 & 30.005 & 16.605 & 13.069 & \textbf{13.031} & 15.419 \\
4 & 1m &  & 26.928 & 10.703 & \textbf{7.540} & 7.696 & 9.055 \\
4 & 10m &  & 25.574 & 7.692 & \textbf{3.946} & 4.412 & 5.444 \\
\addlinespace[2pt]
6 & 100k & 13.826 & 35.277 & 21.747 & 16.021 & \textbf{15.971} & 19.433 \\
6 & 1m &  & 32.810 & 14.841 & 9.024 & \textbf{8.652} & 13.251 \\
6 & 10m &  & 32.905 & 12.071 & 5.235 & \textbf{5.113} & 13.826 \\
\bottomrule
\end{tabular}
\end{table}

The overall errors of all five LTD estimators increase across the
$d=2,4,6$ settings. KR consistently has the largest overall error,
while NN is less competitive than in the Asian option example.
For $d=2$, LPR and KRR perform similarly, with LPR achieving slightly
lower overall errors at the small and moderate budgets and KRR at
the largest budget. For $d=4,6$, KRR and MKL are the most accurate
LTD methods overall: KRR has a modest advantage at the larger
budgets for $d=4$, while MKL has slightly lower overall errors at
every budget for $d=6$. These results show that the relative
performance of LTD methods depends on the application as well as
the dimension and simulation budget.

At $B=10^5$, matching the simulation budget of one FD gradient evaluation, the best LTD methods achieve lower overall errors than FD for $d=2$, but moderately higher errors for $d=4,6$. At $B=10^6$ and $10^7$, both KRR and MKL outperform the fixed-budget FD comparator in overall nRMSE and every componentwise rRMSE across all three settings. Notice that, different from FD estimators, each LTD surface also supplies gradients at new designs without further simulation. Together, these results illustrate how greater offline investment can
improve derivative accuracy while supporting repeated online evaluations in black-box simulation optimization and real-time control.

\begin{table}[tbp]
\centering\renewcommand{\baselinestretch}{1}\scriptsize
\renewcommand{\arraystretch}{0.9}\setlength{\tabcolsep}{4.5pt}
\caption{Componentwise rRMSE (\%) for the wireless communication network example.
}
\label{tab:wireless-component-relrmse}
\begin{tabular}{rrrrrrrrr}
\toprule
$d$ & Parameter & $B$ & FD & KR & LPR & KRR & MKL & NN \\
\midrule
2 & $p_1$ & 100k & 11.956 & 24.330 & \textbf{9.003} & 9.419 & 10.673 & 14.875 \\
2 & $p_1$ & 1m &  & 19.361 & \textbf{3.425} & 3.681 & 3.585 & 5.072 \\
2 & $p_1$ & 10m &  & 17.153 & \textbf{1.381} & 1.437 & 1.565 & 2.141 \\
\addlinespace[2pt]
2 & $p_2$ & 100k & 10.508 & 25.142 & \textbf{7.528} & 7.920 & 8.317 & 12.861 \\
2 & $p_2$ & 1m &  & 18.140 & \textbf{2.569} & 2.644 & 2.703 & 4.462 \\
2 & $p_2$ & 10m &  & 18.331 & 1.250 & \textbf{1.107} & 1.195 & 1.767 \\
\addlinespace[2pt]
4 & $\alpha_1$ & 100k & 39.484 & 88.744 & 40.979 & \textbf{40.304} & 40.548 & 50.417 \\
4 & $\alpha_1$ & 1m &  & 86.182 & 23.704 & \textbf{19.308} & 19.986 & 30.286 \\
4 & $\alpha_1$ & 10m &  & 83.755 & 15.389 & \textbf{10.064} & 10.934 & 18.398 \\
\addlinespace[2pt]
4 & $\beta_1$ & 100k & 7.869 & 23.681 & 13.282 & 10.503 & \textbf{10.468} & 13.096 \\
4 & $\beta_1$ & 1m &  & 20.112 & 9.133 & \textbf{5.830} & 6.013 & 6.810 \\
4 & $\beta_1$ & 10m &  & 18.154 & 6.612 & \textbf{2.791} & 3.198 & 3.778 \\
\addlinespace[2pt]
4 & $\alpha_2$ & 100k & 37.796 & 85.875 & 39.623 & \textbf{36.826} & 37.525 & 49.373 \\
4 & $\alpha_2$ & 1m &  & 81.997 & 21.382 & \textbf{19.294} & 19.824 & 29.786 \\
4 & $\alpha_2$ & 10m &  & 80.425 & 15.528 & \textbf{9.885} & 10.651 & 17.041 \\
\addlinespace[2pt]
4 & $\beta_2$ & 100k & 10.199 & 30.924 & 18.215 & 13.159 & \textbf{13.015} & 13.724 \\
4 & $\beta_2$ & 1m &  & 27.898 & 11.575 & \textbf{8.330} & 8.408 & 8.972 \\
4 & $\beta_2$ & 10m &  & 27.008 & 8.353 & \textbf{4.620} & 5.148 & 5.785 \\
\addlinespace[2pt]
6 & $p_1$ & 100k & 21.070 & 47.828 & 20.759 & 17.854 & \textbf{17.650} & 28.009 \\
6 & $p_1$ & 1m &  & 38.405 & 13.233 & 9.628 & \textbf{9.355} & 17.749 \\
6 & $p_1$ & 10m &  & 40.973 & 11.153 & 5.549 & \textbf{5.066} & 14.756 \\
\addlinespace[2pt]
6 & $\alpha_1$ & 100k & 48.665 & 93.147 & 54.763 & \textbf{50.740} & 51.174 & 52.131 \\
6 & $\alpha_1$ & 1m &  & 93.133 & 39.762 & \textbf{24.431} & 24.831 & 33.188 \\
6 & $\alpha_1$ & 10m &  & 91.031 & 33.485 & 15.321 & \textbf{14.804} & 26.950 \\
\addlinespace[2pt]
6 & $\beta_1$ & 100k & 10.486 & 31.590 & 18.989 & 14.154 & \textbf{13.794} & 16.972 \\
6 & $\beta_1$ & 1m &  & 28.292 & 12.978 & 7.749 & \textbf{7.342} & 11.254 \\
6 & $\beta_1$ & 10m &  & 29.084 & 10.625 & 4.820 & \textbf{4.757} & 13.205 \\
\addlinespace[2pt]
6 & $p_2$ & 100k & 20.769 & 45.076 & 20.736 & 16.870 & \textbf{16.664} & 26.545 \\
6 & $p_2$ & 1m &  & 37.904 & 13.512 & 9.566 & \textbf{9.232} & 16.725 \\
6 & $p_2$ & 10m &  & 37.485 & 11.192 & 5.449 & \textbf{5.025} & 14.885 \\
\addlinespace[2pt]
6 & $\alpha_2$ & 100k & 52.044 & 100.237 & 55.753 & \textbf{53.008} & 54.435 & 57.026 \\
6 & $\alpha_2$ & 1m &  & 93.902 & 38.595 & 26.604 & \textbf{26.155} & 35.699 \\
6 & $\alpha_2$ & 10m &  & 93.036 & 33.963 & 15.074 & \textbf{14.417} & 26.742 \\
\addlinespace[2pt]
6 & $\beta_2$ & 100k & 10.647 & 30.505 & 21.643 & \textbf{13.720} & 13.832 & 16.536 \\
6 & $\beta_2$ & 1m &  & 30.189 & 14.703 & 8.569 & \textbf{8.124} & 12.359 \\
6 & $\beta_2$ & 10m &  & 29.636 & 11.608 & 4.614 & \textbf{4.561} & 13.306 \\
\bottomrule
\end{tabular}
\end{table}


\section{Conclusion}
\label{sec:conclusion}

This paper develops a unified framework for understanding
and analyzing LTD estimators.
For surrogates admitting a weighted representation, LTD
differentiates a learned representation of the underlying
probability measure while leaving the performance function
fixed. A general convergence theorem then translates
surface-fitting accuracy into guarantees for gradients and
higher-order derivatives. Applications recover existing
results for representative learning methods and establish
additional guarantees for multiple kernel learning and
smooth neural networks. Numerical examples demonstrate
the value of offline learning for accurate, repeated
derivative evaluations without further simulation,
particularly in black-box applications.

\bibliographystyle{informs2014}
\bibliography{references}

\ECSwitch
\ECHead{Electronic Companion}

\normalsize

\renewcommand{\theassumption}{\Alph{section}.\arabic{assumption}}
\setcounter{assumption}{0}

\renewcommand{\thelemma}{\Alph{section}.\arabic{lemma}}
\setcounter{lemma}{0}

\renewcommand{\theproposition}{\Alph{section}.\arabic{proposition}}
\setcounter{proposition}{0}

\renewcommand{\thefigure}{\Alph{section}.\arabic{figure}}
\setcounter{figure}{0}

\renewcommand{\thetable}{\Alph{section}.\arabic{table}}
\setcounter{table}{0}


\section{Representative Surrogate Learning Methods}
\label{app:surrogate}

This appendix describes the five surrogate-learning methods used in Section~\ref{subsec:ltd-examples}: kernel regression, local polynomial regression, kernel ridge regression, multiple kernel learning, and smooth neural networks. For each method, we present the fitting procedure, its weighted representation, the principal tuning choices, and the surface-fitting convergence results underlying the LTD analysis.
In particular, we derive a statistical surface-fitting rate for the neural-network method by combining Park's approximation theorem with a least-squares oracle inequality.
These surface-fitting rates are combined with the required higher-order derivative bounds to obtain the corresponding LTD convergence rates in Section~\ref{subsec:ltd-examples}.

The first three methods have response-independent weights when their tuning parameters are fixed in advance, and response-adaptive weights when these parameters are selected using the observed responses, for example through cross-validation.  
Multiple kernel learning learns its kernel combination from the responses, while jointly trained neural networks can also admit response-adaptive weights under the output-layer conditions stated below. 
In all cases, we hold the training data and the tuning parameters fixed when differentiating the fitted surrogate with respect to the query point.

\subsection{Kernel Regression}\label{app:kr}

Kernel regression, also known as the Nadaraya--Watson estimator, approximates the expectation function through local averaging \citep{nadaraya1964,watson1964}. Given design points \(\theta_1,\ldots,\theta_n\) and responses \(F_1,\ldots,F_n\), the estimator is
\[
\widehat J_n(\theta)
=\frac{\sum_{i=1}^n K_h(\theta-\theta_i)F_i}
{\sum_{i=1}^n K_h(\theta-\theta_i)}
=\sum_{i=1}^n w_i(\theta)F_i,
\qquad
w_i(\theta)
=\frac{K_h(\theta-\theta_i)}
{\sum_{j=1}^n K_h(\theta-\theta_j)},
\]
where \(K_h(u)=h^{-d}K(u/h)\) and \(h>0\) is the bandwidth. For a nonnegative kernel, the weights form a local average of nearby responses. Higher-order kernels may take negative values, but the weights still sum to one wherever the denominator is nonzero. Thus, both choices admit the weighted representation introduced in Section~\ref{sec:ltd}.

The bandwidth controls the bias--variance tradeoff, while the kernel order determines the extent of bias reduction. A kernel of order \(q\) satisfies
\[
\int_{\mathbb R^d}K(u)\,du=1,
\qquad
\int_{\mathbb R^d}u^\alpha K(u)\,du=0,
\quad 1\le |\alpha|<q,
\]
with a nonzero moment for at least one multi-index of order \(q\), and appropriate absolute moments finite. Kernel order is distinct from differentiability: a kernel used for LTD must also be sufficiently smooth so that the fitted surrogate can be differentiated to the desired order. These derivatives are well defined on regions where the denominator stays away from zero.

Suppose \(J\in C^\beta(\Theta)\), and let
\(
s_K=\min\{\beta,q\}.
\)
Under standard random-design regression conditions, including a sufficiently smooth design density bounded away from zero and suitable moment and tail conditions on the centered response noise, the surface-fitting errors satisfy
\begin{align*}
\|\widehat J_n-J\|_{L_2(\Theta)}
&=O_p\!\left(h^{s_K}+(nh^d)^{-1/2}\right),\\
\|\widehat J_n-J\|_{L_\infty(\Theta)}
&=O_p\!\left(h^{s_K}+\sqrt{\frac{\log n}{nh^d}}\right)
\end{align*}
\citep{fan1996,hansen2008}. Here the bounds are understood on a compact interior evaluation region \(\Theta\); obtaining the same orders up to the boundary requires an appropriate boundary correction. The uniform bound also requires the usual kernel regularity and bandwidth conditions, including \(nh^d/\log n\to\infty\).

Balancing the bias and stochastic error gives the respective bandwidth choices
\[
h_2\asymp n^{-1/(2s_K+d)},
\qquad
h_\infty\asymp\left(\frac{\log n}{n}\right)^{1/(2s_K+d)}.
\]
With these choices, respectively,
\begin{align*}
\|\widehat J_n-J\|_{L_2(\Theta)}
&=O_p\!\left(n^{-s_K/(2s_K+d)}\right),\\
\|\widehat J_n-J\|_{L_\infty(\Theta)}
&=O_p\!\left[\left(\frac{\log n}{n}\right)^{s_K/(2s_K+d)}\right].
\end{align*}
In particular, choosing \(q\ge\beta\) exploits the available smoothness of \(J\) and yields
\begin{align*}
\|\widehat J_n-J\|_{L_2(\Theta)}
&=O_p\!\left(n^{-\beta/(2\beta+d)}\right),\\
\|\widehat J_n-J\|_{L_\infty(\Theta)}
&=O_p\!\left[\left(\frac{\log n}{n}\right)^{\beta/(2\beta+d)}\right].
\end{align*}
%
%

\subsection{Local Polynomial Regression}\label{app:lpr}

Local polynomial regression improves upon kernel regression by replacing local averaging with local polynomial approximation \citep{fan1996}. Instead of estimating the expectation by averaging nearby observations, it fits a polynomial locally using kernel-weighted least squares and evaluates the fitted polynomial at the query point. Compared with kernel regression, local polynomial regression enjoys automatic boundary correction and generally achieves smaller approximation bias.

Specifically, let \(p\ge0\) denote the polynomial degree. Around a query point \(\theta\), local polynomial regression solves the weighted least-squares problem
\[
\min_{\bm b}
\sum_{i=1}^n
K_h(\theta-\theta_i)
\left[
F_i-
\sum_{|\alpha|\le p}
b_\alpha(\theta_i-\theta)^\alpha
\right]^2,
\]
where \(\bm b=\{b_\alpha:|\alpha|\le p\}\) denotes the polynomial coefficients. The resulting estimator admits the weighted representation
\[
\widehat J_n(\theta)
=
\sum_{i=1}^n
w_i(\theta)F_i,
\]
where
\[
w(\theta)^T
=
e_1^T
\bigl(R(\theta)^TW(\theta)R(\theta)\bigr)^{-1}
R(\theta)^TW(\theta).
\]
Here \(R(\theta)\) is the local polynomial design matrix, \(W(\theta)\) is the diagonal kernel-weight matrix, and \(e_1=(1,0,\ldots,0)^T\). Thus, local polynomial regression also belongs to the weighted surrogate framework introduced in Section~\ref{sec:ltd}.

The bandwidth determines the size of the local neighborhood, while the kernel function specifies how observations within the neighborhood are weighted. Unlike kernel regression, however, the approximation order is determined by the polynomial degree rather than the kernel order. 

Suppose \(J\in C^\beta(\Theta)\), and let
\(
s_P=\min\{\beta,p+1\}.
\)
Under standard local polynomial regression conditions, the approximation bias is \(O(h^{s_P})\), while the stochastic error is \(O_p((nh^d)^{-1/2})\) under the \(L_2\) norm and \(O_p((\log n/(nh^d))^{1/2})\) under the \(L_\infty\) norm \citep{masry1996uniform, masry1996variance}. Consequently,
\begin{align*}
\|\widehat J_n-J\|_{L_2(\Theta)}
&=
O_p\!\left(
h^{s_P}
+
(nh^d)^{-1/2}
\right),\\
\|\widehat J_n-J\|_{L_\infty(\Theta)}
&=
O_p\!\left(
h^{s_P}
+
\sqrt{\frac{\log n}{nh^d}}
\right).
\end{align*}

Balancing the bias and stochastic error gives the respective bandwidth choices
\[
h_2\asymp n^{-1/(2s_P+d)},
\qquad
h_\infty\asymp\left(\frac{\log n}{n}\right)^{1/(2s_P+d)}.
\]
With these choices, respectively,
\begin{align*}
\|\widehat J_n-J\|_{L_2(\Theta)}
&=
O_p\!\left(
n^{-s_P/(2s_P+d)}
\right),\\
\|\widehat J_n-J\|_{L_\infty(\Theta)}
&=
O_p\!\left[
\left(
\frac{\log n}{n}
\right)^{s_P/(2s_P+d)}
\right].
\end{align*}

In particular, choosing \(p\ge\beta-1\) exploits the available smoothness of \(J\), and yields the optimal convergence rates
\begin{align*}
\|\widehat J_n-J\|_{L_2(\Theta)}
&=
O_p\!\left(
n^{-\beta/(2\beta+d)}
\right),\\
\|\widehat J_n-J\|_{L_\infty(\Theta)}
&=
O_p\!\left[
\left(
\frac{\log n}{n}
\right)^{\beta/(2\beta+d)}
\right].
\end{align*}

%

\subsection{Kernel Ridge Regression}
\label{app:krr}

KRR differs fundamentally from kernel regression and local polynomial regression by constructing a global surrogate rather than a local approximation \citep{scholkopf2002,steinwart2008}. Given a positive definite kernel \(k(\cdot,\cdot)\) with RKHS \(\mathcal H\), KRR learns the surrogate by solving the regularized least-squares problem
\[
\min_{f\in\mathcal H}
\frac1n
\sum_{i=1}^n
\left(
f(\theta_i)-F_i
\right)^2
+
\lambda
\|f\|_{\mathcal H}^2,
\]
where \(\lambda>0\) is the regularization parameter. By the representer theorem \citep{scholkopf2002}, the solution admits the finite-dimensional representation
\[
\widehat J_n(\theta)
=
k(\theta)^T
(K+n\lambda I)^{-1}
F
=
\sum_{i=1}^n
w_i(\theta)F_i,
\]
where \(K=(k(\theta_i,\theta_j))_{i,j=1}^n\) is the kernel matrix,
\(k(\theta)=(k(\theta,\theta_1),\ldots,k(\theta,\theta_n))^T\), and
\[
w(\theta)^T
=
k(\theta)^T
(K+n\lambda I)^{-1}.
\]
Thus, KRR also belongs naturally to the weighted surrogate framework introduced in Section~\ref{sec:ltd}.

The approximation properties of KRR are determined primarily by the choice of reproducing kernel. A particularly important family is the Mat\'ern kernel,
\[
k_\nu(x,y)
=
\frac{2^{1-\nu}}{\Gamma(\nu)}
\left(
\frac{\sqrt{2\nu}\|x-y\|}{\rho}
\right)^\nu
K_\nu
\left(
\frac{\sqrt{2\nu}\|x-y\|}{\rho}
\right),
\]
where \(\nu>0\) is the smoothness parameter, \(\rho>0\) is the length scale, and \(K_\nu\) is the modified Bessel function of the second kind. The corresponding RKHS is norm-equivalent to the Sobolev space \(W^{\nu+d/2,2}\), so \(\nu\) controls the smoothness imposed by the RKHS \citep{wendland2005}. In the limiting case \(\nu\rightarrow\infty\), the Mat\'ern kernel converges to the Gaussian kernel, which is infinitely differentiable \citep{rasmussen2006}.


In addition to the kernel, KRR requires a regularization parameter
\(\lambda\), which controls the balance between approximation and
statistical error. General learning-rate analyses are given by
\citet{caponnetto2007}, while \citet{fischer2020} establish rates in
Sobolev norms, including targets outside the RKHS.

For the setting used in Section~\ref{subsec:ltd-examples}, assume
\(\beta>d/2\) and choose the Mat\'ern smoothness parameter
\(\nu=\beta-d/2\), a fixed length scale, and
\[
\lambda_n\asymp n^{-2\beta/(2\beta+d)}.
\]
The RKHS is then norm-equivalent to \(W^{\beta,2}(\Theta)\),
so \(J\in C^\beta(\Theta)\) belongs to the RKHS.
Applying Theorem 1 of \citet{fischer2020} then yields
\[
\|\widehat J_n-J\|_{L_2(\Theta)}
=O_p\!\left(n^{-\beta/(2\beta+d)}\right).
\]

\subsection{Multiple Kernel Learning}
\label{app:mkl}

Multiple kernel learning (MKL) extends kernel regression in an RKHS by learning a combination of candidate kernels together with the regression fit. For each \(m\in\{1,\ldots,M\}\), the candidate kernel \(k_m\) may depend on a coordinate subset of dimension \(1\le d_m\le d\):
\[
k_m(\theta,\theta')=\bar k_m(\pi_m\theta,\pi_m\theta'),
\qquad \pi_m:\Theta\to\Theta_m\subset\mathbb R^{d_m}.
\]
The dimensions \(d_m\) need not be equal. Given kernels \(k_1,\ldots,k_M\), define
\[
k_\eta(\theta,\theta')
=\sum_{m=1}^{M}\eta_m k_m(\theta,\theta'),
\qquad \eta\in\Delta_M:=\left\{\eta\in[0,\infty)^M:\sum_{m=1}^{M}\eta_m=1\right\}.
\]
The coefficients \(\eta_m\) determine the contribution of each kernel and may be zero. The least-squares MKL estimator solves
\[
(\widehat\eta,\widehat J_n)
\in\arg\min_{\eta\in\Delta_M,\;f\in\mathcal H_{k_\eta}}
\left\{\frac1n\sum_{i=1}^{n}\bigl(F_i-f(\theta_i)\bigr)^2
+\lambda\|f\|_{\mathcal H_{k_\eta}}^2\right\}.
\]
The candidate kernels specify the available representations, while \(\lambda>0\) controls regularization. Learning \(\eta\) allows the representation to adapt to the observed responses. Let \(\mathcal H_m\) be the RKHS of \(k_m\). Minimizing over \(\eta\in\Delta_M\) is equivalent to fitting \(f=\sum_{m=1}^{M}f_m\), \(f_m\in\mathcal H_m\), with penalty \(\lambda(\sum_{m=1}^{M}\|f_m\|_{\mathcal H_m})^2\). Thus this estimator is the squared \(\ell_1\) mixed-norm instance of the framework of \citet{suzuki2011mkl}.

For the learned combination \(\widehat\eta\) held fixed, the fitted surface is a KRR solution:
\[
\widehat J_n(\theta)
=k_{\widehat\eta}(\theta)^T(K_{\widehat\eta}+n\lambda I)^{-1}F
=\sum_{i=1}^{n}w_i^{\rm MKL}(\theta)F_i,
\]
where
\[
K_{\widehat\eta}=\bigl(k_{\widehat\eta}(\theta_i,\theta_j)\bigr)_{i,j=1}^{n},
\qquad
k_{\widehat\eta}(\theta)=\bigl(k_{\widehat\eta}(\theta,\theta_1),\ldots,k_{\widehat\eta}(\theta,\theta_n)\bigr)^T,
\]
and
\[
w^{\rm MKL}(\theta)^T=k_{\widehat\eta}(\theta)^T(K_{\widehat\eta}+n\lambda I)^{-1}.
\]
Because \(\widehat\eta\) depends on the responses, these are response-adaptive weights. The learned kernel combination is held fixed when differentiating with respect to \(\theta\), so differentiability is inherited from the candidate kernels.

For the surface-fitting rate, suppose \(J=\sum_{m=1}^{M}J_m\), with \(J_m\in\mathcal H_m\) and \(\sum_{m=1}^{M}\|J_m\|_{\mathcal H_m}\) bounded. Let \(\Pi\) denote the training-design distribution, set \(d_*:=\max_{1\le m\le M}d_m\), and assume that the component kernel integral operators on \(L_2(\Pi)\) have eigenvalues satisfying \(\mu_{m,j}\le Cj^{-1/\rho_m}\), where \(\rho_m=d_m/(2\beta)\in(0,1)\). We impose Assumptions~1--4 of \citet{suzuki2011mkl}: independent and identically distributed training pairs with correctly specified conditional mean and uniformly bounded noise \(F_i-J(\theta_i)\), bounded kernels and separable RKHSs, the stated spectral decay, and that paper's embedding and incoherence conditions. 
For fixed \(M\), take the auxiliary parameters in \citet[Theorem~1]{suzuki2011mkl} to be \(r_m=n^{-1/[2(1+\rho_m)]}\). With the regularization prescribed by that theorem, which has order \(\lambda_n\asymp n^{-2\beta/(2\beta+d_*)}\), its bound gives
\[
\|\widehat J_n-J\|_{L_2(\Pi)}^2
=O_p\!\left(\sum_{m=1}^{M}n^{-2\beta/(2\beta+d_m)}
+\frac{M\log M}{n}\right).
\]
Since \(M\) is fixed, the largest component dimension determines the leading order. Assuming that the design density is bounded above and away from zero makes the design-weighted and Lebesgue \(L_2\) norms equivalent, giving
\begin{equation}
\label{eq:mkl-component-rate}
\|\widehat J_n-J\|_{L_2(\Theta)}
=O_p\!\left(n^{-\beta/(2\beta+d_*)}\right).
\end{equation}
This is the surface rate transferred to LTD guarantees in Section~\ref{subsec:ltd-examples}. 

\subsection{Smooth Neural Networks}
\label{app:nn-nonparametric}

Smooth neural networks provide a flexible nonlinear approximation to the expectation function. Let \(\sigma\) be a fixed, smooth, nonpolynomial activation, such as sigmoid, and write a network with \(L\) hidden layers and a linear output as
\[
z_0(\theta)=\theta,\qquad
z_\ell(\theta)=\sigma\!\left(W_\ell z_{\ell-1}(\theta)+b_\ell\right),
\quad \ell=1,\ldots,L,
\]
\[
g_{\eta,v}(\theta)=h_\eta(\theta)^Tv,\qquad
h_\eta(\theta)=\bigl(1,z_L(\theta)^T\bigr)^T.
\]
The activation is applied componentwise, \(\eta\) collects the hidden-layer parameters, and \(v\) contains the output coefficients, including the intercept.

Following \citet{park2026}, fix \(L\ge2\) and use a first hidden layer of width at most \((4N+1)^d\), followed by hidden layers of width at most two. Let \(\mathcal N_{N,L}^{\sigma}\) denote this network class. Smaller networks can be embedded in the maximum-width architecture by adding unused neurons, so the fitting procedure can use one prescribed architecture for each \(N\).

For an integer \(\beta>d/2\) and a fixed radius \(R>0\), define
\begin{equation}
\mathcal G_{N,L}(R)=\left\{g\in\mathcal N_{N,L}^{\sigma}:\|g\|_{W^{\beta,\infty}(\Theta)}\le R\right\}.
\label{eq:NN_feasible}	
\end{equation}
The surrogate is fitted by constrained least squares:
\[
\widehat J_n\in\arg\min_{g\in\mathcal G_{N_n,L}(R)}
\frac1n\sum_{i=1}^{n}\bigl(F_i-g(\theta_i)\bigr)^2.
\]
All network weights may be trained. The width controls approximation capacity, while the Sobolev constraint controls the magnitude and smoothness of the fitted function. The radius \(R\) is fixed independently of \(n\), whereas \(N_n\) increases with the sample size. This defines a smooth fitted network without output truncation.

Let \((\widehat\eta,\widehat v)\) represent the fitted network and define the final-feature design matrix
\(
H_{\widehat\eta}=\left(h_{\widehat\eta}(\theta_1)^T,\ldots, h_{\widehat\eta}(\theta_n)^T\right)^\top.
\)
Suppose that this matrix has full column rank and that, with \(\widehat\eta\) fixed, the fitted output vector \(\widehat v\) lies in the interior of its feasible set under the Sobolev constraint. The output-layer first-order condition then gives
\[
\widehat v=(H_{\widehat\eta}^TH_{\widehat\eta})^{-1}H_{\widehat\eta}^TF.
\]
Consequently,
\[
\widehat J_n(\theta)=h_{\widehat\eta}(\theta)^T
(H_{\widehat\eta}^TH_{\widehat\eta})^{-1}H_{\widehat\eta}^TF
=\sum_{i=1}^{n}w_i^{\rm NN}(\theta)F_i,
\]
where
\[
w^{\rm NN}(\theta)^T=h_{\widehat\eta}(\theta)^T
(H_{\widehat\eta}^TH_{\widehat\eta})^{-1}H_{\widehat\eta}^T.
\]
The hidden features are learned from the responses, making these weights response-adaptive. All fitted parameters are held fixed when differentiating with respect to the query point. The rank and output-layer interior conditions justify this representation; they are not required for the surface-rate result below.

For the surface-fitting rate, \cite{park2026} establishes approximation by a constructed network. The following proposition combines that approximation result with least-squares estimation to obtain a statistical rate for the fitted network defined above.

\begin{proposition}
\label{prop:park-nn-surface}
Suppose that \(\{(\theta_i,F_i)\}_{i=1}^{n}\) are independent and identically distributed, with \(|F_i|\le B\) almost surely and
\(
J(\theta)=\mathbb E[F_i\mid\theta_i=\theta].
\)
Let \(\Pi\) denote the marginal distribution of the training input \(\theta_i\). Assume that its density is bounded above and away from zero on \(\Theta\), and that $J\in W^{\beta,\infty}(\Theta)$ and $\|J\|_{W^{\beta,\infty}(\Theta)}\le R_0$
for a fixed integer \(\beta>d/2\). Fix a smooth nonpolynomial activation and a depth \(L\ge2\). Choose \(R\) sufficiently large, as specified in the proof, and let \(N_n\) be deterministic. If \(\widehat J_n\) is a measurable global least-squares minimizer over \(\mathcal G_{N_n,L}(R)\), then
\[
\|\widehat J_n-J\|_{L_2(\Theta)}^2
=O_p\!\left(N_n^{-2\beta}+n^{-2\beta/(2\beta+d)}\right).
\]
In particular, taking \(N_n\asymp n^{1/(2\beta+d)}\) gives
\[
\|\widehat J_n-J\|_{L_2(\Theta)}=O_p\!\left(n^{-\beta/(2\beta+d)}\right).
\]
\end{proposition}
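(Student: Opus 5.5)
The argument is a standard oracle-inequality proof for constrained least squares over the class \(\mathcal G:=\mathcal G_{N_n,L}(R)\). It splits the risk into an approximation term, handled by Park's theorem, and an estimation term, handled by localized complexity bounds \citep{koltchinskii2011}.

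The key step is to control the estimation term by the entropy of the Sobolev ball rather than by the network's parameter count. This avoids any \(\log N_n\) or width-dependent terms.

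\emph{Approximation term.} Invoke the approximation theorem of \citet{park2026}. For \(J\) with \(\|J\|_{W^{\beta,\infty}(\Theta)}\le R_0\), it gives a network \(g_N^*\in\mathcal N^\sigma_{N,L}\) with \(\|g_N^*-J\|_{L_\infty(\Theta)}\le C R_0 N^{-\beta}\). I will use the simultaneous-approximation version, in which the Sobolev norm of \(g_N^*\) stays bounded by \(C'R_0\) uniformly in \(N\). Choosing \(R\ge C'R_0\) (and \(R\ge R_0\)) then guarantees \(g_N^*\in\mathcal G\). This is the sense in which ``\(R\) sufficiently large'' is specified.

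\emph{Estimation term.} Every element of \(\mathcal G\) lies in the fixed ball \(\mathcal B_R=\{g:\|g\|_{W^{\beta,\infty}(\Theta)}\le R\}\), independently of \(N_n\). Since \(\beta>d/2\), functions in \(\mathcal B_R\) are uniformly bounded by \(CR\). Their \(L_\infty\) metric entropy satisfies the classical Kolmogorov--Tikhomirov bound \(\log\mathcal N(\epsilon,\mathcal B_R,\|\cdot\|_\infty)\lesssim (R/\epsilon)^{d/\beta}\). Because \(|F_i|\le B\), the squared losses are bounded. I will apply the localized least-squares oracle inequality of \citet{koltchinskii2011}, in the form valid for any global minimizer over a class \(\mathcal G\) containing a comparison function \(g^*\):
\[
\|\widehat J_n-J\|_{L_2(\Pi)}^2\le C\|g^*-J\|_{L_2(\Pi)}^2+O_p(\delta_n^2),
\]
where \(\delta_n\) solves the fixed-point equation \(\sqrt n\,\delta^2\asymp\int_0^\delta \sqrt{\log\mathcal N(\epsilon)}\,d\epsilon\). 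The entropy exponent is \(d/\beta<2\), so the Dudley integral converges. This gives \(\delta_n^2\asymp n^{-2\beta/(2\beta+d)}\).

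\emph{Combining.} Set \(g^*=g_{N_n}^*\) to get the bound \(N_n^{-2\beta}+n^{-2\beta/(2\beta+d)}\) in \(L_2(\Pi)\). Since the design density is bounded above and away from zero, the \(L_2(\Pi)\) and \(L_2(\Theta)\) norms are equivalent, which yields the stated \(L_2(\Theta)\) bound. Taking \(N_n\asymp n^{1/(2\beta+d)}\) balances the two terms.

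The main obstacle is the approximation step. Park's construction must yield a network whose \(W^{\beta,\infty}\) norm stays bounded in \(N\), not merely a small \(L_\infty\) error; otherwise the comparison function may fall outside \(\mathcal G\). I would first check whether Park's theorem is stated in Sobolev norm (simultaneous approximation of derivatives up to order \(\beta\)). If it gives error \(o(1)\) in \(W^{\beta,\infty}\), the bound follows from the triangle inequality. If it only controls lower-order derivatives, I would approximate a slightly smoother mollification of \(J\) and absorb the resulting loss into the constant.

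A secondary technical point is the measurability of the minimizer, which the proposition assumes. The sup-norm entropy bound also needs to hold uniformly on the hyperrectangle \(\Theta\); this follows from a bounded Sobolev extension operator.
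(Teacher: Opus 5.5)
Your proposal is correct and follows essentially the same route as the paper. The paper also bounds the approximation term via Park's theorem. It controls the estimation term through the $N$-uniform sup-norm entropy bound $\log\mathcal N(\epsilon)\lesssim\epsilon^{-d/\beta}$ of the fixed $W^{\beta,\infty}$ ball, which it feeds into Koltchinskii's quadratic-loss oracle inequality with exponent $2\rho=d/\beta<2$, and finishes with norm equivalence.

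The obstacle you flag is resolved as you hoped. The paper invokes Park's Theorem 2 in its simultaneous form, $\|g_N-J\|_{\infty}\le C_PR_0N^{-\beta}$ together with $\|g_N-J\|_{W^{\beta,\infty}}\le C_PR_0$, so $R\ge(1+C_P)R_0$ suffices and your mollification fallback is not needed.
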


\begin{proof}{Proof}
By \citet[Theorem 2]{park2026}, there exists a network \(g_N\in\mathcal N_{N,L}^{\sigma}\) satisfying simultaneously
\[
\|g_N-J\|_\infty\le C_PR_0N^{-\beta},\qquad
\|g_N-J\|_{W^{\beta,\infty}}\le C_PR_0.
\]
Thus, choosing \(R\ge(1+C_P)R_0\) ensures that \(g_N\in\mathcal G_{N,L}(R)\), and
\[
\inf_{g\in\mathcal G_{N,L}(R)}\|g-J\|_{L_2(\Pi)}^2\le C N^{-2\beta}.
\]

The constrained network class is contained in a fixed-radius \(W^{\beta,\infty}\) ball. The classical H\"older entropy bound \citep[Theorem 2.7.1]{vandervaartwellner1996} therefore gives
\[
\log\mathcal N\bigl(\varepsilon,\mathcal G_{N,L}(R),\|\cdot\|_\infty\bigr)
\le C\varepsilon^{-d/\beta},\qquad 0<\varepsilon\le1,
\]
uniformly in \(N\). The same bound holds in \(L_2(Q)\) for every probability measure \(Q\), since the \(L_2(Q)\) norm is bounded by the uniform norm.

Applying the quadratic-loss oracle inequality of \citet[Section 5.1, Theorem 5.2 and Example 4]{koltchinskii2011} with entropy exponent \(2\rho=d/\beta<2\) yields
\[
\|\widehat J_n-J\|_{L_2(\Pi)}^2
=O_p\!\left(N_n^{-2\beta}+n^{-1/(1+\rho)}\right).
\]
Since \(1/(1+\rho)=2\beta/(2\beta+d)\), equivalence of the design-weighted and Lebesgue norms proves the first assertion. Balancing the two terms gives the stated choice of \(N_n\) and the surface rate.
\hfill\Halmos
\end{proof}

%

\section{Gagliardo--Nirenberg Interpolation Inequalities}
\label{app:gn}

This appendix introduces the function-space notation used
in the general Gagliardo--Nirenberg inequality and derives
four special cases in \(L_p(\Theta)\) norms.
Throughout, \(\Theta\subset\mathbb R^d\) is a compact
hyperrectangle with nonempty interior. 

\subsection{Function Spaces and Norms}

For \(1\le p<\infty\), define
\[
\|f\|_{L_p(\Theta)}
=
\left(\int_\Theta |f(\theta)|^p\,d\theta\right)^{1/p},
\]
and let \(\|f\|_{L_\infty(\Theta)}\) denote the essential
supremum of \(|f|\). For continuous functions on \(\Theta\),
this equals the supremum used in the main text. For
derivatives of order \(r\), write
\[
\|D^{(r)}f\|_{L_p(\Theta)}
=
\max_{|\alpha|=r}
\|D^\alpha f\|_{L_p(\Theta)}.
\]
Here, derivatives are understood in the weak sense on the interior of
\(\Theta\), defined through integration by parts against smooth,
compactly supported test functions. They agree with classical derivatives
for smooth functions.
Correspondingly, for \(1\le p\le\infty\), \(L_p(\Theta)\) denotes the space of
measurable functions \(f:\Theta\to\mathbb R\) with
\(\|f\|_{L_p(\Theta)}<\infty\).

For an integer \(\beta\ge1\) and \(1\le p\le\infty\),
the Sobolev space \(W^{\beta,p}(\Theta)\) consists of functions
whose weak derivatives through order \(\beta\) belong to
\(L_p(\Theta)\), with derivatives taken on the interior of
\(\Theta\). Its norm is
\(\bigl(\sum_{|\alpha|\le\beta}\|D^\alpha f\|_{L_p(\Theta)}^p\bigr)^{1/p}\)
for \(p<\infty\), and
\(\max_{|\alpha|\le\beta}\|D^\alpha f\|_{L_\infty(\Theta)}\)
for \(p=\infty\).

The H\"older space \(C^{k,\gamma}(\Theta)\), where \(k\ge0\)
is an integer and \(0<\gamma\le1\), consists of functions with
continuous derivatives through order \(k\) up to the boundary,
whose derivatives of order \(k\) satisfy
\(|D^\alpha f(x)-D^\alpha f(y)|\le M\|x-y\|^\gamma\)
for all \(x,y\in\Theta\) and some finite \(M\).
For integer smoothness \(\beta\), we use \(C^{\beta-1,1}(\Theta)\):
the derivatives of order \(\beta-1\) are Lipschitz continuous.

The space \(C^\beta(\Theta)\) denotes \(\beta\)-times continuously differentiable functions, with norm
\(
\|f\|_{C^\beta(\Theta)}
=
\max_{|\alpha|\le\beta}
\sup_{\theta\in\Theta}|D^\alpha f(\theta)|.
\)
On the compact hyperrectangle \(\Theta\) considered here,
membership in \(C^\beta(\Theta)\) implies both Sobolev smoothness
\(\beta\), for every \(1\le p\le\infty\), and H\"older smoothness
\(\beta\) in the sense defined above \citep{adams2003}.

\subsection{A General Interpolation Inequality}

The following bounded-domain form of the
Gagliardo--Nirenberg inequality allows the function,
its lower-order derivatives, and its higher-order
derivatives to be measured in different norms
\citep{gagliardo1959,nirenberg1959,adams2003}.
\begin{lemma}[Gagliardo--Nirenberg inequality in $L_p$ norms]
\label{lem:GN-general}
Let \(r\) and \(\beta\) be integers satisfying
\(0\le r<\beta\), and let \(1\le p\le\infty\) and
\(1\le q\le s\le\infty\).
Suppose that \(a\in[r/\beta,1)\) satisfies
\begin{equation}
\frac1p
=
\frac rd
+
a\left(\frac1s-\frac{\beta}{d}\right)
+
\frac{1-a}{q}.
\label{eq:app-GN-balance}
\end{equation}
 Then there exists a constant \(C>0\),
depending only on \(\Theta,d,p,q,s,r,\beta,a\) and not on \(f\),
such that
\begin{equation}
\|D^{(r)}f\|_{L_p(\Theta)}
\le
C\left(
\|D^{(\beta)}f\|_{L_s(\Theta)}^a
\|f\|_{L_q(\Theta)}^{1-a}
+
\|f\|_{L_q(\Theta)}
\right)
\label{eq:app-GN-general}
\end{equation}
for every \(f\in L_q(\Theta)\) with \(D^{(\beta)}f\in L_s(\Theta)\).
\end{lemma}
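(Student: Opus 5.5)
We reduce the bounded-domain inequality~\eqref{eq:app-GN-general} to the classical Gagliardo--Nirenberg inequality on \(\mathbb R^d\) by means of an extension operator, and absorb the lower-order terms produced by the extension into the additive term \(\|f\|_{L_q(\Theta)}\).

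\emph{Reductions.} An affine change of variables maps \(\Theta\) onto the unit cube \(Q=[0,1]^d\). Each norm in~\eqref{eq:app-GN-general} changes only by a factor depending on the side lengths of \(\Theta\) and on \(p,q,s,r,\beta\). Hence it suffices to treat \(Q\), and the resulting constant depends only on \(\Theta,d,p,q,s,r,\beta,a\), as claimed. Write \(A=\|D^{(\beta)}f\|_{L_s(Q)}\) and \(B=\|f\|_{L_q(Q)}\).

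\emph{Step 1: intermediate control on \(Q\).} We first show that \(f\in W^{\beta,s}(Q)\) with
\[
\|f\|_{W^{\beta,s}(Q)}\le C\,(A+B).
\]
To do this, let \(\Pi f\) be the projection of \(f\) onto polynomials of degree \(<\beta\), computed from averages of \(f\) over \(Q\). The Poincaré/Bramble--Hilbert inequality on the convex set \(Q\) gives
\[
\|D^\alpha(f-\Pi f)\|_{L_s(Q)}\le C\,A,\qquad |\alpha|\le\beta .
\]
The polynomial part is controlled by \(\|\Pi f\|_{W^{\beta,s}(Q)}\le C\|f\|_{L_1(Q)}\le C\,B\). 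This uses that all norms are equivalent on a finite-dimensional polynomial space and that \(L_q(Q)\subset L_1(Q)\) on the bounded set \(Q\).

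\emph{Step 2: extension and the whole-space inequality.} On a cube, a higher-order reflection across the faces (Hestenes/Seeley type, or Stein's universal extension operator) gives a linear operator \(E\) that maps functions on \(Q\) to compactly supported functions on \(\mathbb R^d\). We use that \(E\) is simultaneously bounded \(L_q(Q)\to L_q(\mathbb R^d)\) and \(W^{\beta,s}(Q)\to W^{\beta,s}(\mathbb R^d)\). Combined with Step~1 this yields
\[
\|D^{(\beta)}Ef\|_{L_s(\mathbb R^d)}\le C(A+B),\qquad \|Ef\|_{L_q(\mathbb R^d)}\le C\,B .
\]
Next we apply the Gagliardo--Nirenberg inequality on \(\mathbb R^d\) to \(Ef\), with exponent \(a\in[r/\beta,1)\) satisfying the balance relation~\eqref{eq:app-GN-balance}. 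Because \(a<1\), Nirenberg's exceptional case, which concerns only \(a=1\), is excluded. This gives
\[
\|D^{(r)}f\|_{L_p(Q)}\le \|D^{(r)}Ef\|_{L_p(\mathbb R^d)}\le C\,(A+B)^aB^{1-a}\le C\,(A^aB^{1-a}+B),
\]
where the last step uses \((x+y)^a\le x^a+y^a\) for \(0\le a<1\). This is exactly~\eqref{eq:app-GN-general}.

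\emph{Expected obstacle.} The main difficulty is Step~2. We need the whole-space inequality in its homogeneous form, with no lower-order terms, under exactly the stated range of \((p,q,s,a)\), including the endpoints \(p,q,s\in\{1,\infty\}\). We also need an extension operator that is bounded on \(L_q\) and on \(W^{\beta,s}\) at the same time, for those endpoint exponents. Reflection operators on a cube are bounded on every \(W^{k,t}\) with \(0\le k\le\beta\) and \(1\le t\le\infty\), which covers both requirements. For the whole-space inequality we would cite Nirenberg's theorem. Its applicability is equivalent to the balance condition~\eqref{eq:app-GN-balance}, together with \(q\le s\) and \(a\ge r/\beta\).
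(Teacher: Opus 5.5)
Your argument is correct in outline, but it takes a different route from the paper. The paper does not prove this lemma: it quotes the bounded-domain Gagliardo--Nirenberg inequality from Gagliardo, Nirenberg and Adams and only derives the four exponent specializations. You instead reduce the bounded-domain statement to the homogeneous whole-space inequality. You use a polynomial projection with Bramble--Hilbert to get $\|f\|_{W^{\beta,s}(Q)}\le C(A+B)$, a reflection extension bounded on both $L_q$ and $W^{\beta,s}$, and the bound $(A+B)^a\le A^a+B^a$. This is essentially how such bounded-domain versions are proved, and it gives something the citation does not. It shows where the additive term $\|f\|_{L_q(\Theta)}$ comes from: the polynomial part of degree at most $\beta-1$, and the cutoff in the extension. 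It also shows why the constant depends only on the listed quantities, and that the reduction itself never uses $q\le s$. The citation is shorter but leaves the reader to match the hypotheses to the cited theorem.

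Two points need tightening.

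First, Step~1 presupposes its conclusion. Bramble--Hilbert is a statement about functions already in $W^{\beta,s}(Q)$, but the hypothesis only gives $f\in L_q(Q)$ and $D^{(\beta)}f\in L_s(Q)$. You can close this in either of two ways:
\begin{itemize}
\item invoke the classical Sobolev/Deny--Lions fact that, on a cube, $L_1$ control plus top-order control in $L_s$ implies membership in $W^{\beta,s}$;
\item or prove the estimate for smooth approximants, obtained by dilating toward the center of the cube and then mollifying, and pass to the limit.
\end{itemize}

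Second, your account of the whole-space exceptions is incomplete. Besides the $a=1$ exception, consider $r=0$, $q=p=\infty$, $1<s<\infty$, $s\beta=d$. Here the balance relation holds for every $a$, yet the homogeneous inequality fails for every $a\in(0,1)$. It would give $\|u\|_{L_\infty}\lesssim\|D^{(\beta)}u\|_{L_s}$ for compactly supported $u$, and a truncated $\log(1/|x|)$ with $d=2$, $\beta=1$, $s=2$ shows this is false. What keeps this case out of your argument is the hypothesis $q\le s$, not $a<1$. The same hypothesis also excludes Nirenberg's other exception ($r=0$, $s\beta<d$, $q=\infty$), which compact support of $Ef$ would handle anyway. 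So Step~2 is valid in the lemma's range. However, your closing claim that the whole-space theorem applies ``equivalently'' under the balance relation together with $q\le s$ and $a\ge r/\beta$ is not accurate: Nirenberg's theorem does not itself assume $q\le s$.
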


\subsection{Derivation of the Four Special Cases}

Solving \eqref{eq:app-GN-balance} gives
\[
a=\frac{r+d/q-d/p}{\beta+d/q-d/s}.
\]
For \(0\le r<\beta\), substituting the following four choices
into \eqref{eq:app-GN-general} yields the desired inequalities.
All norms below are over \(\Theta\).

\noindent\textbf{(i)} \(L_2\to L_2\).
Taking \(p=q=s=2\) gives \(a=r/\beta\), hence
\begin{equation}
\|D^{(r)}f\|_{L_2(\Theta)}
\le C\left(
\|D^{(\beta)}f\|_{L_2(\Theta)}^{r/\beta}
\|f\|_{L_2(\Theta)}^{(\beta-r)/\beta}
+\|f\|_{L_2(\Theta)}\right).
\label{eq:app-GN-22}
\end{equation}

\noindent\textbf{(ii)} \(L_\infty\to L_\infty\).
Taking \(p=q=s=\infty\) again gives \(a=r/\beta\), hence
\begin{equation}
\|D^{(r)}f\|_{L_\infty(\Theta)}
\le C\left(
\|D^{(\beta)}f\|_{L_\infty(\Theta)}^{r/\beta}
\|f\|_{L_\infty(\Theta)}^{(\beta-r)/\beta}
+\|f\|_{L_\infty(\Theta)}\right).
\label{eq:app-GN-infinf}
\end{equation}

\noindent\textbf{(iii)} \(L_2\to L_\infty\) using \(\|D^{(\beta)}f\|_{L_2(\Theta)}\).
Taking \(p=\infty\) and \(q=s=2\) gives
\(a=(r+d/2)/\beta\). The condition \(a<1\) requires
\(\beta>r+d/2\), under which
\begin{equation}
\|D^{(r)}f\|_{L_\infty(\Theta)}
\le C\left(
\|D^{(\beta)}f\|_{L_2(\Theta)}^{(r+d/2)/\beta}
\|f\|_{L_2(\Theta)}^{(\beta-r-d/2)/\beta}
+\|f\|_{L_2(\Theta)}\right).
\label{eq:app-GN-2inf-L2}
\end{equation}

\noindent\textbf{(iv)} \(L_2\to L_\infty\) using \(\|D^{(\beta)}f\|_{L_\infty(\Theta)}\).
Taking \(p=s=\infty\) and \(q=2\) gives
\(a=(2r+d)/(2\beta+d)\in[r/\beta,1)\), hence
\begin{equation}
\|D^{(r)}f\|_{L_\infty(\Theta)}
\le C\left(
\|D^{(\beta)}f\|_{L_\infty(\Theta)}^{(2r+d)/(2\beta+d)}
\|f\|_{L_2(\Theta)}^{2(\beta-r)/(2\beta+d)}
+\|f\|_{L_2(\Theta)}\right).
\label{eq:app-GN-2inf-Linf}
\end{equation}

In Lemma~\ref{lem:GN}, \eqref{eq:GN_p} combines cases (i)--(ii)
with \(p=2\) and \(p=\infty\), respectively, while
\eqref{eq:GN_2_inf_L2} and \eqref{eq:GN_2_inf} correspond
to cases (iii) and (iv), respectively.

\section{Derivative Norm Bounds for Representative Surrogate Learning Methods}
\label{app:derivative-bounds}

To apply Theorem~\ref{thm:ltd-general} and Corollary~\ref{cor:l2-linf},
we need to establish \(\|D^{(\beta)}\widehat J_n\|_{L_p(\Theta)}=O_p(1)\),
with \(p\in\{2,\infty\}\) chosen according to the relevant interpolation inequality. 
This section provides proof sketches for these derivative norm bounds for five representative surrogate learning methods: KR, LPR, KRR, MKL and smooth NN.
We assume that $(\theta_i,F_i)_{i=1}^n$ are independent and identically distributed, with
\(
F_i=J(\theta_i)+\varepsilon_i,
\E(\varepsilon_i\mid\theta_i)=0\)
and 
\(\E(\varepsilon_i^2\mid\theta_i)\) is bounded.
The design density is bounded and bounded away from zero on \(\Theta\).
For the uniform bounds for the local methods (KR and LPR), we additionally assume \(\E|F_i|^t<\infty\) for some \(t>2\), together with the conditional moment, kernel regularity, and bandwidth conditions required by the corresponding uniform maximal inequalities, applied to all differentiated kernel averages through order \(\beta\) \citep{hansen2008}.

\subsection{Kernel regression}
\label{app:derivative-kr}

\paragraph{Proof sketch.}
We use the same compact interior evaluation region \(\Theta\) and the same design and kernel conditions as in Appendix~\ref{app:kr}; all norms below are over \(\Theta\).
Write the population smoothing ratio as
\[
R_h(\theta)=
\frac{\int K_h(\theta-x)J(x)p(x)\,dx}
{\int K_h(\theta-x)p(x)\,dx}.
\]
Then
\begin{equation*}
\widehat J_n-J=\underbrace{R_h-J}_{\text{deterministic part}}
+\underbrace{\widehat J_n-R_h}_{\text{stochastic part}}.
\end{equation*}
Locally, integration by parts transfers derivatives from the kernel to \(Jp\) and \(p\). As in Appendix~\ref{app:kr}, interior evaluation and rapid kernel decay make the tail and boundary contributions negligible. Since the derivatives of \(Jp\) and \(p\) through order \(\beta\) are bounded in a fixed neighborhood of \(\Theta\), their local kernel averages are uniformly bounded.
Since the population denominator stays away from zero,
the quotient rule gives
\begin{equation}
\max_{|\alpha|\le\beta}\norm{D^\alpha R_h}_{L_\infty(\Theta)}\le C.
\label{eq:kr-signal}
\end{equation}
For the stochastic part, each derivative costs a factor $h^{-1}$.
Integrated second moments and a uniform maximal inequality with additional tail assumptions, respectively,
give, for $|\alpha|\le\beta$,
\begin{align}
\norm{D^\alpha(\widehat J_n-R_h)}_{L_2(\Theta)}
&=\Op\!\left((nh^{d+2|\alpha|})^{-1/2}\right),
\label{eq:kr-L2-fluctuation}\\
\norm{D^\alpha(\widehat J_n-R_h)}_{L_\infty(\Theta)}
&=\Op\!\left(\sqrt{\frac{\log n}{nh^{d+2|\alpha|}}}\right).
\label{eq:kr-Linf-fluctuation}
\end{align}
Thus the triangle inequality gives
\begin{align*}
\norm{D^{(\beta)}\widehat J_n}_{L_2(\Theta)}
&=\Op\!\left(1+(nh^{d+2\beta})^{-1/2}\right),
\\
\norm{D^{(\beta)}\widehat J_n}_{L_\infty(\Theta)}
&=\Op\!\left(1+\sqrt{\frac{\log n}{nh^{d+2\beta}}}\right).
\end{align*}
The bandwidths
\[
h_2\asymp n^{-1/(2\beta+d)},\qquad
h_\infty\asymp(\log n/n)^{1/(2\beta+d)}
\]
make the respective derivative norms $\Op(1)$. \hfill\Halmos

\subsection{Local polynomial regression}
\label{app:derivative-lpr}

\paragraph{Proof sketch.}
Throughout this proof sketch, all expectations are conditional on the
design points $\theta_1,\ldots,\theta_n$; we suppress the conditioning
in the notation.
Suppose that, with probability tending to one, uniformly over \(\Theta\), the
normalized local design matrices have eigenvalues bounded above and away from zero, and local sample counts are \(O(nh^d)\).
Assume that the kernel and its derivatives through order $\beta$ decay
rapidly, as for the Gaussian kernel. As the bandwidth shrinks, the combined
contribution of observations outside any fixed neighborhood becomes negligible.
On the stated event, differentiating the local normal equations contributes
a factor $h^{-1}$ per derivative. Together with the local sample-count
bound and the rapid kernel decay, this gives
\begin{equation}
\sum_i|D^\alpha w_i(\theta)|\le Ch^{-|\alpha|},
\qquad
\sum_i|D^\alpha w_i(\theta)|^2
\le\frac{C}{nh^{d+2|\alpha|}},
\quad |\alpha|\le\beta.
\label{eq:lpr-weight-sums}
\end{equation}
To control the conditional mean, fix $\theta_0$ and hold the Taylor
polynomial of $J$ at $\theta_0$ through degree $\beta-1$ fixed while
the query point varies. Since $p\ge\beta-1$, the weights reproduce
this polynomial exactly. Differentiating this identity and evaluating
at $\theta_0$ gives
 \begin{equation*}
 \left.D_\theta^\alpha
 \E[\widehat J_n(\theta)]
 \right|_{\theta=\theta_0}
 =D^\alpha J(\theta_0)\mathbf 1\{|\alpha|<\beta\}
 +\sum_iD^\alpha w_i(\theta_0)
 \left[J(\theta_i)-\sum_{|\nu|\le\beta-1}
 \frac{D^\nu J(\theta_0)}{\nu!}(\theta_i-\theta_0)^\nu\right].
 \end{equation*}
The bracketed remainder is $O(\|\theta_i-\theta_0\|^\beta)$.
Although it grows with distance, the rapid decay of the differentiated
weights keeps its total contribution bounded by $Ch^{\beta-|\alpha|}$.
Hence
\begin{equation}
\max_{|\alpha|\le\beta}
\left\|D^\alpha\E[\widehat J_n]\right\|_{L_\infty(\Theta)}
\le C.
\label{eq:lpr-signal}
\end{equation}
Finite volume gives the corresponding derivative $L_2(\Theta)$ bound. 

For the centered noise, bounded conditional variance and
\eqref{eq:lpr-weight-sums} give
\begin{equation*}
\left\|D^\alpha\left\{\widehat J_n
-\E[\widehat J_n]\right\}\right\|_{L_2(\Theta)}
=\Op\!\left((nh^{d+2|\alpha|})^{-1/2}\right).
\end{equation*}
With additional tail and kernel
regularity conditions, the conditional maximal inequality gives
\begin{equation}
\left\|D^{(\beta)}\left\{\widehat J_n
-\E[\widehat J_n]\right\}\right\|_{L_\infty(\Theta)}
=\Op\!\left(\sqrt{\frac{\log n}{nh^{d+2\beta}}}\right).
\label{eq:lpr-noise-Linf}
\end{equation}
Combining the conditional mean and noise bounds yields
\begin{align*}
\norm{D^{(\beta)}\widehat J_n}_{L_2(\Theta)}
&=\Op\!\left(1+(nh^{d+2\beta})^{-1/2}\right),
\\
\norm{D^{(\beta)}\widehat J_n}_{L_\infty(\Theta)}
&=\Op\!\left(1+\sqrt{\frac{\log n}{nh^{d+2\beta}}}\right).
\end{align*}
At $h_2\asymp n^{-1/(2\beta+d)}$ and
$h_\infty\asymp(\log n/n)^{1/(2\beta+d)}$, the respective derivative norms are $\Op(1)$. \hfill\Halmos

\subsection{Kernel ridge regression}
\label{app:derivative-krr}
For the fixed Mat\'ern kernel with $\nu=\beta-d/2>0$ specified in
Appendix~\ref{app:krr}, the RKHS--Sobolev norm equivalence gives
\begin{equation}
\norm{D^\alpha g}_{L_2(\Theta)}
\le\norm{g}_{W^{\beta,2}(\Theta)}\le C_D\norm{g}_{\Hcal},
\qquad |\alpha|\le\beta,\quad g\in\Hcal,
\label{eq:krr-derivative-operator}
\end{equation}
where $C_D$ depends only on the fixed kernel and domain.
Correspondingly, we also use the eigenvalue bound $\mu_j\lesssim j^{-2\beta/d}$,
which holds for the population covariance operator of the Mat\'ern kernel under the stated design conditions.
\paragraph{Proof sketch.}
The empirical covariance operator is
\[
T_n=\frac1n\sum_i k(\theta_i,\cdot)\otimes k(\theta_i,\cdot).
\]
The normal equation is
$(T_n+\lambda I)\widehat J_n=n^{-1}\sum_iF_i k(\theta_i,\cdot)$.
Substituting $F_i=J(\theta_i)+\varepsilon_i$ and subtracting $J$ gives
\begin{equation}
\widehat J_n-J=-\lambda(T_n+\lambda I)^{-1}J
+(T_n+\lambda I)^{-1}\frac1n\sum_i\varepsilon_i k(\theta_i,\cdot).
\label{eq:krr-decomp}
\end{equation}
The first term is the conditional bias. Since $T_n$ is positive and
self-adjoint, $\lambda(T_n+\lambda I)^{-1}$ has operator norm at most
one: its spectral multiplier is $\lambda/(t+\lambda)\le1$.
Thus the derivative operator bound gives
\begin{equation*}
\left\|D^{(\beta)}\{\lambda(T_n+\lambda I)^{-1}J\}\right\|_{L_2(\Theta)}
\le C_D\left\|\lambda(T_n+\lambda I)^{-1}J\right\|_{\Hcal}
\le C_D\norm{J}_{\Hcal}.
\end{equation*}

For the stochastic term, conditional on the design,
\begin{equation}
\E\!\left[
\left\|(T_n+\lambda I)^{-1}\frac1n\sum_i\varepsilon_i k(\theta_i,\cdot)
\right\|_{\Hcal}^2\,\right]\le\frac Cn\operatorname{tr}[T_n(T_n+\lambda I)^{-2}]\le\frac C{n\lambda}
\operatorname{tr}[T_n(T_n+\lambda I)^{-1}].
\label{eq:krr-noise}
\end{equation}
The last inequality uses $t/(t+\lambda)^2\le
\lambda^{-1}t/(t+\lambda)$. The map
$A\mapsto A(A+\lambda I)^{-1}$ is operator concave, so Jensen's
inequality and the population eigenvalue bound imply
\[
\E\operatorname{tr}[T_n(T_n+\lambda I)^{-1}]
\le\sum_{j\ge1}\frac{\mu_j}{\mu_j+\lambda}
\lesssim\lambda^{-d/(2\beta)},\qquad 0<\lambda\le1.
\]
Hence, \eqref{eq:krr-derivative-operator} gives
\[
\left\|D^{(\beta)}\left\{
(T_n+\lambda I)^{-1}\frac1n\sum_i\varepsilon_i k(\theta_i,\cdot)
\right\}\right\|_{L_2(\Theta)}
=\Op\!\left(\{n\lambda^{1+d/(2\beta)}\}^{-1/2}\right).
\]
Therefore
\begin{equation*}
\norm{D^{(\beta)}\widehat J_n}_{L_2(\Theta)}
=\Op\!\left(1+\{n\lambda^{1+d/(2\beta)}\}^{-1/2}\right).
\end{equation*}
At $\lambda_n\asymp n^{-2\beta/(2\beta+d)}$,
$n\lambda_n^{1+d/(2\beta)}\asymp1$, so the derivative norm is $\Op(1)$. \hfill\Halmos

\subsection{Multiple kernel learning}
\label{app:derivative-mkl}
For fixed $M$, consider the estimator
\begin{equation}
\min_{f_m\in\Hcal_m}
\left\{\frac1n\sum_i\left(F_i-\sum_m f_m(\theta_i)\right)^2
+\lambda\left(\sum_m\norm{f_m}_{\Hcal_m}\right)^2\right\}.
\label{eq:mkl-objective}
\end{equation}
Let $(\widehat f_m)_{m=1}^M$ be a minimizer and set $\widehat J_n=\sum_{m=1}^M\widehat f_m$.
We adopt the kernel and design conditions of Appendix~\ref{app:mkl},
with $d_*:=\max_m d_m$ and $\beta>d_*/2$.
Assume a fixed decomposition $J=\sum_mJ_m$, $J_m\in\mathcal H_m$.
In addition, assume that functions in each $\mathcal H_m$ have weak
derivatives through order $\beta$ satisfying
\begin{equation}
\|D^\alpha \widehat f_m\|_{L_2(\Theta_m)}\le C_m\|\widehat f_m\|_{\mathcal H_m},
\qquad |\alpha|\le\beta,\quad \widehat f_m\in\mathcal H_m.
\label{eq:mkl-derivative-operator}
\end{equation}

\paragraph{Proof sketch.}
Comparing the optimizer in \eqref{eq:mkl-objective} with the fixed
true decomposition, and substituting $F_i=J(\theta_i)+\varepsilon_i$,
gives the basic inequality
\begin{equation}
\frac1n\sum_i\{\widehat J_n(\theta_i)-J(\theta_i)\}^2
+\lambda\left(\sum_m\norm{\widehat f_m}_{\Hcal_m}\right)^2\le\frac2n\sum_i\varepsilon_i
\{\widehat J_n(\theta_i)-J(\theta_i)\}
+\lambda\left(\sum_m\norm{J_m}_{\Hcal_m}\right)^2.
\label{eq:mkl-basic}
\end{equation}

Use the RKHS $\Hcal_+$ of the fixed sum kernel $k_+=\sum_mk_m$, with
empirical covariance operator
\[
T_{n,+}=\frac1n\sum_i k_+(\theta_i,\cdot)\otimes k_+(\theta_i,\cdot).
\]
The sum-kernel norm is bounded by the sum of the component norms, so
\[
\norm{\widehat J_n-J}_{\Hcal_+}^2
\le2\left(\sum_m\norm{\widehat f_m}_{\Hcal_m}\right)^2
+2\left(\sum_m\norm{J_m}_{\Hcal_m}\right)^2.
\]
Cauchy--Schwarz with the positive operator $T_{n,+}+\lambda I$ bounds the noise cross term by
\begin{equation*}
\begin{aligned}
\frac2n\sum_i\varepsilon_i
\{\widehat J_n(\theta_i)-J(\theta_i)\}
&\le\frac14\left\{
\frac1n\sum_i\{\widehat J_n(\theta_i)-J(\theta_i)\}^2
+\lambda\norm{\widehat J_n-J}_{\Hcal_+}^2\right\}\\
&\qquad+4\left\langle
\frac1n\sum_i\varepsilon_i k_+(\theta_i,\cdot),
(T_{n,+}+\lambda I)^{-1}
\frac1n\sum_i\varepsilon_i k_+(\theta_i,\cdot)
\right\rangle_{\Hcal_+}.
\end{aligned}
\end{equation*}
Substituting this into \eqref{eq:mkl-basic} 
yields
\begin{equation*}
\left(\sum_m\norm{\widehat f_m}_{\Hcal_m}\right)^2
\le3\left(\sum_m\norm{J_m}_{\Hcal_m}\right)^2 +\frac8\lambda\left\langle
\frac1n\sum_i\varepsilon_i k_+(\theta_i,\cdot),
(T_{n,+}+\lambda I)^{-1}
\frac1n\sum_i\varepsilon_i k_+(\theta_i,\cdot)
\right\rangle_{\Hcal_+}.
\end{equation*}

Conditional on the design, independence and bounded error variance
bound the expectation of the quadratic form by
\[
\frac Cn\operatorname{tr}[T_{n,+}(T_{n,+}+\lambda I)^{-1}].
\]

The effective dimension of the sum kernel is bounded by the sum of
the component effective dimensions. Combining this bound with the
operator concavity argument used for KRR gives, for $0<\lambda\le1$,
\[
\E\operatorname{tr}[T_{n,+}(T_{n,+}+\lambda I)^{-1}]
\le\sum_m\sum_{j\ge1}\frac{\mu_{m,j}}{\mu_{m,j}+\lambda}
\lesssim\sum_m\lambda^{-d_m/(2\beta)}
\lesssim\lambda^{-d_*/(2\beta)}.
\]
The last step uses fixed $M$ and $d_m\le d_*$.
Therefore,
\begin{equation*}
\left(\sum_m\norm{\widehat f_m}_{\Hcal_m}\right)^2
=\Op\!\left(1+\frac1{n\lambda^{1+d_*/(2\beta)}}\right).
\end{equation*}

The component bounds in \eqref{eq:mkl-derivative-operator}
extend to $\Theta$ up to fixed volume factors.
The triangle inequality then gives
\[
\norm{D^{(\beta)}\widehat J_n}_{L_2(\Theta)}
\le C\sum_m\norm{\widehat f_m}_{\Hcal_m},\qquad
\norm{D^{(\beta)}J}_{L_2(\Theta)}
\le C\sum_m\norm{J_m}_{\Hcal_m}.
\]
Consequently,
\begin{equation*}
\begin{aligned}
\norm{D^{(\beta)}\widehat J_n}_{L_2(\Theta)}
+\norm{D^{(\beta)}(\widehat J_n-J)}_{L_2(\Theta)}
\le C\left(\sum_m\norm{\widehat f_m}_{\Hcal_m}
+\sum_m\norm{J_m}_{\Hcal_m}\right)=\Op\!\left(1+\{n\lambda^{1+d_*/(2\beta)}\}^{-1/2}\right).
\end{aligned}
\end{equation*}
At $\lambda_n\asymp n^{-2\beta/(2\beta+d_*)}$,
$n\lambda_n^{1+d_*/(2\beta)}\asymp1$, so the derivative norm is
$\Op(1)$. \hfill \Halmos

\subsection{Smooth neural networks}
\label{app:derivative-nn}
\paragraph{Proof sketch.}
Feasibility \eqref{eq:NN_feasible} gives, for every sample,
\begin{equation*}
\norm{D^{(\beta)}\widehat J_n}_{L_\infty(\Theta)}\le R.
\end{equation*}
Finite volume also gives
$\norm{D^{(\beta)}\widehat J_n}_{L_2(\Theta)}\le|\Theta|^{1/2}R$.
Thus both required derivative norms are  $O(1)$. \hfill \Halmos

\section{Geometric Asian Option Example}
\label{app:asian}

This appendix supplements Section~\ref{subsec:numerical-asian} with the benchmark estimators and implementation choices.

\subsection{Benchmark methods}

For a single option, geometric Brownian motion gives
$S_t=x\exp\{(r-\sigma^2/2)t+\sigma W_t\}$. Thus the entire path,
and hence its geometric average, scales with the initial price:
$G_m(x)=xA_m$, where
$A_m=G_m(x)/x$ is a random multiplier that does not depend on
$x$ when the underlying Brownian path is held fixed. For each simulated path, we compute $A_m$ by dividing its
monitored geometric average by the initial price.
The logarithm of this random multiplier is normal, with
\[
\mu_m=\mathbb E[\log A_m]=(r-\sigma^2/2)\frac{T(m+1)}{2m},
\qquad
v_m=\operatorname{Var}(\log A_m)
=\sigma^2\frac{T(m+1)(2m+1)}{6m^2}.
\]
Consequently, the price and its first two derivatives are available in
closed form \citep{glasserman2004}. With
$z=(\log(x/K)+\mu_m+v_m)/\sqrt{v_m}$ and
$c_m=\exp(-rT+\mu_m+v_m/2)$, they are
\begin{align*}
V_m(x)&=xc_m\Phi(z)-Ke^{-rT}\Phi(z-\sqrt{v_m}),\\
V'_m(x)&=c_m\Phi(z),\qquad
V''_m(x)=\frac{c_m\phi(z)}{x\sqrt{v_m}}.
\end{align*}
Here $\Phi$ and $\phi$ are the standard normal distribution and density
functions. The portfolio price is $V_{d,m}(\mathbf x)=\sum_{j=1}^dV_m(x_j)$,
so its $j$th Delta and diagonal Gamma are $V'_m(x_j)$ and $V''_m(x_j)$;
mixed second derivatives vanish. These formulas supply the evaluation
benchmarks only. Training and comparator observations are generated by
simulating the asset paths at all $m$ monitoring dates.

The pathwise method differentiates each simulated payoff before averaging
\citep{broadie1996,glasserman2004}. Since
$\partial G_m(x)/\partial x=A_m$, its Delta estimator is
\[
\widehat\Delta_{\rm PW}(x)
=\frac{e^{-rT}}{N}\sum_{i=1}^N A_{m,i}\mathbf1\{xA_{m,i}>K\}.
\]
A further differentiation encounters the indicator's discontinuity at
the strike. For Gamma, we therefore smooth this boundary contribution
with a Gaussian kernel, following \citet{liu2011}:
\[
\widehat\Gamma_{\rm PW}(x)
=\frac{e^{-rT}}{Nx h_A}\sum_{i=1}^N
A_{m,i}^2\phi\!\left(\frac{A_{m,i}-K/x}{h_A}\right),
\qquad h_A=(4/3)^{1/5}\widehat\sigma_A N^{-1/5}.
\]
The bandwidth $h_A$ is measured on the scale of $A_m$.
The sample standard deviation $\widehat\sigma_A$ of $A_m$ is computed from the
same $N=50{,}000$ paths used for estimation, without additional pilot paths.

For the geometric Asian option, an LR estimator could be derived
directly from the lognormal density of $G_m(x)$. This construction,
however, relies on a special property of the geometric average and
does not extend directly to the more commonly used arithmetic Asian option, whose average
lacks a closed-form density. We choose the geometric Asian option
because its exact price and derivatives provide evaluation benchmarks.
For LR estimation, we use the more general state-space enlargement
approach, taking the entire monitored asset path as the state.

By the Markov property of geometric Brownian motion, the joint
path density factors into transition densities. Only the first
transition depends explicitly on the initial price $x$ and
therefore determines the LR weights.
Let $\Delta t=T/m$ and
define the standardized first Brownian increment
\[
Z_1=
\frac{\log(S_{\Delta t}/x)-(r-\sigma^2/2)\Delta t}
{\sigma\sqrt{\Delta t}},
\qquad Z_1\sim N(0,1).
\]
The LR weights for Delta and Gamma, respectively, are
\[
\frac{Z_1}{x\sigma\sqrt{\Delta t}},
\qquad
\frac{Z_1^2-1}{x^2\sigma^2\Delta t}
-\frac{Z_1}{x^2\sigma\sqrt{\Delta t}}.
\]
Each LR estimate averages the corresponding weight times the
discounted payoff $e^{-rT}(xA_m-K)^+$.
These same path-density weights can be applied to arithmetic Asian
payoffs without knowing the distribution of their averages.
The factors involving $\Delta t$ help explain why the LR estimators
can become more variable as monitoring becomes more frequent.

For the portfolio, each benchmark uses the payoff of the asset whose
sensitivity is being estimated. Within a replication, paths are shared
across test points and between PW and LR. Pathwise Delta and LR require
no bandwidth selection.

\subsection{LTD estimator}\label{app:local-protocol}

Each method fits the portfolio price as a function of the full vector
of initial prices. KR and LPR construct a local fit near each query,
whereas KRR, MKL, and NN fit one global model to the training sample.
KRR, MKL, and NN share the same sites and simulated responses within
each replication. Although the portfolio price is additive, none of
the five learners is given an additive representation or a shared
one-asset pricing function. This keeps the structural information available to the
learners comparable, so that performance differences reflect their
ability to learn from the simulated portfolio responses directly.

In both examples, training sites are fixed across replications
for each setting and design family, while simulation responses
are generated independently in each replication.
All tuning uses the current replication's simulated responses,
without additional pilot simulations or derivative observations.
GCV balances response fit against model complexity
\citep{golub1979generalized}.
After fitting, we hold the selected parameters fixed and
differentiate the fitted surface. When inputs or responses have
been rescaled, we apply the chain rule to obtain derivatives
of the portfolio price with respect to the original initial prices.

\subsubsection{KR and LPR.}

KR and LPR fit the price locally around each query and then
differentiate the fitted surface. The main choices are how to
weight the training responses and how large a neighborhood to use.
We select bandwidths by GCV and enlarge them when needed to avoid
unstable derivative estimates. LPR also uses a small ridge adjustment
to stabilize the local polynomial fit. These steps are detailed below.

\paragraph{Local fits and differentiation.}
KR forms a normalized, locally weighted combination of responses,
using a tensor-product fourth-order kernel with factor
$K_4(t)=\tfrac12(3-t^2)\phi(t)$. This higher-order kernel can assign
negative weights to reduce smoothing bias. LPR fits a Gaussian-weighted quadratic polynomial, including cross terms, near each query. Derivatives can be estimated either from
the polynomial coefficients or by differentiating the fitted
response surface; these two estimates need not coincide
\citep{fan1996,racine2016}. We use the latter approach to implement
LTD. Specifically, the fitted response at each query is the local
polynomial's intercept. We differentiate this intercept as a
function of the query, accounting for changes in both the local
weights and the fitted coefficients, while holding the selected
bandwidths fixed.

\paragraph{Bandwidth selection.}
KR and LPR use the rescaled coordinates $z_j=(x_j-100)/30$,
with 30 as a fixed reference scale. The bandwidths and
bandwidth-enlargement conditions are expressed
in these coordinates.

Bandwidths determine the size of the neighborhood used by KR and LPR.
In each replication, we randomly select up to 200 training sites
and their simulated sample means for bandwidth selection, using
all sites if fewer than 200 are available. For each candidate
bandwidth vector, we fit the local estimator on this subset and
compute its response-based GCV score. The same subset is used
throughout the search, and the vector with the lowest score is
selected. Restricting this step
to a subset reduces the cost of repeated local fits.

The initial bandwidth grid is
$\{0.04,0.06,0.09,0.135,0.2,0.3,0.45,0.675,1,1.5,2.25,3.4\}$.
We first search a common bandwidth, then adjust each coordinate once
using factors $\{0.5,0.75,1,4/3,2\}$.
Candidate bandwidths in this coordinate search are restricted
to $[0.04,3.4]$; the subsequent safeguards may enlarge them
beyond this interval.

\paragraph{Bandwidth safeguards for derivatives.}
Because GCV only assesses response prediction, its selected bandwidths
for KR and LPR may be too small for stable derivative estimation.
We therefore check the selected bandwidths against coverage and
derivative-stability conditions using the full number of training
sites and the total simulation budget.

Let $\widetilde h=(\widetilde h_1,\ldots,\widetilde h_d)$ denote
the bandwidth vector selected by GCV. We use the final bandwidth
vector $h=c\widetilde h$, where $c\ge1$ is the smallest common
multiplier for which
\[
n\prod_{k=1}^d h_k\ge(\log n)^2,\qquad
B\Bigl(\prod_{k=1}^d h_k\Bigr)h_j^{2r_*}\ge(\log B)^2
\quad(j=1,\ldots,d).
\]
Here $n$ is the full number of training sites, $B$ is the total
simulation budget, and $r_*=2$ is the highest derivative order
estimated in the option experiment. If the GCV-selected bandwidths
already satisfy both conditions, we take $c=1$.
The first condition guards against neighborhoods containing too
few training sites: for a regular design, the local site count
scales with $n\prod_k h_k$. The second reflects the variance
scaling of an $r_*$th derivative estimate, which involves the
factor $\{B(\prod_k h_k)h_j^{2r_*}\}^{-1}$.
It therefore prevents bandwidths from becoming too small relative
to the simulation budget and the highest derivative order $r_*$.
The final predictor and its derivatives use all training sites
with the resulting bandwidths.

\paragraph{Numerical stability of LPR.}
LPR also uses a small ridge adjustment to stabilize the local
least-squares solve. At each query, let $A=X^\top W X$ be the
local normal matrix, where $X$ contains the polynomial terms
and $W$ contains their observation weights. We replace $A$ by
\[
A+\varepsilon\,\operatorname{diag}(0,1,\ldots,1).
\]
The first entry corresponds to the intercept, which is left
unpenalized. The remaining entries correspond to the linear
and quadratic terms. We set $\varepsilon$ to $10^{-8}$ times
the mean of their diagonal entries in $A$, so the adjustment
scales with the local fitting problem.

\subsubsection{KRR and MKL.}\label{app:b-kernel}

KRR fits the whole price surface with one Gaussian kernel.
MKL learns a weighted combination of three Gaussian kernels with
different length scales. For MKL, we place a lower bound on each
weight and rescale the kernels to account for their different
derivative scales. We explain these choices and show how the
combined kernel is used for fitting and differentiation. We then
describe how length scales and regularization parameters are
selected from the simulated responses.

\paragraph{Kernels and response centering.}
KRR uses a Gaussian kernel depending jointly on all input coordinates.
Each coordinate has its own length scale: shorter scales allow more
rapid variation, whereas longer scales produce smoother fits.
MKL combines three such kernels, with length vectors
$\ell/2,\ell,2\ell$, and learns their relative contributions by
penalized least squares.

KRR and MKL use $u_j=(x_j-100)/50$, which maps the training
box $[50,150]^d$ to $[-1,1]^d$. Their kernel length scales
are expressed in these coordinates. The full Gaussian kernel is
\[
k_\ell(u,u')=\exp\!\left\{-\frac12\sum_{j=1}^d
\frac{(u_j-u'_j)^2}{\ell_j^2}\right\}.
\]

Let $y_i$ be the simulated sample mean at
training site $i$, and let $\bar y=n^{-1}\sum_{i=1}^n y_i$.
We fit the centered responses $y_i-\bar y$ and add $\bar y$
back to each prediction. This separates the overall price level
from the variation fitted by the kernels. The added constant
does not contribute to the estimated derivatives.

\paragraph{Regularizing the mixture weights.}
For MKL, the three kernel weights sum to one and are each at least $1/6$.
Equivalently, half of the total weight is allocated equally across
the three scales, while the remaining half is learned from the
responses. This restriction prevents the fit from concentrating
entirely on one scale. It is intended to stabilize derivative
estimation, since a kernel combination that fits noisy responses
well can still produce small-scale fluctuations that are amplified
by differentiation. The lower bound of $1/6$ on each kernel weight is a fixed
regularization choice used in both the option and wireless experiments.

\paragraph{Normalizing the candidate kernels.}
Before learning the mixture weights, MKL rescales each candidate
kernel by a positive constant computed from its function and
derivative scales. This rescales the kernel values and their
derivatives together. The purpose is to regularize kernels more
strongly when their derivatives are large relative to their
function scale. This kernel normalization is separate from
scaling the inputs and responses. For $a=1,2,3$, let $k_a$ have length vector
$\ell_a\in\{\ell/2,\ell,2\ell\}$, and define its Gram matrix by
$(K_a)_{ij}=k_a(u_i,u_j)$.
Set $P=I_n-\mathbf1_n\mathbf1_n^\top/n$, where $I_n$ is the
$n\times n$ identity matrix and $\mathbf1_n$ is the vector
of $n$ ones. The centered kernel matrix is $PK_aP$, and $\tau_a=\operatorname{tr}(PK_aP)/n$ is its average
diagonal entry. Dividing by $\tau_a$ alone would put the three
centered kernel matrices on the same function scale, with an
average diagonal entry of one. To also account for derivatives, define
\[
D_a=
\sum_{1\le|\nu|\le3}
\left.
\partial_u^\nu\partial_{u'}^\nu k_a(u,u')
\right|_{u'=u}.
\]
Here $\nu=(\nu_1,\ldots,\nu_d)$ is a vector of nonnegative
integers, where $\nu_j$ specifies the number of derivatives
taken in coordinate $j$, and $|\nu|=\sum_j\nu_j$.
For Gaussian kernels, each term is nonnegative and depends
only on the length vector $\ell_a$.
For example, in one dimension,
$D_a=\ell_a^{-2}+3\ell_a^{-4}+15\ell_a^{-6}$,
so shorter length scales give larger values of $D_a$. We use the rescaled matrices
\[
\widetilde K_a
=c\,\frac{PK_aP}{\tau_a+D_a},
\qquad
c=\left(
\frac13\sum_{b=1}^3\frac{\tau_b}{\tau_b+D_b}
\right)^{-1}.
\]
Relative to normalization by $\tau_a$ alone, the additional
factor $\tau_a/(\tau_a+D_a)$ reduces the contribution of kernels
with larger derivative scales. The common factor $c$ sets the
average scale of the three matrices to one:
$\frac13\sum_{a=1}^3\operatorname{tr}(\widetilde K_a)/n=1$.
All these factors are computed from the training inputs and
kernel length scales, without using derivative observations. 

\paragraph{Fitting and differentiation.}
The matrices $\widetilde K_a$ then replace the original candidate
matrices in the MKL fit. Specifically, MKL uses the combined matrix
\[
K_\eta=\sum_{a=1}^3\eta_a\widetilde K_a,
\qquad
\sum_{a=1}^3\eta_a=1,\quad \eta_a\ge\frac16.
\]
The factors $c/(\tau_a+D_a)$ are fixed before learning the mixture
weights $\eta_a$ from the simulated responses by penalized least
squares. Thus, kernel normalization and mixture-weight estimation
are separate steps.

For fixed mixture weights and regularization parameter $\lambda$,
the regression coefficients are
\[
\alpha=(K_\eta+n\lambda I_n)^{-1}
       (y-\bar y\mathbf1_n),
\]
where $y$ is the vector of simulated sample means at the training
sites and $\bar y$ is their average. At a new input $u$, let
$\widetilde k_a(u)$ be the vector of kernel evaluations against
the training sites, centered using the training-sample averages
and multiplied by the same factor $c/(\tau_a+D_a)$.
The fitted response is
\[
\widehat f(u)=\bar y+
\left(\sum_{a=1}^3\eta_a\widetilde k_a(u)\right)^\top\alpha.
\]
We obtain derivative estimates by differentiating this fitted
function with respect to $u$, holding all fitted parameters and
normalization factors fixed, and then converting the derivatives
to the original input units. The normalization therefore enters
both the regression fit and the kernel derivatives used for
prediction; it is not an adjustment applied after estimating
the derivatives.

In both the option and wireless experiments, MKL uses the
normalized candidate matrices $\widetilde K_a$ defined above,
with $D_a$ including derivative orders one through three.
KRR uses the original Gaussian kernel without this normalization.

\paragraph{Length-scale and regularization selection.}
For the option experiment, KRR and MKL use robust GCV
\citep{lukas2006robust} to discourage fits that respond too
strongly to simulation noise, particularly when estimating Gamma.
The wireless experiment follows the same search procedure using
ordinary GCV. Parameter selection uses only simulated responses;
the exact Deltas and Gammas play no part in parameter selection.

For KRR and MKL, we first select the base length vector $\ell$
using a single-Gaussian KRR fit. We start with a common length
across all coordinates, chosen from
$\{0.35,0.5,0.7,1,1.4,2,2.8,4,5.6,8,12\}$.
For each candidate length, we select the matrix shift
$\delta=n\lambda$ from 13 logarithmically spaced values between
$10^{-5}$ and $10$, and retain the length and shift with the
lowest robust GCV score.
Starting from this common length, we adjust each coordinate once
by comparing multiples $\{0.5,0.8,1,1.25,2\}$ of its current
length, keeping the other coordinates fixed and restricting
lengths to $[0.2,12]$. For each candidate vector, we search the
same shift grid and retain the choice with the lowest score.

The resulting vector $\ell$ defines the KRR kernel and the three
MKL kernels with length vectors $\ell/2,\ell,2\ell$.
With these kernels fixed, KRR and MKL select their regularization
parameters separately. KRR uses the matrix-shift grid described
above. MKL divides each value in this grid by $\tau_2$, the
function scale of its middle kernel at length vector $\ell$,
and uses the resulting values as candidates for $n\lambda$.
All searches use the full training sample.

The robust GCV criterion used for the option experiment is
\[
\operatorname{RGCV}
=
\left\{0.1+0.9\frac{\operatorname{tr}(H^\top H)}{n}\right\}
\frac{n^{-1}\|y-\widehat y\|^2}
{\{1-\operatorname{tr}(H)/n\}^2},
\]
where $y$ collects the site means and $\widehat y$ contains their
fitted values. The last fraction is ordinary GCV; the additional
factor penalizes sensitivity of the fitted values to the responses.
For KRR, $H$ is the response smoother matrix.
For MKL, $H$ is the local Jacobian of the fitted values with
respect to $y$, accounting for changes in the learned mixture
weights subject to their active constraints.
Both matrices include the contribution of the fitted mean.
Kernel systems are solved by dense factorization.

\subsubsection{Smooth NN.}\label{app:b-nn}

We fit a smooth neural network to the simulated responses and
differentiate the fitted network. To limit oscillations in the
fitted surface, training bounds the network's function values and
derivatives at numerical constraint points after standardizing the
inputs and responses. We specify the network architecture and these
constraints, then explain how the network is initialized and trained.
GCV selects the initialization parameters, and errors on held-out
responses determine how many training rounds to use.

\paragraph{Network architecture.}
The network architecture follows Appendix~\ref{app:nn-nonparametric}.
Smooth activation functions allow the fitted surface to be
differentiated.

The network has layer widths $[d,(4N_{\mathrm{NN}}+1)^d,2,1]$,
corresponding to the input layer, two hidden layers, and a
scalar output. We set
\[
s=\max\{3,\lfloor d/2\rfloor+1\},\qquad
N_{\mathrm{NN}}
=\max\{1,\lfloor n^{1/(2s+d)}\rfloor\},
\]
where $n$ is the number of training sites before the validation
split. Thus, $s=3$ for all dimensions in the option experiment
and for $d=2,4$ in the wireless experiment; $s=4$ for the
six-dimensional wireless experiment.
The parameter $N_{\mathrm{NN}}$ determines the first hidden-layer
width and is computed from $n$ and $d$, without tuning.
The same width is used during validation and refitting on all sites. The first hidden layer uses $\tanh(t)$, and the second
uses $20\tanh(t/20)$. The latter is smooth and nearly linear
near zero, which facilitates initialization from the
random-feature regression. The output is a linear combination of the second-layer outputs
plus an intercept, with no clipping.

\paragraph{Validation split and standardization.}
In each replication, we randomly reserve 20\% of the training sites
for validation and use the remaining 80\% to fit the network.
Both inputs and responses are standardized before training.
Each input coordinate is centered at the midpoint of its training
range and divided by the half-width of that range, mapping the
training box to $[-1,1]^d$. The simulated sample means at the
training sites are centered by their average and divided by
their standard deviation. These response statistics are computed from the fitting subset
during validation and from all sites for the final fit.

\paragraph{Constrained fitting and numerical checks.}
Let $\mathcal F_{\mathrm{NN}}$ denote the network class with the
architecture specified above. Training seeks to minimize the
squared error in standardized responses, subject to bounds on
the fitted function and its input derivatives:
\[
\begin{aligned}
\min_{f\in\mathcal F_{\mathrm{NN}}}\quad
&\frac{1}{|\mathcal I|}
  \sum_{i\in\mathcal I}\bigl(\widetilde y_i-f(u_i)\bigr)^2,\\
\text{subject to}\quad
&|f(u)|\le32,
&&u\in\mathcal C,\\
&|\partial_u^\nu f(u)|\le32,
&&u\in\mathcal C,\quad 1\le|\nu|\le s.
\end{aligned}
\]
Here $\mathcal I$ indexes the sites used for fitting,
$\widetilde y_i$ is the standardized simulated response at site
$i$, and $\mathcal C$ is the finite set of numerical constraint
points. During validation, $\mathcal I$ excludes the held-out
sites; for the final fit, it includes all training sites.

The objective measures how well the network fits the simulated
responses. The constraints limit the magnitude of the fitted
function and its derivatives. Here $\nu_j$ specifies the number
of derivatives taken in coordinate $j$, and
$|\nu|=\sum_j\nu_j$, so the constraints include all pure and
mixed partial derivatives through order $s$.
These derivatives are computed from the network itself;
no derivative observations enter either the objective or
the constraints.

In the option experiment, derivatives through order three
are constrained, although only Delta and diagonal Gamma
are reported. In the wireless experiment, the constraints
extend through order three for $d=2,4$ and through order four
for $d=6$, although only gradients are reported.
The higher-order constraints control the smoothness of the
fitted surface.

The bound of 32 is fixed across both experiments and applies
to the standardized function and its derivatives with respect
to standardized inputs. Predictions are returned to the original
response units by multiplying by the response standard deviation
and adding the response mean. Derivatives are multiplied by
the response standard deviation and divided by the appropriate
input scaling factors according to the chain rule.

The constraint set $\mathcal C$ is enlarged during training
when violations are detected. Numerical checks use 8192 Sobol
points \citep{sobol1967} in $[-1,1]^d$, all corners, and 128 points on each
boundary face; searches for further violations supply
additional constraint points.

\paragraph{Initialization.}
For initialization, we generate $\tanh$ features with random
weights and fit a linear combination of these features to the
responses by ridge regression.
Feature directions are sampled uniformly on the unit sphere,
and feature centers are sampled from the current fitting inputs. The candidate feature scales are
$\{0.5,1,2,4\}$. The scale multiplies each feature's linear
argument and controls how rapidly it varies with the inputs.
The feature matrix is centered and divided by the square root
of the first hidden-layer width. GCV selects the feature scale
and ridge shift using the standardized responses from the fitting
subset. The 15 candidate ridge shifts are
$10^{-9},10^{-8.5},\ldots,10^{-2}$ times the trace of the
resulting feature Gram matrix.

This regression fit provides the network's initial parameters.
We then refit the output layer subject to the function and
derivative bounds, obtaining the initial feasible network,
which we call stage zero.

\paragraph{Training, stopping, and refitting.}
Subsequent training updates the hidden-layer parameters while
refitting the output layer by constrained least squares.
Each training round allows up to 25 L-BFGS iterations \citep{liunocedal1989} for the
hidden-layer updates and checks the bounds at additional points;
detected violations are added to the constraints.
We save the resulting network after each of at most four rounds.
The initial feasible network is also retained as a candidate,
so response validation may select it without further
hidden-layer training.

We compare stage zero and the subsequent saved stages by their
mean-squared response error on the held-out sites, and select
the stage with the lowest error. We then repeat initialization
and fitting on all training sites, recomputing the response
standardization and using the selected number of rounds as the
training limit. Thus GCV selects the initialization parameters,
while held-out response validation selects how long to train.
The architecture rule, constraint bound, initialization grids,
and training limits are common to both experiments.

This is a numerical approximation to the theoretical constrained
estimator. Local optimization need not attain the global
least-squares minimum, and checks at finitely many points do
not certify the global Sobolev bound assumed in the theory.

\subsection{Parameter settings}\label{app:b-design}

All five methods use the training box $[50,150]^d$, which extends beyond
the test region $[75,125]^d$ and supplies observations on both sides
of its boundary. The global models use scrambled Halton designs
generated by SciPy's \texttt{scipy.stats.qmc.Halton}, with
\texttt{scramble=True}, following the randomization described by
\citet{owen2017halton}. These designs spread sites across the training
box. KR and LPR use Cartesian midpoint grids with $M$ points per
coordinate, $x_k=50+100(k-\tfrac12)/M$. Their denser designs provide
the nearby observations needed for local fitting; the global models
combine information across the entire training sample.
Table~\ref{tab:asian-site-allocation} gives the site counts.
Each budget is divided as evenly as possible among sites, using
$\lfloor B/n\rfloor$ or $\lceil B/n\rceil$ observations per site.
Thus the allocations balance spatial coverage against simulation noise
in each site mean.

All calculations use double precision, with deterministic GPU
settings for NN. The accompanying code provides the random
seeds and solver tolerances.

\begin{table}[tbhp]\centering\small
\caption{Training-site allocations for the option example. Global counts apply to KRR, MKL, and NN.}
\label{tab:asian-site-allocation}
\begin{tabular}{rrrrr}\toprule
$d$ & $B$ & Global & KR & LPR \\\midrule
1 & 50,000 & 50 & 834 & 834 \\
1 & 500,000 & 500 & 1077 & 1159 \\
1 & 5,000,000 & 1000 & 1392 & 1610 \\
2 & 50,000 & 50 & 324 & 289 \\
2 & 500,000 & 500 & 484 & 529 \\
2 & 5,000,000 & 1000 & 784 & 900 \\
3 & 50,000 & 50 & 1331 & 1331 \\
3 & 500,000 & 500 & 2744 & 2744 \\
3 & 5,000,000 & 1000 & 4913 & 6859 \\
4 & 50,000 & 50 & 2401 & 2401 \\
4 & 500,000 & 500 & 10000 & 10000 \\
4 & 5,000,000 & 1000 & 20736 & 28561 \\
\bottomrule\end{tabular}\end{table}

\section{Wireless Communication Network Example}
\label{app:wireless}

\subsection{Simulation Model Building}

We construct a semi-analytical approximation of selected mechanisms in
a wireless network simulator, used only to illustrate the black-box
setting in Section~\ref{subsec:numerical-wireless}.
The model combines distance-based path loss and shadowing, familiar
features of wireless propagation \citep{tse2005fundamentals}, with a
capped quadratic antenna pattern similar to that of
\citet{kifle2013potential}. The formulas below specify how one simulated
response is generated. The estimators receive only these responses;
the model is not calibrated to a particular operational network.

The service region is the square $\mathcal S=[-L,L]\times[-L,L]$. A user location
$U=(U_1,U_2)$ is sampled uniformly from this region, and two antennas
are located at opposite corners, $b_1=(-L,L)$ and $b_2=(L,-L)$.
Antenna $i$ has transmit power $p_i$, azimuth angle $\alpha_i$ (its horizontal orientation), and downtilt angle  $\beta_i$ (the downward direction
of its antenna beam). Thus the six design variables are
$\theta=(p_1,\alpha_1,\beta_1,p_2,\alpha_2,\beta_2)^\top$.
Transmit powers are specified in dBm, a logarithmic scale relative to
one milliwatt. Angles are reported in degrees below for readability,
but all calculations in the simulator use radians.

For a user at $U$, the horizontal range and three-dimensional distance
from antenna $i$ are
\[
 r_i(U)=\|U-b_i\|_2,
 \qquad d_i(U)=\sqrt{r_i(U)^2+h_i^2},
\]
where $h_i$ is the antenna height above the user plane.
The user's horizontal direction and downward elevation angle, viewed
from antenna $i$, are
\[
 \phi_i(U)=\operatorname{atan2}(U_2-b_{i2},U_1-b_{i1}),
 \qquad \psi_i(U)=\operatorname{atan2}(h_i,r_i(U)).
\]
The angular offsets from the antenna's pointing direction are
\[
 \Delta\phi_i=\operatorname{wrap}_{[-\pi,\pi)}
       \{\phi_i(U)-\alpha_i\},
 \qquad \Delta\psi_i=\psi_i(U)-\beta_i.
\]
Here $\operatorname{atan2}$ is the two-argument arctangent, which
retains the direction of the displacement. Wrapping expresses the
horizontal angular difference modulo $2\pi$ in $[-\pi,\pi)$.

The antenna sends a stronger signal toward users close to its pointing
direction. We model the resulting gain by
\[
 G_i(\theta,U)=G_{\max}
 -\min\left\{
  12\left(\frac{\Delta\phi_i}{\phi_{\rm3dB}}\right)^2
  +12\left(\frac{\Delta\psi_i}{\psi_{\rm3dB}}\right)^2,
  A_{\max}\right\}.
\]
Here $G_{\max}$ is the peak gain, $\phi_{\rm3dB}$ and $\psi_{\rm3dB}$
control the horizontal and vertical beam widths, and $A_{\max}$ limits
the attenuation away from the main beam. Signal strength also decreases
with distance. The path loss and received power are
\begin{align*}
 L_i(U)&=L_0+10\eta\log_{10}d_i(U),\\
 P_i^{\rm dBm}(\theta,\xi)&=p_i+G_i(\theta,U)-L_i(U)+Z_i,\\
 P_i(\theta,\xi)&=10^{P_i^{\rm dBm}(\theta,\xi)/10}
                  \quad\text{mW}.
\end{align*}
Distances are in metres; $L_0$ is the loss at one metre and $\eta$ is
the path-loss exponent. The independent zero-mean Gaussian variables
$Z_1,Z_2$, also independent of $U$, represent shadowing on the dB scale.
Gaussian shadowing on the dB scale corresponds to a lognormal
multiplicative effect on power. Thus each simulation draws
$\xi=(U,Z_1,Z_2)$ and computes the two received powers.

The user connects to the antenna with the stronger received signal;
the other antenna contributes interference. With background noise
power $\epsilon_0$, the signal-to-interference-plus-noise ratio and
response are
\begin{align*}
 \operatorname{SINR}(\theta,\xi)
 &=\frac{\max\{P_1(\theta,\xi),P_2(\theta,\xi)\}}
 {\epsilon_0+\min\{P_1(\theta,\xi),P_2(\theta,\xi)\}},
 \\
 F(\theta,\xi)&=\log\operatorname{SINR}(\theta,\xi),
\end{align*}
where $\log$ denotes the natural logarithm. Averaging these responses over user locations
and shadowing gives the performance surface
$J(\theta)=\mathbb E[F(\theta,\xi)]$ whose gradient is estimated in
Section~\ref{subsec:numerical-wireless}.

\subsection{Benchmark method}

For the central-FD benchmark in Section~\ref{subsec:numerical-wireless},
the perturbation half-steps are $1$ dB for power, $5^\circ$ for azimuth,
and $1^\circ$ for downtilt. The $10^5$ observations per query are divided as evenly as
possible among the $2d$ perturbed settings, with independent samples for positive and negative perturbations.
Each setting receives 25,000 observations in 2D, 12,500 in 4D, and either 8,333 or 8,334 in 6D. Reference gradients use one-quarter of these half-steps and $10^6$ observations per perturbed setting. Common random numbers within each reference pair reduce the
noise in its difference. These separately computed references are used
only for evaluation and remain subject to Monte Carlo and
finite-difference approximation error.

\subsection{LTD estimator}

We follow the five fitting procedures described in
Appendix~\ref{app:local-protocol}, with the following adjustments
for the wireless example. All five methods normalize each active
input to $[-1,1]$ over its physical training range, so powers and
angles enter the fits on comparable scales.
KRR and MKL use ordinary response GCV instead of robust GCV,
with all training sites used for both length-scale and
regularization selection. KR and LPR retain GCV selection on
at most 200 training sites and use all sites for the final fit.
Their bandwidth safeguards use $r_*=1$, since only first
derivatives are estimated.

MKL retains the three full Gaussian kernels, normalization
using derivative orders one through three, and the lower bound
of $1/6$ on each mixture weight.
All five methods learn from the full vector of active inputs
and simulated responses, without incorporating additional
knowledge of the simulator's internal structure into the
fitted models.

NN uses the architecture rule, response standardization,
GCV initialization, and response-validation stopping described
in Appendix~\ref{app:b-nn}. The constraint bound remains 32
in standardized units, with derivatives constrained through
order three for $d=2,4$ and through order four for $d=6$.
The finite-point constraints and local optimization have the
same limitations discussed there.
All estimated derivatives are converted back to the original
input and response units; angular derivatives are with respect
to radians.

\begin{table}[htbp]
\centering\small
\caption{Fixed parameters in the wireless simulator.}
\label{tab:wireless-fixed-parameters}
\begin{tabular}{lll}
\hline
Quantity & Symbol & Value \\
\hline
Region half-width & $L$ & $500\,\mathrm{m}$ \\
Antenna height & $h_1=h_2$ & $30\,\mathrm{m}$ \\
Peak antenna gain & $G_{\max}$ & $17\,\mathrm{dBi}$ \\
Horizontal 3-dB beamwidth & $\phi_{\rm 3dB}$ & $65^\circ$ \\
Vertical 3-dB beamwidth & $\psi_{\rm 3dB}$ & $10^\circ$ \\
Maximum pattern attenuation & $A_{\max}$ & $30\,\mathrm{dB}$ \\
Path loss at $1\,\mathrm{m}$ & $L_0$ & $38.5\,\mathrm{dB}$ \\
Path-loss exponent & $\eta$ & $3.2$ \\
Shadowing distribution & $Z_i$ & $N(0,6^2)\,\mathrm{dB}$ \\
Background noise & $10\log_{10}\epsilon_0$ & $-104\,\mathrm{dBm}$ \\
\hline
\end{tabular}
\end{table}

\subsection{Parameter settings}
\label{app:c-design}

Table~\ref{tab:wireless-fixed-parameters} lists the fixed simulation
parameters. The square has side length $1\,\mathrm{km}$, which is large
enough to produce substantial distance variation while retaining inexpensive
simulation. 
A $30\,\mathrm{m}$ antenna height and $43\,\mathrm{dBm}$ nominal
power are representative of a macro-cell-style toy example.  The $17\,\mathrm{dBi}$
peak gain, $65^\circ$ horizontal beamwidth, $10^\circ$ vertical beamwidth,
and $30\,\mathrm{dB}$ attenuation cap create smooth directional variation
near the main lobe without allowing unrealistic attenuation in the back
lobe.  The path-loss exponent $3.2$ represents an obstructed environment,
and $6\,\mathrm{dB}$ shadowing gives meaningful output noise.  Finally,
$-104\,\mathrm{dBm}$ background noise makes the example predominantly
interference-limited without removing the effect of transmit power.
Overall, these parameter choices define an
illustrative test problem.

The training ranges are
$p_1,p_2\in[38,48]$ dBm,
$\alpha_1\in[-75,-15]^\circ$,
$\alpha_2\in[105,165]^\circ$, and
$\beta_1,\beta_2\in[2,18]^\circ$.
At the centre of this box, the two antennas point along opposite
directions of the diagonal joining their locations.
In the lower-dimensional experiments, only the active coordinates vary;
the remaining coordinates are taken from
\(
 \theta_{\rm ref}=
 (40.5,-23.5^\circ,12^\circ,45.5,115.5^\circ,6.5^\circ)^\top.
\)
The two-dimensional problem varies the two powers, the four-dimensional
problem varies the four angles, and the six-dimensional problem varies
all coordinates.

KRR, MKL, and NN share centered Latin hypercube training sites
and simulated responses within each replication. KR and
LPR use scrambled Halton sites, which allow their site counts to increase
with budget without constructing a full tensor grid. All five methods
use the same physical training ranges.
Table~\ref{tab:wireless-site-allocation} gives the site counts, and each
simulation budget is divided as evenly as possible across sites.
The test ranges are inset by the FD half-steps:
$[39,47]$ dBm for power, $[-70,-20]^\circ$ and $[110,160]^\circ$ for
azimuth, and $[3,17]^\circ$ for downtilt. This keeps every FD perturbation
inside the training box. For KRR and MKL, the kernel length-scale and regularization
grids are those in Appendix~\ref{app:b-kernel}; tuning uses all training sites and the current replication's simulated responses, without additional pilot simulations or derivative
observations.

\begin{table}[htbp]
\centering\small
\caption{Training-site allocations for the wireless example. Global counts apply to KRR, MKL, and NN.}
\label{tab:wireless-site-allocation}
\begin{tabular}{rrrrr}\toprule
$d$ & $B$ & Global & KR & LPR \\\midrule
2 & 100,000 & 200 & 121 & 121 \\
2 & 1,000,000 & 200 & 192 & 216 \\
2 & 10,000,000 & 200 & 304 & 383 \\
4 & 100,000 & 1000 & 625 & 625 \\
4 & 1,000,000 & 1000 & 1347 & 1570 \\
4 & 10,000,000 & 1000 & 2901 & 3944 \\
6 & 100,000 & 1200 & 729 & 729 \\
6 & 1,000,000 & 1200 & 1956 & 2306 \\
6 & 10,000,000 & 1200 & 5247 & 7290 \\
\bottomrule\end{tabular}\end{table}

\end{document}